\newtheorem*{Theorem}{Theorem}
\newtheorem*{Definition}{Definition}
\newcommand{\be}{\begin{equation}}
\newcommand{\ee}{\end{equation}}
\newcommand{\ba}{\begin{eqnarray}}
\newcommand{\ea}{\end{eqnarray}}
\title{{\sf Canonical Quantum Gravity, Constructive QFT}\\
 {\sf and Renormalisation}} 
\author{
{\sf T. Thiemann}$^1$\thanks{{\sf 
thomas.thiemann@gravity.fau.de}}\\
\\
{\sf $^1$ Inst. for Quantum Gravity, FAU Erlangen -- N\"urnberg,}\\
{\sf Staudtstr. 7, 91058 Erlangen, Germany}\\
}
\date{{\small\sf \today}}
\begin{document} 

\maketitle

{\sf

\begin{abstract}
The canonical approach to quantum gravity has been put on a firm mathematical 
foundation in the recent decades. Even the quantum dynamics can be rigorously 
defined, however, due to the tremendously non-polynomial character of the 
gravitational interaction, the corresponding Wheeler-DeWitt operator valued 
distribution suffers from quantisation ambiguities that 
need to be fixed.

In a very recent series of works we have employed methods from constructive 
quantum 
field theory in order to address those ambiguities. Constructive QFT trades
quantum fields for random variables and measures thereby phrasing 
the theory in the language of quantum statistical physics. The connection to 
the canonical formulation is made via Osterwalder-Schrader reconstruction.
It is well known in quantum statistics that the corresponding ambiguities 
in measures can be fixed using 
renormalisation. The associated renormalisation flow can thus be used to 
define a canonical renormalisation programme.

The purpose of this article is to review and further develop these ideas and 
to put them into context with closely related earlier and parallel 
programmes.
\end{abstract}

\section{Introduction}
\label{s1}

The canonical approach to quantum gravity has been initialised long time ago
\cite{1}. However, the mathematical foundations of the theory remained veiled 
due to the tremendous non-linearity of the gravitational interaction. This 
has much changed with the reformulation of General Relativity as a Yang-Mills 
type gauge theory in terms of connection rather than metric variables 
\cite{1a} and
has culminated in a research programme now known as Loop Quantum Gravity 
(LQG) (see e.g. \cite{2} for monographs and recent reviews on the subject). 
The qualifier ``loop'' stems from the fact that for gauge theories of 
Yang-Mills type it 
has proved useful to formulate the theory in terms of holonomies of the 
connection along closed paths (loops) in order to maintain manifest 
gauge invariance. Such so-called (Wilson) loop variables are widely used for 
instance in (lattice) QCD \cite{3}.  

LQG has succeeded in providing a rigorous mathematical framework: The 
representation theory of the canonical commutation relations and the $^\ast$ 
relations has been studied and a unique representation has been singled out
\cite{4} that allows for a unitary representation of the spatial 
diffeomorphism group. Moreover, the generators of temporal diffeomorphisms,
sometimes referred to as Wheeler-DWitt operators, could be rigorously 
quantised on the corresponding Hilbert space \cite{5} and in contrast
to the perturbative approach to quantum gravity \cite{6}, no ultraviolet 
divergences were found. It should be emphasised, that this was achieved 
1. in the continuum, rather than on a lattice, i.e. there is no artificial 
cut-off left 
over, 2. for the physical Lorentzian signature rather than unphysical 
Euclidian one, and 3. non-perturbatively and background independently, that is,
one does not perturb around a classical background metric and then quantises 
the fluctuations which thus manifestly preserves the diffeomorphism 
covariance of all constructions. 

However, the theory is not yet completed: Due 
to the tremendously non-polynomial nature of the gravitational interaction,
the ususal factor ordering ambiguity in the quantisation of operator valued 
distributions which are non-linear in the fields is much more severe. Thus,
the operators defined in \cite{5} suffer from those ambiguities. Moreover,
the following problem arises: In the classical theory, the canonical generators
of spacetime diffeomorphisms (that is, their Hamiltonian vector fields) form 
a Lie algebroid (that is, a Lie algebra except that the structure 
constants are replaced by structure functions on the phase space) known as 
the hypersurface algebroid \cite{7}. The structure functions are themselves 
promoted to operator valued distributions upon quantisation, thus it becomes 
even harder to find a quantisation of those generators such that the algebroid 
is represented without anomalies than it would be for an honest Lie algebra.
Specifically, the commutator between two temporal diffeomorphism 
generators is supposed to 1. be proportional to a linear combination of spatial
diffeomorphism generators with operator valued distributions as coefficients
and 2. in an ordering, such that the following holds:
The image of any such commutator of a dense domain of vectors in 
the Hilbert space must be in the kernel of the space of spatially 
diffeomorphism invariant distributions on that domain. In \cite{9}
it was shown that both conditions 1. and 2. hold, however, the coefficients 
in that linear combination do not qualify as quantisations of their classical
counterpart. Thus while the quantisation of the hypersurface algebroid closes,
it does so with the wrong operator valued distributions as coefficients.

Thus, the status of LQG can be summarised as follows:\\
As compared to \cite{1} it is now possible to ask and answer precise 
questions about the mathematical consistency of the whole framework. 
As compared to the perturbative approach \cite{6} the framework does 
not suffer from ultraviolet divergences and one does not have to worry 
about the convergence of a perturbation series due to the manifestly 
non-perturbative definition of LQG. However, just as in the perturbative 
approach, one needs further input in order to draw predictions from the theory, 
although of a different kind:
In the perturbative approach, there are an infinite number of counter terms 
necessary due to non-perturbative non-renormalisability all of which come with 
coefficients that have to be measured
but one can argue that only a finite
number of them is of interest for processes involving energies not exceeding   
a certain threshhold (effective field theory point of view). 
In LQG there are in principle infinitely
many quantisation ordering prescriptions possible, each of which comes 
with definite coefficients in order to yield the correct naive continuum limit
but it is not clear which ordering to choose so that presently one resorts 
to the principle of least technical complexity.

Various proposals have been made in order to improve the situation. 
In \cite{10} one exploits the fact that classically one can always trade 
a set of first class constraints by a single weighted sum of their squares
(called the master constraint). Since a single constraint always closes with 
itself and the weights can be chosen such that the master constraint  
commutes with spatial diffeomorphisms, one can now focus on the 
quantisation ambiguities involved in the master constraint without having to 
worry about anomalies. In \cite{11} the case of General Relativity coupled 
to perfect fluid matter was considered which allows to solve the constraints 
before quantisation so that the remaining quantisation ambiguity now only 
rests in the corresponding physical Hamiltonian that drives the time evolution 
of the physical (that is, spacetime diffeomorpohism invariant)  
observables. In \cite{12} the constraints are quantised on a suitable space
of distributions with respect to a dense domain of the Hilbert space rather 
than the Hilbert space itself in order to find a representation of the
the hypersurface algebroid directly on that space of distributions which would 
at least partially fix the afore mentioned ordering ambiguity. 

It transpires that additional input is necessary in order to fix the 
quantisation ambiguity in the dynamics of LQG and thus 
to complete the definition of the theory.
This would also put additional faith into applications 
of LQG for instance to quantum cosmology \cite{12a} 
(where the amount 
of ambiguity is drastically reduced)
which are believed to be approximations of LQG by enabling to make the 
connection between LQG and those approximations precise including an error
control, see \cite{12b} for recent progress in that respect. In the recent 
proposal \cite{13} 
which we intend to review in this article, the authors were inspired 
by Wilson's observation \cite{13} that renormalisation methods help 
to identify among the principally infinitely many interaction terms in 
Hamiltonians relevant for condensed matter physics the finitely many 
relevant ones that need to be measured. This insight implies that a theory 
maybe perturbatively non-renomalisable but non-perturbatively renormalisable, 
also known as asymptotically safe \cite{14}. The asymptotic safety approach
to quantum gravity for Euclidian \cite{15} and Lorentzian signature 
\cite{15a} precisely rests on that idea and has received 
much attention recently. In fact, there is much in common between our proposal 
and asymptotically safe quantum gravity (especially for Lorentzian 
signature) and we will have the opportunity 
to spell out more precisely points of contact in the course of this article.
Also, there is a large body of work on renormalisation \cite{16} in the 
so-called spin foam 
approach \cite{17} and the related group field theory 
\cite{16a} and tensor model\footnote{In principle any field theory with a 
polynomial Lagrangian can be written as 
a (cloured) tensor model as follows: Pick any orthonormal basis wrt 
the measure appearing in the action, expand the field in that basis, call 
the expansion coefficients 
a coloured (by the spacetime or internal indices) tensor in an infinite 
dimensional $\ell_2$ space and call the integral over polynomials in those basis 
functions that appear in the action upon expanding the 
fields interaction terms of those tensors. 
If the basis carries 
labels in $\mathbb{N}^{nd}$ we obtain a coloured tensor model with tensors 
of rank $n$.} \cite{16b} approach to quantum gravity. 
The spin foam approach is loosely connected to 
LQG in the following sense: The states of the Hilbert space underlying LQG 
are labelled by collections of loops, that is, 3D graphs. A spin foam 
is an operator that maps such states excited on a graph to states excited 
on another graph. The operator depends on a specific class of 4D cell complex
(foam) such  
that its boundary 3D complex is dual to the union of the two graphs 
corresponding to the incoming and outgoing Hilbert spaces.   
The operator is supposed to form the rigging map \cite{18} of LQG, i.e. a 
generalised 
projector onto the joint kernel of the Wheeler-Dewitt constraints. We say 
that the connection is loose because the rigging nature of current 
spin foams in 4D is not confirmed yet. In any case, a spin foam operator 
can be formulated as a state sum model and thus renormalisation ideas 
apply. For applications of renormalisation group ideas in the cosmological
sector of LQG see \cite{18a}.

Most of the work on renormalisation is either within classical 
statistical physics (e.g. \cite{19})
or the Euclidian (also called constructive) approach to quantum 
field theory \cite{20}. In the Euclidian 
approach, the quantum field, which is an operator valued distribution 
on Minkowski space, is replaced by a distribution valued random variable 
on Euclidian space.
While the dynamics in the Minkowski theory is given by Heisenberg equations, 
in the Euclidian theory it is encoded in a measure on the space of random 
variables. We are then back in the realm of statistical physics because 
losely speaking the measure can be considered as a Gibbs factor for a 
Hamiltonian (sometimes called Euclidian action) in 4 spatial dimensions. 
How then should one use renormalisation
ideas for quantum gravity? Quantum gravity is not a quantum field theory 
on Minkowski space (unless one works in the perturbative regime, but then 
it is non renormalisable). Also, while the Minkowski and Euclidian signature 
of metrics 
are related by simple analytic rotation in time from the real to the 
imaginary axis, this does not even work for classical metrics with curvature, 
not to mention the quantum nature of the metric (in ordinary QFT, the metric is
just a non-dynamic background structure). One can of course start with 
Euclidian signature GR and try to build a measure theoretic framework, 
but then the relation to the Lorentzian signature theory is unclear. Moreover,
while as an Ansatz for the Euclidian signature measure we can take the 
exponential of the Euclidian Einstein-Hilbert action, that action is not 
bounded from below and thus the measure cannot be a probability measure
which is one of the assumptions of constructive QFT. Finally, in contrast 
to constructive QFT, in quantum gravity expectation values (operator language) 
or means (measure language) of basic operators (or random variables) such as 
the metric tensor have no direct physical meaning because coordinate 
transformations are considerd as gauge transformations, hence none of the 
basic fields correspond to observables. 

In our approach \cite{13} we will use the framework \cite{11}, that is, we 
do not consider vacuum GR but GR coupled to matter which acts as a dynamical 
reference field. This enables us 1. to solve 
the spatial diffeomorphism and Hamiltonian constraints classically, 2. to 
work directly on the physical Hilbert space (i.e. the generalised
kernel of all constraints equipped with the inner product induced by the 
rigged Hilbert space structure), 3. to have at our disposal immediately
the gauge invariant degrees of freedom such that the physical Hilbert 
space is the representation space of a $^\ast$ representation of 
those observables and 4. to be equipped with a physical Hamiltonian that 
drives the physical time evolution of those observables. Concretely and out 
of mathematical convenience 
we use the perfect fluid matter suggested in \cite{21} but for what follows 
these details are not important. Important is only that it is possible
to rephrase GR coupled to matter as a conservative Hamiltonian system
and that all the machinery that was developed for LQG can be imported. 
Now the quantisation ambiguity rests of course in the physical Hamiltonian 
and it is that object and its renormalisation on which we focus our attention.

As we just explained, we can bring GR coupled to matter somewhat closer 
to the usual setting of ordinary QFT or statistical physics but still we 
cannot apply the usual path integral renormalisation scheme because we work
in the canonical (or Hamiltonian) framework. The idea is then to make use 
of Feynman-Kac-Trotter-Wiener like ideas in order to generate a Wiener measure 
theoretic framework from the Hamiltonian setting and vice versa to use 
Osterwalder-Schrader reconstruction to map the measure theoretic (or path 
integral) frameork to the Hamiltonian one. This way we can map between the 
two frameworks and thus import path integral renormalisation techniques 
into the Hamiltonian framework which are strictly equivalent to those employed
in path integral renormalisation. In order that this works one needs to check 
of course that the Wiener measure constructed obeys at least a minimal subset 
\cite{22} of 
Osterwalder-Schrader axioms \cite{23} in order for the reconstruction to be 
applicable, most importantly reflection positivity.

This was one of the goals of \cite{13}, namely to define a renormalisation 
group flow directly within the Hamiltonian setting with strict equivalence to
the path integral flow. Specifically the flow is a flow of Osterwalder-Schrader
triples $({\cal H}, H, \Omega)$ consisting of a Hilbert space $\cal H$ a 
self-adjoint Hamiltonian $H$ thereon bounded from below and a vacuum vector 
$\Omega\in {\cal H}$ annihilated by $H$. While physically well motivated, 
of course, one does not need to do this. Indeed, renormalisation 
techniques for Hamiltonians and vacua directly within the Hamiltonian setting
were invented before and we devote the next section for putting our framework
into context with schemes closely related to ours. The fact that
we have a precise relation between Hamiltonian and path integral 
renormalisation makes it possible to bring Hamiltonian formulations 
of Quantum Gravity such as LQG and path integral formulations such as 
asymptotically safe Quantum Gravity, into closer contact.\\
\\
The architecture of this article is as follows:\\
\\
In the second section we give an incomplete overview over and 
sketch Hamiltonian renormalisation frameworks 
closely related to ours and point out differences and similarities.

In the third section we review how classical General Relativity coupled to 
suitable matter can be brought into the form of a conservative Hamiltonian
system as well as the LQG quantisation thereof. The necessity to 
remove quantisation ambiguities will be highlighted.

In the fourth section we recall some background material on constructive 
QFT, the Feynman-Kac-Trotter-Wiener construction as well as Osterwalder-Schrader
reconstruction.

In the fifth section we derive the natural relation between families of 
cylindrically defined measures, coarse graining, renormalisation group flows
and their fixed points. We then use Osterwalder-Schrader reconstruction 
to map the flow into the Hamiltonian framework. This section 
contains new material as compared to \cite{13} in the sense that we 
1. develop some systematics in the choice of coarse graining maps that
are motivated by naturally available structures in the classical theory,
2. clarify the importance of the choice of random variable or stochastic 
process when performing OS reconstruction and 3. improve the derivation 
of the Hamiltonian renormalisation flow by adding the uniqueness of the 
vacuum as an additional assumption (also made in the OS framework of Euclidian
QFT \cite{20}) as well as some machinery concerning degenerate contraction 
semigroups and associated Kato-Trotter formulae.

In the sixth section we mention concrete points of contact between 
the scheme developed here and others in the conext of density matrix, 
entanglement and projective renormalisation.  

In the seventh section we sketch a relation between Hamiltonian 
renormalisation via Osterwalder Schrader reconstruction and the functional 
renormalisatiin group which is the underlying technique of the asymptotic 
safety programme.  

In the eighth section we summarise, spell out implications of the 
renormalisation programme for the anomaly free implementation of the 
hypersurface algebroid  
and outline the next steps when trying 
to apply the framework to interacting QFT and finally canonical 
quantum gravity such as LQG.

In appendix A we prove some properties for a coarse graining scheme 
appropriate for non Abelian gauge theories, in appendix B we prove a 
lemma on the existence of certain Abelian $C^\ast-$algebras needed for
the construction of stochastic processes during OS reconstruction, in 
appendix C we collect some renormalisation terminology for readers 
more familiar with actions rather than measures, in appendix D we 
give a proof for the Kato-Trotter product formula for semi-groups and 
projections in the simple case that the semi-group has a bounded generator
and in appendix E we prove a strong limit identity between 
projections needed in section \ref{s5.3}.

\section{Overview over related Hamiltonian renormalisation schemes}
\label{s2}

Purpose of this section is not to give a 
complete scan of the vast 
literature on the subject of Hamiltonian renormalisation
but just to give an overview over those programmes that we believe are closest 
to ours. Also we leave out many finer details as we just want to sketch 
their relation to our framework in broad terms. In sections \ref{s7}, \ref{s8}
we will give a few more details on the connection of our approach with 
the density matrix and functional renormalisation group.\\
\\
The starting point are of course the seminal papers by Kadanoff 
\cite{24} and Wilson \cite{25}.  Kadanoff introduced the concept of a 
block spin transformation in statistical physics, i.e. a coarse graining 
transformation in {\it real space} (namely on the location of the spin degrees
of freedom on the lattice) rather than in some more abstract space (e.g. 
momentum space blocking/suppressing as used e.g. in the asymptotically safe 
quantum 
gravity approach). 
This kind of real space coarse graining map is widely used 
not only in statistical physics but also in the path 
integral approach to QFT as for instance in lattice QCD \cite{26}. 
On the other hand, Wilson introduced the concept of {\it Hamiltonian 
diagonalisation} to solve the Kondo problem (the low temperature 
behaviour of the electrical resistance in metals with impurities). 
This defines a 
renormalisation group flow directly on the space of Hamiltonians and its lowest 
lying energy eigenstates. More precisely, one considers a family of Hamiltonians 
labelled by an integer valued cut-off on the momentum mode label of the 
electron anhilation 
and creation operators. The renormalisation group flow is defined by 
diagonalising the Hamiltonian given by a certain cut-off label, and to 
use the eigenstates so computed to construct the matrix elements of the 
Hamiltonian at the next cut-off label. To make this practical, Wilson 
considered a {\it truncation}, at each renormalisation step, of the full energy 
spectrum to the $10^{3}$ {\it lowest lying energy levels} which was sufficient
for the low temperature Kondo problem. This is in fact nothing but the
concrete application ot the  Rayleigh-Ritz method.
The concept of truncation plays 
an important role also in most other renormalisation schemes, as otherwise 
the calculations become unmanagable. 

The next step was done by Wegner \cite{27} as well as Glazek and Wilson
\cite{28}
which can be considered as a generalisation of the Hamiltonian methods of 
\cite{25}. It could be called {\it perturbative 
Hamiltonian block diagonalisation} and was 
applied in QFT already (e.g. \cite{29} and references therein). Roughly 
speaking, one introduces a momentum cut-off on the modes of the annihilation 
and creation operators involved in the free part of the Hamiltonian, then 
perturbatively (with respect to the coupling constant)
constructs unitarities which at least {\it block diagonalise}
that Hamiltonian with respect to a basis defined by modes that lie below half 
the cut-off and and those that lie between half and the full cut-off, and then 
projects the Hamiltonian onto the Hilbert space defined by the modes below 
half of the cut-off to define a new Hamiltonian at half the cut-off. 
This can be done for each value of the cut-off and thus defines a flow of 
Hamiltonians (and vacua defined as their ground states). Another 
branch of work closely related to this is the projective programme 
due to Kijowski \cite{29a}. Here a flow of Hamiltonians on Hilbert spaces 
for different resolutions is given by the partial traces of the 
corresponding density matrices given by minus their exponential 
(Gibbs factors - assuming that hese are trace class). See also 
\cite{29b,29c,29d} for more recent 
work on renormalisation building on this programme.    

In these developments the spectrum of the Hamiltonian was directly used to 
define the flow. Another proposal was made by White \cite{30} who defined
the {\it density matrix renormalisation group}. This is a real space 
renormalisation group flow which considers the reduced density matrix 
corresponding to the tensor product split of a vector (e.g. the ground state 
of a Hamiltonian) of the total Hilbert space into     
two factors corresponding to a block and the rest (or at least a much larger
``superblock''). This density matrix is diagonalised and then the Hilbert 
space is {\it truncated} by keeping only a certain fixed number of 
{\it highest lying eigenvalues} of the reduced density matrix. Finally 
the Hamiltonian corresponding to the block is projected and then 
the resulting structure is considered as the new structure on the coarser 
lattice resulting from collapsing the blocks to new vertices (we are skipping
here some finer details). This method thus makes use of entanglement ideas 
since the reduced density matrix defines the 
degree of entanglement via its von Neumann entropy.

A variant of this is the {\it tensor renormalisation 
group approach} due to Levin and Nave \cite{31}. It is based on the fact that 
each vector in a finite tensor product of {\it finite dimensional Hilbert 
spaces} can 
be written as a {\it matrix product state}, i.e. the coefficients of the 
vector with respect to the tensor product base can be written as a trace 
of a product of matrices of which there are in general as many as the 
dimensionality of the Hilbert space. One now performs a real space 
renormalisation scheme directly
in terms of those matrices which are considered to be located on a lattice
with as many vertices as tensor product factors. Importantly, this work 
connects renormalisation to the powerful numerical machinery of tensor 
networks \cite{32}.       
 
Finally, as observed by Vidal \cite{33} and Evenbly and Vidal \cite{34}, one 
can improve \cite{30,31} by building in an additional unitary disentanglement
step into the tensor network renormalisation scheme. This is quite natural 
because a tensor network can also be considered as a quantum circuit with
the truncation steps involved considered as isometries but a quantum circuit 
in quantum computing \cite{35} consists of a network of unitary gates some of 
which 
have a disentangling nature depending on the state that they act upon.
The resulting scheme is called multi-scale entanglement renormalisation ansatz
(MERA). 

As this brief and incomplete discussion reveals, there are numerous proposals
in the literature for how to renormalise quantum systems. They crucially 
differ from each other in the choice of the coarse graining map. There are 
various aspects that discriminate between these maps, such as:\\
1. Real space versus other labels\\
The degrees of freedom to be coarse grained are labelled by points in spacetime
or else (momentum, energy, ...).\\
2. Kinematic versus dynamical\\
Real space block spin transformations are an example of a kinematic coarse 
graining, i.e. the form of the action, a Hamiltonian, its vacuum vector,  
its associated reduced density matrix and the corresponding degree of 
entanglement do not play any role. By contrast, Hamiltonian block 
diagonalisation, density matrix and entanglement renormalisation take such 
dynamical information into account.\\
3. Truncated versus exact\\
In principle any renormalisation scheme can be performed exactly, e.g. 
in real space path integral renormalisation one can just integrate the 
excess degrees of freedom that live on the finer lattice but not on the 
coarser thus obtaining the measure (or effective action) on the coarser 
lattice from that of the finer one. The same is true e.g. for the procedure 
followed in asymptotically safe quantum gravity.
However, in practice this may quickly
become unmanagable and thus one resorts to approximation methods e.g. 
by truncation in the space of coupling constants, energy eigenstates or 
reduced density matrix eigenstates.\\
\\
For the newcomer to the subject, this plethora of suggestions may appear 
confusing. Which choice of coarse graining is preferred? Do different choices 
lead to equivalent physics? What can be said about the convergence of various 
schemes and what is the meaning of the fixed point(s) if it (they) exist(s)? 
The physical intuition is that different schemes should give equivalent 
results if 1. the corresponding fixed point conditions capture necessary 
and sufficient properties that the theory should have in order to 
qualify as a continuum theory and 2. {\it when 
performed exactly}. The first condition is obvious, we start from what we 
believe to be an initial guess for how the theory looks at different 
resolutions and then formulate a coarse graining 
flow whose fixed points are such that 
they qualify to define a continuum theory. The seond condition entails that
the coarse graining maps just differ in the separation of the total set 
of degrees of 
freedom into subsets corresponding to coarse and fine resolution, hence 
corresponds to choices of 
coordinate systems which of course can be translated into each other. However,
when truncations come into play, this equivalence is lost because different 
schemes truncate different sets of degrees of freedom which are generically 
no longer in bijection. It is conceivable therefore that dynamically driven 
truncation schemes perform better at identifying the correct fixed point 
structure of the theory in the sense that they may converge faster and are 
less vulnerable to truncation errors or automatically 
pick the truncation of  {\it irrelevant} couplings. This seems 
to be confirmed in spin system examples but we are not aware of a general 
proof. Recently, the importance of the kinematic versus dynamic issue 
has also been emphasised for the LQG and spin foam approach \cite{36}.

In our work we currently are not concerned with issues of computationability,
that is, we consider an exact scheme. Next, as far as the coarse graining 
map is concerned, we currently favour a kinematic scheme. The reason for doing 
this is that kinematic schemes are {\it naturally suggested by measure 
theoretic questions}. Namely, measures on spaces of infinitely many degrees of
freedom are never of the type of the exponential of some action times a 
normalisation constant times Lebesgue measure. Neither of these three 
ingredients is well defined. What is well defined are integrals  
of certain probe functions of the field with respect to that measure. 
These probe functions in 
turn are naturally chosen to depend on test functions that one integrates
the field against. Thus these test functions provide a natural notion 
of resolution, discretisation and coarse graining. By integrating the 
measure against 
probe functions one obtains a family of measures labelled by the test 
functions involved. The relation between test functions at different resolution
induces a corresponding relation between members of the family of measures
which {\it must hold exactly} for a true measure of the continuum QFT. 
In turn, such consistency relations called {\it cylindrical consistency} 
can be used to {\it define} a measure on a space of infinitely many 
degrees of freedom \cite{37}, called a {\it projective limit}. 
The idea is then to formulate
measure renormalisation in such a way that its fixed points solve the 
consistency relations. This approach has been advocated in \cite{38} for 
Euclidian Yang-Mills theory and in \cite{39} for spin foams. Note that 
spin foams strictly speaking do not construct measures but rather are supposed 
to construct a rigging map so that Hamiltonian methods come also into 
play. Indeed, in \cite{36} it was shown that 
the cylindrically consistent coarse graining of the rigging map and its 
underlying spacetime lattice, 
thought of as an 
anti-linear functional on the kinematical Hilbert space,
induces a coarse 
graining of the spatial lattice on its boundary and
thus the Hilbert space thereon, equipping it with a system of 
consistent embeddings, a structure similar to
inductive limits of Hilbert spaces (an inductive structure requires
in addition the 
injections to be isometric). That latter 
structure underlies the kinematical Hilbert space
of LQG and a renormalisation procedure based on inductive limits was already
proposed in \cite{40} due to the similarity of LQG to
lattice gauge theory. \\
\\
Another reason for why picking real space coarse 
graining schemes as compared to say, momentum space based ones is their 
background independence which especially important for quantum gravity. 
In our work, as we consider the version of LQG in which 
the constraints already have been solved, we will work with probability 
measures. As we will see, the connection between inductive limits of 
Hilbert spaces and projective limits of path integral measures can be made 
crystal clear in this case. The price we pay by using an exact, kinematical 
scheme is that the fixed point (or renormalised) 
Hamiltonian becomes spatially non-local at finite resolution.
However, in the free QFT examples studied \cite{13}, which are spatially 
local in the continuum, by blocking the known
fixed point theory from the 
continuum one can see that this is natural and {\it must happen} 
for such schemes,
hence it is not a reason for concern but in fact physical reality. The 
degree of spatial non-locality in fact decreases as we increase the resolution
scale. 

When applying the framework to interacting QFT one will have to resort to 
some kind of approximation scheme and possibly tools from entanglement 
renormalisation combined with tensor network techniques may prove useful.
However, note that QFT of bosonic fields (gravity is an example)
deals with infinite dimensional 
Hilbert spaces even when the theory depends only on a finite number of degrees 
of freedom say by discretising it on a lattice and confining 
it to finite volume. Thus, to apply tensor network techniques which, to the 
best
of our knowledge, require the factors in the tensor product to be finite 
dimensional Hilbert spaces, one would have to 
cut off the dimensions of those Hilbert spaces right from the beginning, i.e.
one would have to work with {\it three} cut-offs rather than two
(see e.g. \cite{41} where 
quantum group representations are used in gauge theories rather than classical 
group representations and performs real space renormalisation 
or \cite{41a} where one combines both the UV and the 
dimension cut-off into one by turning the dimension of tensor spaces 
in tensor models into a finite coarse graining 
parameter and otherwise performs the 
asymptotic safety programme which is often formulated in the presence 
of a cut-off anyway). 

Some sort of truncation or approximation has 
to be made in practice when treating complex systems numerically.  
The physical insight behind the tensor network
and density matrix/entanglement renormalisation  
developments, namely that the dynamically interesting vectors in a Hilbert 
space appear to lie in a ``tiny'' subspace thereof is presumably a profound one 
and the truncation of the Hilbert space to the corresponding subspaces appears
to be well
motivated by the model (spin) systems studied so far. Still, what one would 
like to have is some sort of error control 
or convergence criteria on those truncations. We appreciate that this is a 
hard task for the future. For the time being we phrase 
our framework without incorporating a cut-off on the  
dimension of Hilbert spaces as we are not yet concerned with numerical 
investigations, however, we may have to use some of these ideas 
in the future.

\section{Canonical Quantum Gravity coupled to reference matter}
\label{s3}

The physical idea is quite simple and goes back to \cite{42}: General 
Relativity is a gauge theory, the 
gauge group being the spacetime diffeomorphism group. Thus the basic tensor 
and spinor fields in terms of which one writes the Einstein-Hilbert action and 
the action of the standard model coupled to the metric (or its tetrad) are 
not observable. However, the value of, say a scalar field $\Phi$ at that  
spacetime point $X_y$, at which four reference scalar fields 
$\phi^0,..., \phi^3$ take values $y^0,..,y^3$, that is, 
$\Phi(X_y);\;\;\phi^\mu(X_y)=y^\mu$ is spacetime diffeomorphism invariant.
For this to work, the relation $\phi^\mu(X)=y^\mu$ must of course 
be invertible, in particular the reference scalar fields must not vanish 
anywhere or anytime. This seems to be a property of dark matter \cite{49}.

These kind of relational observables have been further 
developed by various authors, see in particular \cite{43,44}. When one 
couples General Relativity and such reference matter preserving general 
covariance, it becomes possible to formulate the theory in a manifestly 
gauge invariant way. The form of that gauge invariant formulation of course 
strongly depends on the type of reference matter used and its Lagrangian.
In what follows we use the concrete model \cite{11} out of mathematical 
convenience, but we emphasise that the same technique works in a fairly
general context. In the next subsection, that model will be introduced and 
the classical gauge invariant formulation will be derived. After that we 
quantise it using LQG methods which will be introduced in tandem. 

\subsection{Gaussian dust model}
\label{s3.1}     
   
The Lagrangian of the theory takes the form
\be \label{3.1}
L=L_{EH}+L_{SM}+L_D
\ee
where $L_{EH}$ is the Einstein-Hilbert Lagrangian, $L_{SM}$ the standard model 
Lagrangian coupled to GR via the metric, its tetrad or its spin connection
and $L_D$ is the Gaussian dust Lagrangian \cite{21}
\be \label{3.2}
L_D=-\frac{1}{2}\sqrt{|\det(g)|}\;\{ g^{\mu\nu} \;[\rho\; 
(\nabla_\mu T)\;(\nabla_\nu T)+2(\nabla_\mu T)(W_j \nabla_\nu S^j)]+\rho\}
\ee
where $g$ is the Lorentzian signature metric tensor, $\nabla_\mu$ its 
Levi-Civita covariant differential, 
$\phi^0:=T,\; \phi^j:=S^j;\; j=1,2,3$ are the reference scalar fields 
introduced above and $\rho, W_j$ are additional four scalar fields. The latter
four fields appear without derivatives and thus give rise to primary 
constraints in addition to those present even in vacuum GR. One easily show 
that the contribution of $L_D$ to the energy momentum tensor is of perfect 
fluid type. Further physical properties and motivations are discussed in 
\cite{21}. For what follows it suffices to know that the equations of motion
for $T,S^j$ say that $\nabla_\mu T$ is a timelike 
geodesic co-tangent and that $S^j$ 
is constant along the geodesic spray. Thus, those geodesics can be 
interpreted as worldlines of dynamically coupled test observers.

The full constraint analysis of (\ref{3.1}) is carried out in \cite{11}. There 
are secondary constraints and the full set of constraints contains those of 
first and second class (see \cite{44} for a modern treatment of Dirac's 
algorithm \cite{45}). One has to introduce a Dirac bracket and solves the second
class constraints in the course of which the variables $\rho, W_j$ are 
eliminated. The remaining constraints are then of first class and read
\be \label{3.3}
C^{{\rm tot}}=C+\frac{P-q^{ab} T_{,a} C_b}{\sqrt{1+q^{ab} T_{,a} T_{,b}}},
\;\;
C_a^{\rm tot}=C_a+P T_{,a}+ P_j S^j_{,a}
\ee
Here $C$ is the Wheeler-DeWitt constraint function (including standard matter), 
$C_a,\; a=1,2,3$ are the spatial diffeomorphism functions (including standard
matter). The Dirac bracket
reduces to the Poisson bracket on all the variables involved in (\ref{3.3})
and $P,P_j$ are the momenta conjugate to $T, S^j$, e.g.
$\{P(x),T(y)\}=\delta(x,y)$. 
Here $a=1,2,3$ are tensorial indices on the spatial hypersurface $\sigma$ 
of the 
Arnowitt-Deser-Misner foliation underlying the Hamiltonian formulation of GR
\cite{46} with intrinsic metric tensor $q_{ab}$. 
For the moment it is just necessary to know that $C,C_a$ do not 
involve the variables $T, P, S^j, P_j$.

The constraints (\ref{3.3}) encode the spacetime diffeomorphism gauge symmetry
in Hamiltonian form, in particular they represent the hypersurface deformation 
algebra \cite{7}. It is possible to solve these remaining constraints, to
determine the complete set of gauge invariant (so called Dirac) observables 
and to determine the physical Hamiltonian $H$ that drives their physical 
time evolution \cite{11}. Equivalently, we may gauge fix (\ref{3.3}). 
The above interpretation of $T, S^j$ suggest to use the gauge conditions 
$G=T-t, G^a=\delta^a_j S^j-x^a$. The stabilisation of these gauge conditions 
fixes the Lagrange multipliers $\lambda, \lambda^a$ in the gauge generator 
\be \label{3.4}   
K:=C^{{\rm tot}}(\lambda, \vec{\lambda}):=\int_\sigma\; d^3x\;
[\lambda C^{{\rm tot}}+\lambda^a C_a^{{\rm tot}}]
\ee
namely
\ba \label{3.5}
\dot{G}(t,x)&=&\{K,G(x)\}+\partial_t G(t,x)=
\frac{\lambda(x)}{\sqrt{1+q^{ab} T_{,a} T_{,b}}}+\lambda^a T_{,a}-1=0,\;\;
\nonumber\\
\dot{G}^a(t,x) &=&\{K,G^a(x)\}+\partial_t G^a(t,x)=\lambda^b S^j_{,b} 
\delta^a_j=0
\ea
which when evaluated at $G=G^a=0$ yields the unique solution $\lambda=1, 
\lambda^a=0$. Likewise, in this gauge the constraints can be uniqely solved 
for $P=-C,\; P_j=-\delta_j^a C_a$ while $T, S^j$ are pure gauge. This shows 
that the physical degrees of freedom are those not involving $T,P,S^j, P_j$.

For any function $F$ independent of these variables the reduced or physical 
Hamiltonian is that function on the phase space coordinatised by the physical
degrees of freedom which generates the same time evolution as $K$ when 
the constraints, gauge conditions and stabilising Lagrange multipliers 
are installed
\be \label{3.6}
\{H,F\}:=
\{K,F\}_{C^{{\rm tot}}=\vec{C}^{{\rm tot}}=G=\vec{G}=\lambda-1=\vec{\lambda}=0}
=\{\int_\sigma d^3x C, F\}
\ee
which shows that
\be \label{3.7}
H=\int_\sigma d^3x C
\ee
Thus the final picture is remarkably simple: The physical phase space is 
simply coordinatised by all metric and standard matter degrees of freedom 
(and their conjugate momenta) while the physical Hamiltonian is just the 
integral of the usual Wheeler-DeWitt constraint. The influence of the 
reference matter now only reveals itself in the fact that $H$ is not  
constrained to vanish as it only involves the geometry and standard matter
contribution $C$ of $C^{{\rm tot}}$ 
and that the number of physical degrees of freedom 
has increased by four as compared to the system without reference matter.
This phenomenon is of course well known from the electroweak interaction: 
One can solve the three isospin SU(2) Gaus constraints for three of the four 
degrees of freedom sitting in the complex valued 
Higgs isodublett, leaving a single scalar 
Higgs field and three massive rather than massless vector bosons. See 
\cite{47} for further discussion.\\   
\\
We close this subsection with three remarks:\\ 
First,
a complete discussion requires to 
show that the gauge cut $G=G^a=0$ on the constraint surface of the phase space
be reachable from anywhere on the constraint surface. As (\ref{3.5}) shows,
this requires that $S^j_{,a}$ be invertible. We thus impose this as an 
anholonomic constraint on the total phase space. One easily verifies from
(\ref{3.5}) that this condition is gauge invariant, i.e. compatible with 
the dynamics.\\
Second, the simplicity of the final picture is due to the 
particular choice of reference matter. Other reference matter most
likely will increase the complexity, see e.g. \cite{48} which produces a
square root Hamiltonian! One may argue that the dust is a form of cold dark 
matter \cite{49}, but it is unclear whether this is physically viable.
Nevertheless, the present model serves as a proof
of principle, namely that GR coupled to standard matter and reference 
can be cast into 
the form of a conservative Hamiltonian system.\\
Third, it should be appreciated that the reference matter helps us to 
accomplish a huge step in the quantum gravity programme: It frees us from
quantising and solving the constraints, construct the physical inner 
product, the gauge invariant observables and their physical time evolution.
All of these steps are of tremendous technical difficulty \cite{2}. 
All we are left to do is to quantise the physical degrees of freedom 
and the physical Hamiltonian.

\subsection{Loop Quantum Gravity quantisation of the reduced physical system}
\label{s3.2}

In order to keep the technical complexity to a minimum, we consider just the 
contribution to $H$ coming from the gravitational degrees of freedom, see 
\cite{5,2} for more details on standard matter coupling. The Hamiltonian 
directly written in terms of $SU(2)$ gauge theory variables reads (we drop 
some numerical coefficients that are not important for our discussion)
\ba \label{3.8}
H &=& H_E+H_L
\nonumber\\
H_E &=& \int_\sigma\; {\rm Tr} (F\wedge \{V,A\})
\nonumber\\
V &=& \int_\sigma \; \sqrt{|\det(E)|}
\nonumber\\
H_L &=& \int_\sigma\; {\rm Tr} (\{\{H_E,V\},A\}\wedge \{\{H_E,V\},A\}
\wedge \{V,A\})
\ea
Here $A$ is an SU(2) connection and $E$ an SU(2) non-Abelian electric field 
that one would encounter also in a SU(2) Yang-Mills theory. However, the 
geometric interpretation of $A,E$ is different: Namely, 
$e^a_j:=E^a_j/\sqrt{|\det(E)|}$ is a triad, that is, 
$q^{ab}=\delta^{jk} e^a_j e^b_k$ is the inverse spatial metric. 
Here, as before, $a,b,c,..=1,2,3$ denote spatial tensor indices while 
now $j,k,l,..=1,2,3$ denote su(2) Lie algebra indices.
Further, let 
$\Gamma_a^j$ be the spin connection of $e^a_j$. Then 
$K_{ab}:=(A_a^j-\Gamma_a^j)e^k_b \delta_{jk}$ has the meaning of the extrinsic 
curvature of the ADM slices \cite{46} on the kernel of the SU(2) Gauss 
constraint
\be \label{3.9}
C_j:=\partial_a E^a_j+\epsilon_{jkl} A_a^k E^a_m \delta^m_l
\ee
The important quantity $V$ is recognised as the total volume of the 
hypersurface $\sigma$ and $H_E, H_L$ are known as the Euclidian and Lorentzian 
contributions to $H$. See \cite{5} for further details. The Poisson brackets 
displayed are with respect to the standard symplectic structure
\be \label{3.10}
\{A_a^j(x), A_b^k(y)\}=     
\{E^a_j(x), E^b_k(y)\}=     
\{E^a_j(x), A_b^k(y)\}-\kappa \delta^a_b \delta_j^k \delta(x,y)=0
\ee
where $\hbar\kappa=\ell_P^2$ is the Planck area. The definition of the phase 
space is completed by the statement that the elementary fields $A, E$ are 
real valued
\be \label{3.11}
[A_a^j(x)]^\ast-A_a^j(x)=       
[E^a_j(x)]^\ast-E^a_j(x)=0
\ee
The traces involved in \ref{3.8} are carried out by introducing the Lie algebra 
valued 1-forms $A=A_a^j\; \tau_j\; dx^a$ where $2i\tau_j$ are the Pauli 
matrices and $F=2(dA+A \wedge A)$ is the curvature of $A$. The non polynomiality
of GR is hidden in the Poisson brackets that appear in (\ref{3.8}). The reason 
for why we use these particular Poisson bracket structure
will become clear only later.

To quantise the theory we start from functions on the phase space that are 
usually employed in lattice gauge theory, see e.g. \cite{50}, namely 
non-Abelian magnetic 
holonomy and electric flux variables
\be \label{3.12}
A(c):={\cal P}\exp(\int_c A),\;\;
E_f(S)=\int_S {\rm Tr}(f \ast E)
\ee
where $\cal P$ denotes path ordering, $c$ is a piecewise analytic real curve,
$S$ a piecewise real analytic surface, $f$ an su(2) valued function and 
$\ast E=\epsilon_{abc} E^a dx^b\wedge dx^c/2$ the pseudo 2-form corresponding 
to the su(2) valued vector density $E$. Note that $A(c)$ is SU(2) valued while 
$E_f(S)$ is su(2) valued 
\be \label{3.13}
A(c)^\ast =(A(c)^{-1})^T=A(c^{-1})^T,\;\;E_f(S)^\ast=-E_f(S)^T
\ee
where $c^{-1}$ is the same curve as $c$ but with the opposite orientation.
The simplest non-trivial Poisson brackets are 
\be \label{3.14}
\{E_f(S),A(c)\}=\kappa\; A(c_1)\; f(c\cap S)\; A(c_2)
\ee
in case that $S\cap c$ is a single point in the interior of both $c,S$, see 
\cite{4} for a complete discussion. The relations (\ref{3.13}), (\ref{3.14}) 
are the defining relations of a non-commutative abstract
$^\ast-$algebra $\mathfrak{A}$ generated by fluxes and  
complex valued smooth functions $F$ of a finite number of holonomy variables 
\cite{4}.
It is the free algebra generated by them and divided by the two sided ideal 
generated by the canonical commutation relations 
$E_f(S) A(c)-A(c) E_f(S)=i \hbar \{E_f(S), F\}$ and the adjointness
relations (\ref{3.13}). See \cite{4} for more details.   

Interestingly, the physical Hamiltonian $H$ has a large symmetry group. Namely,
it is invariant under the group 
$\mathfrak{G}=SU(2)_{{\rm loc}} \rtimes Diff(\sigma)$ where 
$SU(2)_{{\rm loc}}$ denotes the group of local SU(2) valued gauge 
transformations and $Diff(\sigma)$ denotes the group of (piecewise real 
analytic) diffeomorphisms of $\sigma$. An element of $\mathfrak{G}$ is given 
by a pair $\mathfrak{g}=(g,\varphi)$ which acts on the basic variables as 
\be \label{3.15}
\alpha_{(g,\varphi)}(A)=-dg g^{-1}+g [\varphi^\ast A] g^{-1},\; 
\alpha_{(g,\varphi)}(\ast E)=g[\varphi^\ast (\ast E)] g^{-1}
\ee
where $\varphi^\ast$ denotes the pull-back action of diffeomorphisms 
on differential forms. This action lifts to the algebra $\mathfrak{A}$,
specifically
\be \label{3.15a}
\alpha_{(g,\varphi)}(A(c))=g(b(c))\; A(\varphi(c))\; g(f(c))^{-1},\;\;
\alpha_{(g,\varphi)}(E_f(S))=
E_{[g^{-1} f g]\circ\varphi^{-1}}(\varphi(S))
\ee
where $b(c),f(c)$ denote beginning and final point of $c$
and this simple covariant transformation behaviour 
was part of the reason why the particular ``smearing'' of $A$ along 
curves involved in holonomies is used.
Note also the different character of the two groups: While we still have to find 
the gauge invariant observables with respect to the Gauss constraint 
the diffeomorphism constraint is already solved. The diffeomorphisms in 
$\mathfrak{G}$ are thus to be considered as active diffeomorphisms rather than 
passive ones.

The mathematical problem in quantising the theory consists in constructing a 
$^\ast$ representation of $\mathfrak{A}$, that is, a representation 
$(\pi,{\cal H})$ of elements $a\in \mathfrak{A}$ as operators $\pi(a)$ 
densely defined on a common, invariant domain $\cal D$
of a Hilbert space $\cal H$ such that the $^\ast$ relations are implemented as 
adjointness relations and such that the canonical commutation relations 
are implemented as commutators between them. Thus we want in particular
that 
\be \label{3.16}
\pi(a^\ast)=[\pi(a)]^\dagger,\; \pi(a+b)=\pi(a)+\pi(b),\; 
\pi(ab)=\pi(a)\pi(b),\;\pi(z a)=z \pi(a),\;[\pi(a),\pi(b)]=\pi(c);\;
\ee
for all $a,b,c\in \mathfrak{A},\;z\in \mathbb{Z}$ if $ab-ba=c$. 
In QFT this problem is known to typically have 
an uncountably infinite number of unitarily inequivalent solutions, there
is no Stone-von Neumann uniqueness theorem when the number of degrees of 
freedom in infinite. Hence, to make progress, we must use additional 
physical input. That input can only come from the Hamiltonian. Thus, 
we require in addition that the representation supports $H$ as a self-adjoint 
operator ($H$ is real valued) also densely defined on $\cal D$ and 
such that $\cal H$ carries a unitary representation $U$ of $\mathfrak{G}$ (such 
that its generators are self-adjoint by Stone's theorem). Using 
the powerful machinery of the Gel'fand-Naimark-Segal construction 
\cite{51}, the 
representation property and the unitarity property can be granted if 
we find a positive linear and $\mathfrak{G}$ invariant functional 
$\omega:\; \mathfrak{A}\to \mathbb{C}$
on 
$\mathfrak{A}$, that is
\be \label{3.17}
\omega\circ\alpha_{\mathfrak{g}}=\omega,\;\; \omega(a^\ast a)\ge 0
\ee

In \cite{4} it was found that there is a {\it unique} $\omega$ satisfying 
(\ref{3.17}). While the derivation is somewhat involved, the final result
can be described in a compact form. The dense domain $\cal D$ consists of
functions of the form 
\be \label{3.18}
\psi(A)=\psi_\gamma(\{A(c)\}_{c\in E(\gamma)});\;\;
\psi_\gamma \in C^\infty(SU(2)^{|E(\gamma)|},\mathbb{C})
\ee
i.e. $\psi_\gamma$ is 
a complex valued, smooth functions of a finite number of holonomy variables.
The union of the curves of these holonomies forms a finite graph $\gamma$  
where $E(\gamma)$ denotes the set of its edges. Note that the elements 
of $\mathfrak{A}$ that just depend on the connection are themselves of the 
form (\ref{3.18}) and thus their action by multiplication
\be \label{3.19}
[\pi(f)\psi](A):=f(A) \; \psi(A)
\ee
is densely defined. The fluxes are densely defined when acting by derivation
\be \label{3.20}
[\pi(E_f(S)) \psi](A):=i\hbar \{E_f(S),\psi(A)\}
\ee
which also solves the canonical commutation relations. 

To see that the 
adjointness conditions hold we need the inner product. To define it, we note 
that graphs defined by finiteley many piecewiese analytic curves 
are partially ordered by set theoretic inclusion and they are 
directed in the sense that for any two graphs $\gamma_1,\gamma_2$ there 
exists $\gamma_3$ with $\gamma_1,\gamma_2\subset \gamma_3$, for instance 
$\gamma_3=\gamma_1\cup \gamma_2$. Then we can decompose all edges 
of $\gamma_1,\gamma_2$ with respect to the edges of $\gamma_3$, use the 
algebraic relations of the holonomy $A(c^{-1})=A(c)^{-1},\; 
A(c\circ c')=A(c)A(c')$ where $c\circ c'$ is the composition of curves 
$f(c)=b(c')$ in order to write $\psi_1, \psi_2$ excited over 
$\gamma_1,\gamma_2$ respectively as functions excited over $\gamma_3$.
Thus it is sufficient to know the inner product of functions excited 
over the same graph $\gamma$ which is given by
\be \label{3.20}
<\psi,\psi'>_{{\cal H}}:=\int_{SU(2)^{|E(\gamma)|}}\; 
\prod_{k=1}^{|E(\gamma)|} \; d\mu_H(h_k)\;
\overline{\psi(\{h_k\})}\;\psi'(\{h_k\})
\ee
where $\mu_H$ is the Haar measure on SU(2).
One can check that the adjointness relations are indeed satisfied, in fact 
$\pi(E_f(S))$ is an unbounded but essentially self-adjoint operator (i.e.
a symmetric operator with unique self-adjoint extension). 

In fact (\ref{3.20}) defines a cylindrical family of measures $\mu_\gamma$,
one for every graph $\gamma$. One has to check that (\ref{3.20}) is well 
defined because a function excited on $\gamma$ can be written also as a 
function excited over any finer graph $\gamma'$ by extending it trivially to 
the additional edges. This is in fact the case \cite{4}. Then the 
Kolmogorov type extension theorems grant that the family extends to an 
honest continuum measure $\mu$ on the quantum configuration space 
$\overline{{\cal A}}$ of 
distributional connections. We will not go into the details here which can be 
found in \cite{4} but just mention for the interested reader that this 
space coincides with the so-called Gel'fand spectrum of the Abelian C$^\ast$ 
algebra that one obtains by completing the space of functions (\ref{3.18})
in the sup norm. It follows that the Hilbert space is given by
${\cal H}=L_2(\overline{{\cal A}},d\mu)$.  

By construction, the Hilbert space $\cal H$ carries a unitary representation 
$U$ of 
$\mathfrak{G}$ given by 
\be \label{3.21}
(U(\mathfrak{g})\psi)(A)=
\psi_\gamma(\{\alpha_{\mathfrak{g}}(A(c))\}_{c\in E(\gamma)})
\ee
To check this, one uses the properties of the Haar measure 
(translation invariance) and the diffeomorphism invariance of (\ref{3.20})
which does not care about the location and shape of the curves involved.

The Hilbert space comes equipped with an explicitly known orthonormal basis 
called spin network functions (SNWF). This makes use of harmonic analysis on 
compact groups $G$ \cite{52}, in particular the Peter \& Weyl theorem which 
states that the matrix element functions of the irreducible representations 
of $G$, which are all finite dimensional and unitary without loss of generality,
are mutually orthogonal, unless equivalent, with respect to the inner product
defined by the Haar measure on $G$, moreover, they span the whole Hilbert space.
As the irreducible representations of SU(2) are labelled by spin quantum 
numbers, the name SNWF comes at no surprise. More in detail a SNWF 
$T_{\gamma,j,\iota}$ is labelled by a graph $\gamma$, a tuple 
$j=\{j_c\}_{c\in E(\gamma)}$ of spin quantum numbers decorating the edges and 
a tuple $\iota=\{\iota_v\}_{v\in V(\gamma)}$ of intertwiners decorating the 
vertices $v$ in the vertex set $V(\gamma)$ of $\gamma$. Here an intertwiner
$\iota_v$ projects the tensor product of irreducible representations 
corresponding to the edges incident at $v$ onto one of 
the irreducible representations appearing in its decomposition into 
irreducibles (Clebsch-Gordan theory). 
Besides providing an ONB convenient for concrete calculations,
SNF make it easy to solve the Gauss constraint: A detailed analysis 
\cite{2} shows that
(\ref{3.9}) can be quantised in the given representation and just imposes 
that the space of intertwiners be restricted to those projecting on the 
trivial (spin zero) representation. We call such intertwiners 
gauge invariant. Hence the joint kernel of the Gauss 
constraints is a closed subspace of ${\cal H}$ which is explicitly known.
We will abuse the notation and will not distinguish between that subspace and 
${\cal H}$ and henceforth consider the Gauss constraint as solved. All operators
considered in what follows are manifestly gauge invariant and preserve that 
subspace. 

As a historical remark, solutions of the Gauss constraint are excited 
on closed graphs since there is no non-trivial intertwiner between the 
trivial representation and a single irreducible one, hence open 
ends are forbidden. For closed graphs,
one can alternatively label SNWF by homotopically independent closed paths 
(loops) with a common starting point (vertex) on that graph. Originally
one used loops as labels, hence the name Loop Quantum Gravity (LQG).

One of the many unfamiliar features of $\cal H$ is that it is not separable
which easily follows from the uncountable cardinality of the set of graphs.
This is a direct consequence of the diffeomorphism invariance of the inner 
product: Two graphs that are arbitrarily close but disjoint are simultaneously
also arbitrarily far apart under the inner product. Thus if the measure 
clusters for far apart support of the smearing functions (here the graphs) 
then the orthogonality of the corresponding spn network functions comes at 
no surprise. A direct consequence of this is that the diffeomorphism 
operators $U(\varphi)$ do not act (strongly) continuously, hence a 
generator of of infinitesimal diffeomorphisms generated by the integral 
curves of vector fields cannot exist. Yet another direct consequence is that 
the connection operator $A$ itself does not exist, only its holonomies do.\\  
\\ 
The remaining task is to quantise the Hamiltonian and it is at this 
point where the afore mentioned quantisation ambiguities arise.        
The strategy followed in \cite{5} is as follows: It turns out that the 
volume operator appearing in (\ref{3.8}) can be quantised on ${\cal H}$ as an 
essentially self-adjoint operator whose spectrum is pure point (discrete)
\cite{53}. 
It is densely defined on the span of the SNWF and it acts vertex wise, 
with no contribution from gauge (in)variant vertices that are not at least
three (four) valent or from vertices whose incident edges have tangents 
in a common two dimensional or one dimensional space. Next the 
holonomy along an open curve $c$ can be expanded as $A(c)=1_2+\int_c\; A+...$ 
and along a closed curve $\alpha$ as 
$A(\alpha)=1_2+\int_{S,\partial S=\alpha} \; F$ so that the functions 
$A, F$ that appear in (\ref{3.8}) can be approximated by suitable holonomies
where the approximation is in terms of the ``length'' of the curves involved
which are matched with the coordinate volume assigned by the Lebesgue measure 
$d^3x$ appearing in (\ref{3.8}) approximating the integral by a Riemann
sum (this is a regularisation step). Suppose then that somehow a well 
defined operator $H_E$ can be defined by replacing the classical functions by 
operators and the Poisson brackets by commutators times $\hbar$. Then 
the same argument can be applied to the Lorentzian piece. As a final piece of 
information, one uses the observation that a spatially diffeomorphism 
invariant operator, densely defined on the span of SNWF
cannot have non-trivial matrix elements between 
SNWF excited over different graphs \cite{8}. This has the following
consequence: Let ${\cal H}_\gamma$ be the 
closed linear 
span of SNWF excited precisely over $\gamma$. Then, if $H$ is supposed 
to preserve its classical diffeomorphism invariance upon quantisation
we necessarily 
have 
\be \label{3.22}
H=\oplus_\gamma H_\gamma,\;  
{\cal H}=\oplus_\gamma {\cal H}_\gamma
\ee
where each $H_\gamma$ is self-adjoint on ${\cal H}_\gamma$, in 
particular it preserves this space. Let now 
$P_\gamma:\; {\cal H}\to {\cal H}_\gamma$ be the orthogonal projection. 
Then the following concrete expression for $H$ can be given \cite{11}
(again we drop 
some numerical coefficients and set $\hbar=1$)
\ba \label{3.23}
H_{E,\gamma} &=& P_\gamma\; H'_{E,\gamma}\; P_\gamma
\\
H'_{E,\gamma} &=& i \sum_{v\in V(\gamma)}\;\;\;\;
\sum_{c_1,c_2,c_3\in E(\gamma);\;c_1\cap c_2\cap c_3=v}
\nonumber\\
&& \epsilon^{IJK} {\rm Tr}(
[A(\alpha_{\gamma,v,c_I,c_J})-A(\alpha_{\gamma,v,c_I,c_J})^{-1}]\;
A(c_K)[V,A(c_K)^{-1}])
+ h.c.
\nonumber\\
H_{L,\gamma} &=& P_\gamma\; H'_{L,\gamma}\; P_\gamma
\nonumber\\
H'_{L,\gamma} &=& i \sum_{v\in V(\gamma)}\;\;\;\;
\sum_{c_1,c_2,c_3\in E(\gamma);\;c_1\cap c_2\cap c_3=v}
\nonumber\\
&& \epsilon^{IJK} {\rm Tr}(
A(c_I)[[H'_{E,\gamma},V],A(c_I)^{-1}]\;
A(c_J)[[H'_{E,\gamma},V],A(c_J)^{-1}]\;
A(c_K)[V,A(c_K)^{-1}])
+ h.c.
\nonumber
\ea
The sum is over vertices of $\gamma$ and triples of edges incident at them
(taken with outgoing orientation).
For each vertex $v$ and pairs of edges $c,c'$ outgoing from $v$ one defines 
$\alpha_{\gamma,v,c,c'}$ as that loop within $\gamma$ starting at $v$    
along $c$ and ending at $v$ along $(c')^{-1}$ with the minimal number of 
elements of $E(\gamma)$ used (if that loop is not unique, we average over 
them). It has been shown, that the concrete expression (\ref{3.23}) has
the correct
semiclassical limit in terms of expectation values with respect to semiclassical 
coherent states \cite{54} on sufficiently fine graphs of cubic topology 
\cite{55}.\\
\\
Remarkably, (\ref{3.23}) defines an essentially self-adjoint, diffeomorphism
invariant, 
continuum Hamiltonian operator for Lorentzian quantum gravity in four 
spacetime dimensions,
densely defined on the physical continuum Hilbert space ${\cal H}$ {\it 
which is manifestly free ov ultraviolet divergences}. That is, while 
for each given graph $\gamma$ the theory looks like a lattice gauge theory 
on $\gamma$, the theory is defined on all lattices simultaneously which 
makes it a continuum theory. Moreover, note that the vector $\Omega=1$ 
has norm unity and that $H\Omega=1$.\\
\\
Yet, one cannot be satisfied with (\ref{3.23}) for the following reasons:\\
1.\\
While it is true that one can give a better motivated derivation than we  
could sketch here for reasons of space, there are some ad hoc steps 
involved. \\
2.\\
There are several  
ordering ambiguities involved in (\ref{3.23}): Not only could we have 
written the factors in different orders but instead of using the 
fundamental representation to approximate connections in terms of holonomies 
we could have used higher spin representions \cite{56} or an avarage over 
several of them and in each case we would have different coefficients 
appearing in front of these terms.\\
3.\\
Of particular concern is definition of the minimal loop. While this gives 
good semiclassical results on sufficiently fine lattices, the theory lives on 
all lattices also those which are very coarse and on those the expression 
(\ref{3.23}) is doubtful, because the Riemann approximation mentioned above 
would suggest to use a much finer loop. In fact, one is supposed to take the 
regulator (i.e. the coordinate volume $\epsilon$ of the Riemann approximants) 
away
and in that limit the loop would shrink to zero. One can justify that 
this does not happen by using a sufficiently weak operator topoloy \cite{5}. 
Namely,
there exist diffeomorphism invariant distributions (linear functionals) $l$ on 
the dense span of SNWF $\psi$ \cite{8} and we define an operator $O_\epsilon$ to 
converge to an operator $O$ in that topology if $l([O_\epsilon-O]\psi)\to 0$ 
for all $l, \psi$. Now due to diffeomorphism invariance we can deform 
for any $\epsilon$ the small loop to any diffeomorphic one as long as we do not
cross other edges of the graph, in particular we can deform is as close as we 
want to the minimal one. Then the result mentioned above about the matrix 
elements of diffeomorphism invariant operators in fact forces us to choose 
that loop precisely, not only approximately. Of course, while the 
diffeomorphism symmetry of $H$ makes the space of diffeomorphism invariant 
distributions a natural space to consider, it is still not perfectly justified
to use it in order to define a topology.\\
4.\\  
The naive dequantisation 
of (\ref{3.23}) will perform poorly on very coarse graphs and will be far 
from the continuum expression (\ref{3.8}) but one could 
argue that that vectors supported on coarse graphs simply do not qualify as 
good semiclassical states.\\
5.\\
Using the same argument as in 3. there is nothing sacred about the minimal 
loop and one could take again other loops and/or average of over them
with certain weights. However, then the locality of (\ref{3.23}) is lost.\\
6.\\
The block diagonal or superselection structure \ref{3.22}) which is forced 
on us by the 
non-separability of the Hilbert space and its spatial diffeomorphism 
covariance appears unphysical, one would expect that the Hamiltonian 
creates also new excitations.\\
\\
It transpires that we must improve (\ref{3.23}) and the discussion 
has indicated a possible solution: Blocking free QFT from the continuum 
(i.e. restricting the Hilbert space to vectors of finite spatial 
resolution) with respect 
to a kinematic real space coarse graining scheme exactly produces such 
a high degree of non-locality at finite resolution even if the continuum 
measure or the continuum Hamiltonian is local \cite{13,16,26}. This bears the 
chance that what we see in (\ref{3.23}) is nothing but a naive guess of 
a continuum Hamiltonian which is blocked from the continuum but whose 
off-block diagonal form 
we cannot determine with the technology used so far. Accordingly 
this calls for shifting our strategy which was already started in 
\cite{55} in the sense that the block diagonal structure was dropped but 
only one infinite graph was kept:\\
\\
We take the above speculation serious and consider the operators 
$H_\gamma$ as projections onto the subspaces ${\cal H}_\gamma$ of $\cal H$
of a continuum Hamiltonian $H$ but we will drop the unphysical 
block diagonal structure \ref{3.22} which arises from the non-separability 
of ${\cal H}$. Rather the relation between the $H_\gamma$ is to be imposed 
by a renormalisation scheme induced by the path integral renormalisation 
scheme adopted in quantum statistical physics. To do this, we must 
first derive a path integral measure $\mu_\gamma$ from the OS data 
${\cal H}_\gamma, H_\gamma,\Omega_\gamma$ where $\Omega_\gamma$ is 
the vacuum of $H_\gamma$ by the usual Feynman-Kac-Trotter-Wiener formalism.
Then we can compute the flow of the $\mu_\gamma$ in the ususal way and 
then translate into a flow of OS data by OS reconstructing them from the 
measures. The fixed points of the flow will then define the 
possible continuum theories and these may be ``phases'' 
quite different from (\ref{3.23}).
The details of this programme will be the subject of the 
following sections.

\section{Constructive QFT, Feynman-Kac-Trotter-Wiener construction and 
Osterwalder-Schrader reconstruction}
\label{s4}

The purpose of this section is to provide some background information on 
constructive QFT and related topics such as the Feynman-Kac-Trotter-Wiener
construction of measures (path integrals) from a Hamiltonian formulation 
(operator formulation) and vice versa 
the Osterwalder-Schrader reconstruction of a Hamiltonain framework from 
a measure. Our description will be minimal. The prime textbook references are 
\cite{58,20}.

\subsection{Measure theoretic glossary}
\label{s4.1}

Let $S$ be a set. A collection $B$ of so-called measurable 
subsets of $S$ is called a 
$\sigma-$algebra if i. it is closed under taking complements with respect to 
$S$, ii. closed under taking countable unions and iii. $B$ contains the empty 
set $\emptyset$. The pair $(S,B)$ is called a measurable space. A measure 
space is a triple $(S,B,\mu)$ where $(S,B)$ is a measure space and $\mu$
is a positive set function $\mu:\; B\to \mathbb{R}^+_0\cup \{+\infty\}\; 
s\mapsto \mu(s)$
which is $\sigma-$additive, that is, for any pairwise disjoint 
$s_I\cap s_J=\emptyset, I\not=J;\; I,J=\in \mathbb{N}$ we have 
\be \label{4.1}
\mu(\cup_I s_I)=\sum_I \mu(s_I)
\ee
The measure $\mu$ is called a probability measure if $\mu(S)=1$. One uses 
the notation
\be \label{4.2}
\mu(s)=\int_s\;d\mu(p)=\int_S\; d\mu(p)\; \chi_s(p)
\ee
where $\chi_s(p)=1$ if $p\in s$ and $\chi_s(p)=0$ else is called the 
characteristic function of $s\in B$.  

Consider now a second measurable space $(\tilde{S}, \tilde{B})$. A function 
$X:\; S\to \tilde{S}$ is called measurable or a random variable 
if the pre-images 
$X^{-1}(\tilde{s})=\{p\in S; f(p)\in \tilde{s}\}$ 
of measurable sets $\tilde{s}\subset \tilde{S}$
are measurable in $S$.
Let ${\cal F}$ be the set of 
random variables $X:\; S\to \tilde{s}$ then for $X\in {\cal F}$ 
the set function 
\be \label{4.2}
\tilde{\mu}(\tilde{s}):=\mu(X^{-1}(\tilde{s})),\; \tilde{s}\in \tilde{B}
\ee
defines also a probability measure called the distribution of $\Phi$. We 
consider real valued functions $f:\; \tilde{S}\to \mathbb{R}$ of the simple form 
\be \label{4.3}
f(\tilde{p})=\sum_n \; z_n\; \chi_{\tilde{s}_n}(\tilde{p});\; 
z_n\in \mathbb{R},\;
\tilde{s}_n \in \tilde{B}
\ee
where the sum is over at most finitely many terms 
and define their integral as 
\ba \label{4.4}
\tilde{\mu}(f) &=& \sum_n z_n \mu(s'_n)=\int_{\tilde{S}} \; 
d\tilde{\mu}(\tilde{p}) 
[\sum_n z_n \chi_{\tilde{s}_n}(\tilde{p})]=\int_{\tilde{S}}\;
d\tilde{\mu}(\tilde{p})\; f(\tilde{p})
\nonumber\\
&=& \sum_n z_n \mu(X^{-1}(\tilde{s}_n))
=\int_S \; d\mu(p)\; [\sum_n z_n \chi_{X^{-1}(\tilde{s}_n)}(p)]
\nonumber\\
&=&\int_S \; d\mu(p)\; [\sum_n z_n \chi_{\tilde{s}_n}(X(p))]
=\int_S \; d\mu(p)\; (f\circ X)(p)
=\mu(f\circ X)
\ea
One can show that this identity extends from simple functions to Borel 
functions i.e. measurable functions 
$f:\tilde{S} \to \mathbb{R}$ where $\mathbb{R}$ is equipped with the 
Borel $\sigma-$algebra (the smallest $\sigma$ algebra containing all open 
intervals). We can then  also extend it to those complex functions whose 
real and imaginary parts are Borel by linearity. 

A stochastic process indexed by an index set $\cal I$ is a family 
$\{X_i\}_{i\in {\cal I}}$ of random variables $X_i: S\to \tilde{S}$. 
For any 
finite subset $I=\{i_1,..,i_N\}\subset {\cal I}$ we have the joint distribution
\be \label{4.5}
\tilde{\mu}_I(\tilde{s}_1\times ..\times \tilde{s}_N):=
\mu(\cap_{k=1}^N X_{i_k}^{-1}(\tilde{s}_k))
\ee
The probability measures $\mu'_I$ are called cylinder measures.
For any 
complex valued Borel function $f:\; \tilde{S}^N\to \mathbb{C}$ we have similarly
as in (\ref{4.3})
\be \label{4.6}
\int_S\; d\mu(p)\; f(\{X_{i_k}(p)\}_{k=1}^N)
=\int_{\tilde{S}^N} \; d\tilde{\mu}_I(\tilde{p}_1,..,\tilde{p}_n)\; 
f(\tilde{p}_1,..,\tilde{p}_n)
\ee
Functions on $S$ of the form $f_I(p)=f(\{X_{i_k}(p)\}_{k=1}^N)$ are called 
cylinder functions. 

In what follows we assume that for each $N\in \mathbb{N}_0$ 
there exists a distinguished system ${\cal W}_N$ of complex 
valued, bounded elementary functions $W$ 
on $N$ copies of $\tilde{S}$ such that the corresponding cylinder functions 
enjoy the following properties:\\
1. They generate an Abelian $^\ast$ algebra,
that is, for all $I,I'\in \cal{I}$ the product $W_I \; W'_{I'}$ is a finite,
complex linear combination of suitable 
$W^{\prime\prime}_{I^{\prime\prime}}, I^{\prime\prime}\in\cal{I},\;
W^{\prime\prime}\in {\cal W}_{|I^{\prime\prime}|}$ and also 
$\overline{W_I}$ is of that form.\\
2. ${\cal W}_N$ contains the constant function.\\
3. For each $I\in \cal{I}$, the moments $\mu(W_I),\; W\in {\cal W}_{|I|}$ 
determine $\tilde{\mu}_I$ uniquely.\\
4. These properties show that the $W_I$ are 
$L_2(d\tilde{\mu}_I,\tilde{S}^{|I|})$ 
functions. We require their span to be dense. \\ 
\\
We saw that a probability measure $\mu$ together with a stochastic process 
gives rise 
to a family of cylindrical probabilty 
measures $(\tilde{\mu}_I)_{I\in \mathfrak{I}}$ on $\tilde{S}^{|I|}$. The 
converse question is under which circumstances a cylindrical 
family of cylinder probability measures
determines a measure $\mu$. A necessary criterion is as follows:
The set $\cal{I}$ is partially ordered and directed by inclusion,
that is, for each $I,J\in \cal{I}$ we find $K\in \cal{I}$ such that
$I,J\subset K$ (for instance $K=I\cup J$). Suppose that $I\subset J$. Then 
\be \label{4.7}
\mu(X_I^{-1}(\tilde{s}_I))=\tilde{\mu}_I(\tilde{s}_I)=\mu(X_J^{-1}(
\tilde{s}_I\times 
(\tilde{S})^{|J|-|I|}))=\tilde{\mu}_J(\tilde{s}_I\times 
(\tilde{S})^{|J|-|I|})
\ee
where $X_I=\{X_i\}_{i\in I},\;\tilde{s}_I\subset \tilde{B}^{|I|}$. 
Furthermore for 
any permutation $\pi$ on $N=|I|$ elements set 
$\pi\cdot I=\{i_{\pi(1)},..,i_{\pi(N)}\}$ and 
$\pi\cdot \tilde{s}_I=\{(\tilde{p}_{\pi(1)},..,\tilde{p}_{\pi(N)});\;\;
(\tilde{p}_1,..,\tilde{p}_N)\in \tilde{s}_I\}$. Then 
\be \label{4.8}
\mu(X_{\pi\cdot I}^{-1}(\pi \cdot \tilde{s}_I)=
\tilde{\mu}_{\pi\cdot I}(\pi \cdot \tilde{s}_I)
=\mu(X_I^{-1}(\tilde{s}_I)=\tilde{\mu}_I(\tilde{s}_I)
\ee
Even more generally,
a partial order on the set $\mathfrak{I}$ of finite subsets $I$ of $\cal I$ 
is a transitive, 
reflexive and antisymmetric relation,
that is, $I<J\; \wedge \;J<K\;\Rightarrow\; I<K$ and $I<I$ and 
$I<J\; \wedge \;J<I\;\Rightarrow\; I=J$ for all $I,J,K\in \mathfrak{I}$. The 
set $\mathfrak{I}$ is called directed with respect to $<$ provided that for  
all $I,J\in \mathfrak{I}$ we find $K\in \mathfrak{I}$ such that $I,J<K$.
For $I<J$ we may have surjective maps 
$P_{JI}: \tilde{S}^{|J|}\to \tilde{S}^{|I|}$ such that 
$X_I(p)=P_{JI}(X_J(p))$ and such that for $I<J<K$ whe have 
$P_{JI}\circ P_{KJ}=P_{KI}$. Then similar as in (\ref{4.7}) we necessarily
must have for $I<J$
\be \label{4.7a}
\tilde{\mu}_I(\tilde{s}_I)=\tilde{\mu}_J(P_{JI}^{-1}(\tilde{s}_I))
\ee
It turns out that these two conditions, either (\ref{4.7}), (\ref{4.8}) 
or (\ref{4.7a}) are also 
sufficient in fortunate cases (for instance if $\tilde{S}=\mathbb{R}$, 
which is the classical Kolmogorov theorem, see \cite{37}). That is, we can 
then reconstruct the measure space $(S, B,\mu)$ and a stochastic process
$\{X_i\}_{i\in {\cal I}}$ such that the $\tilde{\mu}_I$ are the cylinder 
measures of $\mu$. It follows that the $W_I\in {\cal W}_{|I|}, 
I\in \cal{I}$
lie dense in $L_2(S, d\mu)$.\\
\\
Physical meaning:\\
We consider the elements $p\in S$ to be spacetime fields $\Phi$ or spatial
fields $\phi$ respectively. The index set $\cal I$ will 
have the meaning of a set of test functions or more generally
distributions whose elements $i$ label the 
random variables $X_i$. These map the fields smeared with test functions
to a finite dimensional manifold (usually copies of $\mathbb{R}$ or more 
generally of a Lie group). For instance for a scalar field $\Phi$ we may 
consider the random variable 
$X_F(\Phi)=\exp(i\int_{\mathbb{R}\times \sigma} d^4x F(x) \Phi(x))$ which takes 
values in $\tilde{S}=U(1)$. It is also customary to consider the field 
$p=\Phi$ itself as a random variable indexed by the same index set or 
to simply write $X_i(p)=p(i)$ as an abbreviation.

\subsection{Constructive QFT}
\label{s4.2}

The application of interest of the previous subsection is a stochastic 
process indexed by either $\mathbb{R}\times L$ or just by $L$ where 
the label set $L$ is a certain set 
of distributions on the spatial manifold. We distinguish
between random variables $\Phi$ indexed by a pair $(t,f)\in \mathbb{R}\times L$
and random variables $\phi$ indexed by $f\in L$. Some examples are:\\
\\
1. Real quantum scalar fields with smooth smearing:\\
Consider $L={\cal S}(\mathbb{R}^3)$, the space of smooth test functions
of rapid decrease 
and $\tilde{S}=\mathbb{R}$
equipped with the Borel $\sigma-$algebra. 
Then $\phi(f)=<f,\phi>=\int_\sigma\; d^3x \; f(x)\; \phi(x)$ and      
$\Phi(t,f)=<f,\Phi(t,.)>$. 
Given $F:=(f_1,..,f_N)\in L^N$ consider  
$\phi(F)=(\phi(f_1),..,\phi(f_N)\in \mathbb{R}^N$.  
The space ${\cal W}_N$ 
of elementary 
functions on $N$ copies of $\mathbb{R}$ can be chosen to be generated by 
the exponentials 
\be \label{4.9}  
w_{r_1,..,r_N}(\phi(F))=\exp(i\sum_{k=1}^N r_k <f_k,\phi>)
\ee
with $r_1,..,r_N\in \mathbb{R}$ labelling the (necessarily one dimensional)
unitary irreducible representations of $U(1)$.

In fact, since in this case the space 
$L$ is a vector space, it is sufficient to consdider the functions 
$w(\phi(f))=\exp(i\phi(f)),\; f\in L$. 
Analogously the space of elementary functions 
for the time dependent fields can be chosen as ($w_k\in {\cal W}_{N_k}$)
\be \label{4.10}
W(\Phi(t_1,F_1),..,\Phi(t_T,F_T))=w_T(\Phi(t_T,F_T))..w_1(\Phi(t_1,F_1))
\ee
which of course reduces to 
\be \label{4.10a}
\exp(i\Phi(t_T, f'_T))..\exp(i\Phi(t_1,f'_1)),\;\;f'_k\in L
\ee
for certain $f'_k\in L$.
Obviously the Abelian $^\ast-$algebra and boundedness conditions are 
satisfied. That these elementary functions suffice to determine the cylindrical 
measures requires a more involved argument (Bochner's theorem, \cite{37}).\\
\\
2. Real quantum scalar fields with distributional smearing:\\
Consider a subset $L=\subset{\cal S}'(\mathbb{R}^3)$ of the 
tempered distributions and $\tilde{S}=U(1)$
equipped with the Borel $\sigma-$algebra. In applications to scalar fields
coupled to General Relativity elements $f\in L$ are typically 
$\delta-$distributions supported at a single point.  

Then $\phi(f):=\exp(i<f,\phi>)$ and      
$\Phi(t,f)=\exp(i<f,\Phi(t,.)>)$ where $<f,\phi>$ is the evaluation of $f\in L$
on $\phi$. Given $F:=(f_1,..,f_N)\in L^N$ consider  
$\phi(F)=(\phi(f_1),..,\phi(f_N)\in U(1)^N$.  
The space ${\cal W}_N$ 
of elementary 
functions on $N$ copies of $U(1)$ can be chosen to be generated by 
the exponentials 
\be \label{4.9a}  
w_{r_1,..,r_N}(\phi(F))=\exp(i\sum_{k=1}^N r_k <f_k,\phi>)
\ee
with $r_1,..,r_N\in \mathbb{R}$ labelling the (necessarily one dimensional)
unitary irreducible representations of $U(1)$.
Analogously the space of elementary functions 
for the time dependent fields can be chosen as ($w_k\in {\cal W}_{N_k}$)
\be \label{4.10b}
W(\Phi(t_1,F_1),..,\Phi(t_T,F_T))=w_T(\Phi(t_T,F_T))..w_1(\Phi(t_1,F_1))
\ee
In this case we could still equip $L$ with the structure of a real vector space 
if we extend $L$ to the finite real linear combinations $\tilde{L}$ of its 
generating set $L$. Since this is no longer possible for the non-Abelian 
gauge theory example below, we will refrain from doing this, in order to 
highlight the structural similarity between the examples.\\
\\
3. Non-Abelian gauge fields for compact gauge groups $G$:\\
A form factor is a distribution 
\be \label{4.11}
f^a_c(x)=\int_c dy^a \delta^{(3)}(x,y)
\ee
where $c$ is a one dimensional path in i$\sigma$. We take $\tilde{S}=G$ 
equipped with the natural Borel $\sigma-algebra$
and 
\be \label{4.12}
\phi(c):=\phi(f_c):= 
{\cal P}\exp(\int_c \; \phi)
={\cal P}\exp(\phi(f_c));\;\;
\phi(f_c)=\int_\sigma\; d^3x \; f_c^a(x) \phi_a(x)        
\ee
where we have identified $\phi$ as a G connection and $\cal P$ denotes 
path ordering. Thus (\ref{4.12}) is the direct analog of the scalar 
field construction (note that the Lie generators are anti-self adjoint
since $G$ is compact so that (\ref{4.12}) is unitary) and 
$\phi(f_c)$ is simply the holonomy of $\phi$ along $c$. Likewise
\be \label{4.13}
\Phi(t,c):={\cal P}(\exp(\int_c \Phi(t,.))
\ee
Note that the form factors do not form a vector space, in general they cannot 
be added (unless two curves share a boundary point) and they can never
be multiplied by a non-integer real scalar (there is a certain 
groupoid structure behind this \cite{2}). Accordingly, our space of 
generating set of elementary functions 
${\cal W}_N$ on $N$ copies of $G$ need to be more sophisticated. We consider 
the space $L$ of form factors and for each $F=(f_{c_1},..,f_{c_N})\in L^N$
the pairing $\phi(F)=(\phi(c_1),..,\phi(c_N))\in G^N$.
Then a possible choice of generating set ${\cal W}_N$ 
of elementary functions is  
\be \label{4.14}
w_\delta(\phi(F))
=\prod_{k=1}^N \sqrt{d_{j_k}}\;
[\pi_{j_k}(\phi(c_k))]_{m_k,n_k}
\ee
with $\delta:=\{(j_1,m_1,n_1),..,(j_N,m_N,n_N)\}$.
In fact it is sufficient to consider
mutually disjoint (up to end points), piecewise real analytic curves $c_k$. 
Here $j$ labels an irreducible representation $\pi_j$ of $G$ of dimension 
$d_j$ and $[\pi_j(g)]_{m,n};\;m,n=1,..,d_j$ its matrix element functions.
By the Peter\& Weyl theorem these functions suffice to determine the 
cylindrical measures uniquely at least if they are absolutely continuous 
with respect to the product Haar measure. 
Likewise we consider the elementary functions 
\be \label{4.15}
W_{\delta_1,..,\delta_T}(\Phi(t_1,F_1),..,\Phi(t_T,F_T))
=w_{\delta_N}(\Phi(t_T,F_T))..
w_{\delta_1}(\Phi(t_1,F_1))
\ee
The fact that these functions satisfy all
requirements is the statement of Clebsch-Goradan decomposition theory together
with the properties of the holonomy to factorise along segments of a curve
(note the piecewise analyticity condition).\\
\\
This ends our list of examples.
We will denote the measure related to the stochastic process 
$\{\Phi(t,f)\}$ by $\mu$ and 
the measure related to the stochastic process 
$\{\phi(f)\}$ by $\nu$. As the notation suggests, $\Phi$ is a field defined 
on spacetime $M=\mathbb{R}\times \sigma$ while 
$\phi$ is a field defined on space $\sigma$. Note that $M=\mathbb{R}\times 
\sigma$ with $\sigma$ any 3D manifold is a consequence of the requirement 
of global hyperbolicity \cite{59}.\\
\\  
The measures $\mu$ underlying a relativistic QFT are not only probability 
measures. In 
addition, they need to satisfy a set of axioms \cite{20,23} called 
Osterwalder-Schrader axioms which, however, are tailored to $M=\mathbb{R}^4$,
stochastic processes with $L$ being a vector space and with an Euclidean 
background metric at one's disposal. In quantum gravity and more generally
in non-Abelian gauge theories one typically must or may want to drop
some of these structures. As a consequence we will only keep those axioms 
that can also be applied in this more general context.  

Some of them generalise to stochastic 
processes not indexed by a vector space, some do not. 
Some generalise from the manifold $\mathbb{R}^4$ to the general spacetime 
manifold $\mathbb{R}\times \sigma$ allowed by global hyperbolicity, some do
not. Fortunately, those 
that do generalise are sufficient for the reconstruction process 
\cite{22}. We call them the minimal OS axioms and we call a probability 
measure that satisfy them an OS measure. They are can be phrased as 
follows:\\
Let $\theta(t,x):=(-t,x)$ and $T_s(t,x):=(t+s,x)$ denote time reflection 
and time translation respectively. Let $w_k\in {\cal W}_{N_k},\; k=1,..,T,\; 
F_k\in L^{N_k}, \; t_k\in \mathbb{R}$ and 
\ba \label{4.16}
W_{(t_1,F_1),..,(t_T,F_T)} &:=& w_T(\Phi(t_T,F_T))...w_1(\Phi(t_1,F_1))
\nonumber\\
R\cdot W_{(t_1,F_1),..,(t_T,F_T)} &=& W_{(-t_1,f_1),..,(-t_T,f_T)},
\nonumber\\
U(s)\cdot W_{(t_1,F_1),..,(t_T,F_T)} &=& W_{(t_1+s,F_1),..,(t_T+s,F_T)}
\ea
Then we have the following conditions on the generating functional
\be \label{4.17}
\mu(W{(t_1,F_1),..,(t_T,F_T)})
\ee
I. Time Reflection invariance:\\
\be \label{4.18}
\mu(W_{(-t_1,F_1),..,(-t_N,F_T)})=
\mu(W_{(t_1,F_1),..,(t_T,F_T)})
\ee
II. Time translation invariance\\
\be \label{4.20}
\mu(W_{(t_1+s,F_1),..,(t_N+s,F_T)})=
\mu(W_{(t_1,F_1),..,(t_N,F_T)})
\ee
III. Time translation continuity\\
\be \label{4.21}
\lim_{s\to 0} \mu([W_{(t_1,F_1),..,(t_T,F_T)}]^\ast\;
W_{(t'_1+s,F'_1),..,(t'_{T'}+s,F'_{T'})})
=
\mu([W_{(t_1,F_1),..,(t_T,F_T)}]^\ast\;
W_{(t'_1,F'_1),..,(t'_{T'},F'_{T'})})
\ee
IV. Reflection positivity\\
Consider the vector space $V$ of the complex span of functions of the form 
$W_{(t_1,F_1),..,(t_T,F_T)}$ with $t_1,..,t_T>0$. Then for any 
$\Psi,\Psi'\in V$
\be \label{4.22}
<\Psi,\Psi'>:=\mu(\overline{\Psi}\; R\cdot\Psi'),\;\;
<\Psi,\Psi>\ge 0
\ee
Note that the stochastic process indexed by $\mathbb{R}\times L$ considers 
random variables $\Phi(t,f)$ at sharp points of time. It is often argued
that this index set provides an insufficient ``smearing'' in the time direction
and fails to cover interacting QFT at least in 3+1 spacetime dimensions
(in 1+1 and 2+1 dimensions, there are examples for which this works
\cite{60}). However, this argument rests on perturbative results as on 
3+1 dimensioonal Minkowski space so far no interacting QFT 
(obeying the Wightman axioms) has been rigorously constructed. It is still 
conceivable \cite{60a}
that in a non-perturbative construction of the theory, for which 
constructive QFT is designed, one can deal with fields at sharp time. 
One could of course be more general and consider stochastic processes indexed 
by some $L$ which now also includes smearing in the time direction and 
the formulation of reflection positivity will then constrain to elements 
of $L$ with positive time support, however, then the Wiener measure 
construction sketched below will not work. Our viewpoint is that this more 
general situation can be obtained from the sharp time construction because 
integrals of smearing functions with respect to time can be approximated 
by Riemann sums which in turn are nothing but integrals with respect to
sharp time smearing functions. \\
\\
At the moment it is rather unclear how and why $\mu,\Phi$ define a relativistic
QFT. This will become clear in the next subsection.

\subsection{Osterwalder-Schrader (OS) Reconstruction}
\label{s4.3}

The following abstract argument is standard \cite{20}. See \cite{13} for 
a proof adapted to the notation in this article.\\
\\
Due to reflection positivity (\ref{4.22}) defines a postive semi-definite 
sesqui-linear form on $V$. We compute its null space $N$ and complete the 
quotient of equivalence classes $V/N$ in the inner product (\ref{4.20}) 
to a Hilbert space. Given $\Psi\in V$ we denote its equivalence class 
$\Psi+N$ by $[\Psi]_\mu$ where we keep track of the measure dependence of 
the quotient construction. By construction the $D=[V]_\mu$ is dense in 
$\cal H$. 
Since the constant function $\Psi=1\in V$ we define 
a ``vacuum'' vector by $\Omega:=[1]_\mu$. Finally we define for $s\ge 0$
\be \label{4.23}
K(s)[\Psi]_\mu:=[U(s)\Psi]_\mu
\ee
The constraint $s\ge 0$ is due to the time support condition in the 
definition of $V$.
One must show that this is well defined (independent of the representative)
\cite{20}. By virtue of their definition (\ref{4.16}) the $U(s)$ form 
a one parameter Abelian group of operators $U(s) U(s')=U(s+s')$
on $L_2(S,d\mu)$. This 
implies that the $K(s)$ form a one parameter Abelian semi-group due to
the constraint $s\ge 0$ (again one must show that the definition 
is well-defined). Time translation continuity (\ref{3.21}) translates
into weak continuity of the semi-group. Furthermore, by time translation 
invariance \ref{4.20} the $U(s)$ define unitary, in particular bounded 
operators on $L_2(S,d\mu)$ which translates into the statement that the 
$K(s)$ form a contraction semi-group. Thus \cite{20} there exists a 
positive self-adjoint operator $H$, called ``Hamiltonian'' 
on $\cal H$ such that $K(s)=e^{-s H}$.
Obviously $K(s)\Omega=\Omega$, thus $\Omega$ is a ground state for $H$
which justifies the name ``vacuum''. \\
\\
This elegant argument is deceivingly simple. To actually compute the 
{\it Osterwalder-Schrader triple} $({\cal H}, \Omega, H)$ from 
$\mu$ and to relate it to the fields and Hamiltonian in terms of which 
one would construct the quantum theory using canonical quantisation 
is not clear yet. However, one can again use the following abstract argument
\cite{13}. Suppose that there is an Abelian $C^\ast-$algebra $\mathfrak{B}$
of bounded operators on ${\cal H}$ such that $\mathfrak{B}\Omega$ is dense
(the $C^\ast-$ norm is inherited from the uniform operator topology).
It is not difficult to show that this is always the case when ${\cal H}$ 
is separable which is the only case that we will consider in our application
to renormalisation, but it it also holds in many non-separable situations,
see appendix \ref{sb} for a proof.
Let $\Delta(\mathfrak{B})$ be its Gel'fand spectrum \cite{61} (which is a 
compact space) i.e. the space of all $^\ast$ homomorphisms 
$\phi:\; \mathfrak{B}\to \mathbb{C}$.
Then by Gel'fands theorem, $\mathfrak{B}$ can be thought
of as the space $C(\Delta(\mathfrak{B}))$ i.e. the continuous functions 
on the spectrum which is an Abelian $C^\ast-$algebra with respect to the 
sup-norm. The correspondence (Gel'fand isomorphism) 
is given by $\hat{b}(\phi)=\phi(b)$ for all $\phi\in \Delta(\mathfrak{B})$
and in fact this is an isometric isomorphism of $C^\ast-$algebras. Consider 
now the linear functional
\be \label{4.24}
\nu(\hat{b}):=<\Omega,b\Omega>
\ee 
which by construction is positive $\nu(|\hat{b}|^2)=||b\Omega||^2$. By the 
Riesz-Markov theorem \cite{58} there exists a (regular Borel) probability 
measure 
on $S':=\Delta(\mathfrak{B})$ which by abuse of notation we also 
denote by $\nu$ such that 
\be \label{4.25}
\nu(\hat{b})=\int_{\Delta(\mathfrak{B})}\; d\nu(\phi)\; \hat{b}(\phi)
\ee
That is to say, the Hilbert space $\cal H$ obtained from OS reconstruction 
can be thought of as $L_2(\Delta(\mathfrak{B}),d\nu)$ under the 
isomorphism $b\Omega\mapsto \hat{b}$, in particular $\Omega$ corresponds to 
the constant function equal to $1$. We thus have managed to cast $\cal H$ into 
the language of measure theory on the set $S'=\Delta(\mathfrak{B})$. 
The fields $\phi$ that come out of this 
construction are random variables indexed by some index set $L'$, that 
is, we have shown that we can always construct such a measure and a 
corresponding stochastic process. We think of the field $\phi$ as the 
spatial configuration fields underlying a canonical
quantisation approach.
A priori, however, it is not clear what $L'$ is altough it must be related 
in some way to $\mathbb{R}^+\times L$. In the case of free fields one can show 
that in fact one can choose $\mathfrak{B}$ in such a way that
$L'=L$ due to the quotient construction involved in $\cal H$
but even then it is a priori not clear how $\Phi(t,f)$ and $\phi(f),\; f\in L$ 
are related. Again in the case of free fields one shows that $\phi(f)$ 
can be thought of as $\Phi(0,f)$, the spacetime field at sharp time zero.
However, in general the relation between the 
stochastic processes underlying $\Phi$ and $\phi$ may be more complex. 
In any case, the operator $H$ translates in this 
language into the operator   
\be \label{4.26}
\hat{H} \hat{b}:=\widehat{H b \Omega}
\ee

\subsection{Feynman-Kac-Trotter-Wiener (FKTW) construction}
\label{s4.4}

Given an OS triple $({\cal H}, \Omega, H)$ 
we saw at the end of the previous subsection that without
loss of generality we can assume that ${\cal H}=L_2(S',d\nu)$ where 
$\nu$ is a probability measure on $S$ equipped with a Borel $\sigma-$algebra
and that we are given a stochastic propcess $\phi(f),\; f\in L$ indexed by 
some index set $L$, at least when $\cal H$ is separable (which will be the 
case in our applications). Moreover, $\Omega=1$ in this presentation of 
$\cal H$ is cyclic for some $C^\ast-$algebra of functions on $S'$. We pick 
some set ${\cal W}_N,\; N\in \mathbb{N}_0$ of elementary functions 
$w\in {\cal W}_N$ 
subject to the conditions 1.-4. spelled out just after (\ref{4.6}) and 
for  $F=(f_1,..,f_N)\in L^N$ have $\phi(F)=
(\phi(f_1),..,\phi(f_N))\in (S')^N$
as well as
\be \label{4.27}      
w_F(\phi)=w(\phi(F))
\ee
Let now $T\in \mathbb{N}_0,\; t_1<t_2<..<t_T$ and $F_k\in L^{N_k},\; w_k\in 
{\cal W}_{N_k}$. We 
consider the expectation value functional  
\be \label{4.28}
<\Omega,w_{T,F_T}\; e^{-(t_T- t_{T-1})H}\; 
w_{T-1,F_{T-1}}\; e^{-(t_{T-1}- t_{T-2})H}\;...\;
e^{-(t_2- t_1)H}\;w_{1,F_1}\;\Omega>
\ee
Consider now a stochastic process $\Phi(s,f)$ indexed by $(s,f)\in 
\mathbb{R}\times L$ and the elementary functions
\be \label{4.28}
W_{(t_k,F_k)_{k=1}^T}(\Phi)=w_{T,F_T}(\Phi(t_N,.))\;..\;
w_{1,F_1}(\Phi(t_1,.))
\ee
Then the Wiener measure $\mu$, if it exists, evaluated on 
(\ref{4.28})
\be \label{4.29}
\mu(W_{(t_k,F_k)_{k=1}^T})
\ee
is supposed to equal (\ref{4.27}). The non-trivial question is why this 
should be the case, under which circumstances and how to construct 
$\mu$. For this we consider the integral kernel $K_\beta$ of the operator 
$e^{-\beta H},\; \beta>0$, that is, 
\be \label{4.30}
[e^{-\beta H}\psi](\phi)=:\int_S\; d\nu(\phi')\; 
K_\beta(\phi,\phi')\;\psi(\phi')
\ee 
Note the semigroup property  
\be \label{4.31}
\int_{S'}\; d\nu(\phi)\; K_{\beta_1}(\phi_1,\phi) K_{\beta_2}(\phi,\phi_2)
=K_{\beta_1+\beta_2}(\phi_1,\phi_2)
\ee
Define $S:=\prod_{t\in \mathbb{R}} S'$. For each $T\in \mathbb{N}_0$ consider 
$t_1<..<t_T$ and measurable sets 
$s'_{t_k} \subset S'$ and define the set function
\be \label{4.31}
\mu([\times_{k=1}^T\; s'_{t_k}]\;\times\;[\times_{t\not\in \{t_1,..,t_T\}} S'])
:=\int_{[S']^T} d\nu(\phi_1)...d\nu(\phi_T) 
\prod_{k=1}^T \chi_{s'_k}(\phi_k)\; \prod_{k=1}^{T-1}
K_{t_{k+1}-t_k}(\phi_{k+1},\phi_k)
\ee
It is not clear that this is a positive set function, but when it is, it is
called the Wiener measure generated by the OS triple. For sufficient criteria
for this property called Nelson-Symanzik positivity in the case of scalar 
fields, see \cite{62}. Basically one needs to show 
that matrix elements of $e^{-\beta H}$ between positive functions 
is positive. 
Note that for $s'_k=S'$ for all $k$ we get
\be \label{4.32}
\mu(S)
=<\Omega, e^{-(s_T-s_1)H}\Omega>=1
\ee
This shows that $\mu$ is a probability measure on 
$S$. For quantum mechanical Schr\"odinger Hamiltonians one can use the 
Trotter product formula and the Wienner measure of the heat kernel 
to prove positivity \cite{63}  (Feynman-Kac formula).\\
\\
One can now show the following \cite{13}:
\begin{Theorem} \label{th4.1} ~\\
i. Suppose that OS data $({\cal H},H,\Omega)$ are given and that the 
corresponding Wiener measure $\mu$ exists. Then $\mu$ is an OS measure 
and its OS reconstruction reproduce the given OS data up to unitary 
equivalence.\\
ii. Suppose that an OS measure $\mu$ is given thus producing OS data 
$({\cal H}, \Omega, H)$. Then the corresponding Wiener measure exists 
and reproduces $\mu$ up to equivalence of measure spaces.
\end{Theorem}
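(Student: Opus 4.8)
The plan is to show that the FKTW construction of \S\ref{s4.4} and the OS reconstruction of \S\ref{s4.3} are mutually inverse, and in both parts the entire argument can be reduced to matching the generating functionals on the elementary functions $W_{(t_1,F_1),\ldots,(t_T,F_T)}$. This reduction is legitimate because, by conditions 3.\ and 4.\ listed after (4.6), these moments determine the cylinder measures $\tilde\mu_I$ and their complex span is dense in the relevant $L_2$ space; hence an identity of moments upgrades to an identity (resp.\ equivalence) of measures. For part i.\ the substance is then to verify that the set function (4.31) built from $({\cal H},H,\Omega)$ is a genuine probability measure obeying the four minimal OS axioms and that its reconstruction returns the input; for part ii.\ the substance is existence of the Wiener measure together with the fact that its moments are \emph{forced} to coincide with those of the given $\mu$.

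For part i.\ I would first record that the moments of (4.31) are exactly the expectation values (4.28), which depend on the times only through the differences $t_{k+1}-t_k$. Axiom II.\ (time translation invariance) is then immediate, and axiom III.\ (continuity) follows from strong continuity of $s\mapsto e^{-sH}$ on $[0,\infty)$, valid because $H\ge 0$ is self-adjoint. Axiom I.\ (reflection invariance) follows from self-adjointness of each $e^{-\beta H}$ together with the $^\ast-$algebra/reality structure of the elementary functions (condition 1.\ after (4.6)): reversing the time order in (4.28) and taking adjoints returns the same number. The only non-formal axiom, and the \emph{main obstacle} of part i., is IV.\ (reflection positivity). To a positive-time elementary function $\Psi=W_{(t_1,F_1),\ldots,(t_T,F_T)}$ with $0<t_1<\cdots<t_T$ I associate the vector
\be
v(\Psi):=K(t_1)\,\hat w_{1,F_1}\,K(t_2-t_1)\,\hat w_{2,F_2}\cdots K(t_T-t_{T-1})\,\hat w_{T,F_T}\,\Omega\in{\cal H},
\ee
where $K(s)=e^{-sH}$ and $\hat w_{k,F_k}$ is multiplication by $w_{k,F_k}$ in the realisation ${\cal H}=L_2(S',d\nu)$ of \S\ref{s4.3}. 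Using the semigroup property (4.31) of the kernel, self-adjointness of $K(s)$, and $\hat w_{k,F_k}^\ast=\widehat{\overline w}_{k,F_k}$, a direct computation gives
\be
\mu(\overline{\Psi}\;R\cdot\Psi)=\langle v(\Psi),v(\Psi)\rangle_{\cal H}\ge 0,
\ee
because the Euclidean propagator across the reflection plane $t=0$ factorises there and folds the two mirror halves of the correlator into $v(\Psi)$ and its adjoint. Extending sesquilinearly to $V$ yields IV.

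It then remains to identify the OS reconstruction of this $\mu$ with the input triple. The map $[\Psi]_\mu\mapsto v(\Psi)$ is well defined and isometric by the polarised form of the identity above (which also computes the inner product (4.22)); it sends $\Omega'=[1]_\mu$ to $v(1)=\Omega$; and it intertwines the reconstructed semigroup, since $U(s)$ shifts all times up by $s$ one has $v(U(s)\Psi)=K(s)\,v(\Psi)$, so that $K'(s)=e^{-sH'}$ is carried to $e^{-sH}$ and hence $H'$ to $H$. Surjectivity follows because the range contains $\hat w_{T,F_T}\Omega$ in the limit $t_1\to 0^+$, and the span of $\{\hat w_F\Omega\}$ is dense by cyclicity of $\Omega$ for the Abelian algebra generated by the elementary functions (condition 4.\ after (4.6) and appendix \ref{sb}). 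Thus the reconstructed triple equals $({\cal H},H,\Omega)$ up to this unitary, proving i.

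For part ii.\ I would run the same bookkeeping in reverse. Given an OS measure $\mu$, OS reconstruction produces $({\cal H},\Omega,H)$ with $K(s)[\Psi]_\mu=[U(s)\Psi]_\mu$, $\Omega=[1]_\mu$, and the $\hat w$ acting by multiplication. Forming the Wiener set function (4.31) from these data, its moments are (4.28); unwinding the reconstruction by inserting the elementary functions at time zero, propagating via $K(t_{k+1}-t_k)$ realised as $U(t_{k+1}-t_k)$ inside the equivalence class, and invoking time translation invariance of $\mu$, one shows these equal the original moments $\mu(W_{(t_1,F_1),\ldots,(t_T,F_T)})$. Hence the set function coincides with the bona fide probability measure $\mu$, so it is automatically positive and $\sigma-$additive, i.e.\ Nelson--Symanzik positivity is \emph{inherited} rather than proved separately, and the Wiener measure exists and reproduces $\mu$. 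The genuinely delicate point, common to both directions and already flagged at the end of \S\ref{s4.3}, is the identification of the \emph{spatial} process $\phi$ obtained from reconstruction with the time-zero restriction of the \emph{spacetime} process $\Phi$; this holds only after allowing the stated equivalences (unitary equivalence of triples in i., equivalence of measure spaces in ii.), which is precisely why the theorem is phrased with those qualifiers.
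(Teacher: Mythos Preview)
The paper does not supply its own proof of this theorem; it writes ``One can now show the following \cite{13}'' and defers the details entirely to reference [13]. Your argument is the standard one and is correct: the central device is the map $v$ from positive-time elementary functions into $\cal H$, and the identity $\mu(\overline\Psi\,R\Psi')=\langle v(\Psi),v(\Psi')\rangle_{\cal H}$ simultaneously establishes reflection positivity (axiom IV) and shows that $[\Psi]_\mu\mapsto v(\Psi)$ is a well-defined isometry intertwining the semigroups, which is exactly what is needed for part i. For part ii.\ you correctly note that positivity of the Wiener set function is \emph{inherited} from $\mu$ once the moments are shown to agree, so Nelson--Symanzik positivity need not be verified independently. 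You also correctly isolate the one genuinely delicate issue---the identification of the reconstructed spatial process with the time-zero restriction of the spacetime process---and correctly observe that the ``up to unitary equivalence'' and ``up to equivalence of measure spaces'' qualifiers are there precisely to absorb this freedom, which the paper itself flags at the end of \S\ref{s4.3}. Given that the paper's entire setup in \S\ref{s4.3}--\ref{s4.4} (the semigroup kernel, the equivalence-class reconstruction, the abstract Gel'fand realisation) is arranged to make exactly this argument run, your proof is almost certainly what reference [13] does as well.
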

Here measure spaces $(S_j,B_j,\mu_j);\; j=1,2$ are called equivalent if there 
exists a bijection $F:\; S_1\to S_2$ such that both $F,F^{-1}$ are measurable
and such that $\mu_1=\mu_2\circ F$. The reason for why we generically only
reproduce an equivalent and not an identical starting point lies in the 
large freedom in the choice of the stochastic process $\phi$ when performing
the OS reconstruction step.

\section{Renormalisation}
\label{s5}

\subsection{Motivation}
\label{5.1}

Our motivation for renormalisation comes from the current state of affairs
with respect to the definition of the quantum dynamics in LQG as outlined 
in section \ref{s3}. In that 
case the Hilbert space ${\cal H}=L_2(S',d\nu)$ is precisely of the form 
we envisage here. Moreover we have a vacuum $\Omega$ for a candidate 
Hamiltonian $H$ that, however, we are not sure whether all steps of the 
quantisation process that led to $H$ are justified. Namely we have 
defined $H$ as $H_\gamma$ on certain mutually orthogonal 
subspaces ${\cal H}_\gamma$ preserving it using a choice of discretisation 
of the classical continuum expression which has naively the correct 
dequantisation if the graph $\gamma$ fills the spatial manifold 
$\sigma$ sufficiently densely. The definition of elementary 
functions in (\ref{4.14}) precisely reproduces the SNWF and thus the 
spatial connection defines a stochastic process indexed by graphs.\\
\\
As already mentioned at the emnd of 
section \ref{s3} we would like to take a fresh look 
at the problem. As usual in constructive QFT, if $\sigma$ is not already 
compact, we replace it with a compact manifold $\sigma_R$ where $R$ is 
an infrared cut-off (IR) which we remove in the end $R\to\infty$ 
(thermodynamic limit). In order not to clutter the notation, the dependence 
on $R$ of all considerations that follow will be suppressed.
Next we do not consider all finite graphs $\gamma$ 
(taking all finite graphs leads to a non-separable Hilbert space) but only a 
controllable countable family $\cal M$ therein which however is such that the 
discretised 
classical variables (configuration and momentum fields) in terms of which 
we perform the quantisation separate the points of the classical phase space 
when all the graphs in $\cal M$ are at our disposal. 
The set $\cal M$ is supposed to partially ordered and directed.
The motivation for doing so stems from the spatial diffeomorphism invariance
of the classical LQG Hamiltonian: The algebraic form of the Hamiltonian 
discretised on diffeomorphic graphs is identical. This is precisely 
the starting point of the algebraic quantum gravity proposal \cite{55}
where it was emphasised that one can quantise gravity in terms of abstract
graphs which gain their physical meaning only after choosing an embedding
supplied, for anstance, by a semiclassical state.

To have some intuitive picture in mind consider 
$\sigma=\mathbb{R}^3$ with toroidal compactification $\sigma_R=T^3$ (where 
each direction has length $R$ with respect to the Euclidian background metric
on $\mathbb{R}^3$ and with periodic boundary conditions installed) 
and $\Gamma$ the set of all finite graphs $\sigma_R$ of cubic topology.
This is still an uncountable set which we now restrict to a countable one 
as follows. Each element of $\Gamma$ is uniquely 
labelled by $M\in \mathbb{N}$ where 
$M^3$ is the number of vertices of the graph (one could generalise 
this and have different numbers of vertices in each direction). We pick 
once and for all a coordinate system and locate the vertices of $\gamma_M$
at the points 
\be \label{5.1}
m \epsilon_M,\; m\in \mathbb{Z}_M^3,\; \mathbb{Z}_M=\{0,1,..,M-1\},\; 
\epsilon_M=\frac{R}{M}
\ee
where the edges of the graph are straight lines into the coordinate directions 
between the vertices.
We equip ${\cal M}:=\mathbb{N}$ with the following partial order: $M<M'$ iff 
$\frac{M'}{M}\in \mathbb{N}$. Note that this implies $\gamma_M\subset 
\gamma_{M'}$ since 
\be \label{5.2}
m\epsilon_M=m\frac{M'}{M}\epsilon_{M'}=:m' \epsilon_{M'}
\ee
with $m'\in \mathbb{Z}_{M'}^3$ and because the edges of the 
graphs are straight lines into the coordinate directions. 
This is certainly not a linear order because 
not all natural numbers are in relation but still 
equips $\Gamma$ with a direction:
Given $M,M'$ take for instance $M^{\prime\prime}=MM'$ then 
$M,M'<M^{\prime\prime}$ (more efficiently take $M^{\prime\prime}$ as 
the least common multiple). It is clear that for $M$ sufficiently large 
discretised phase space variables obtained by integrating continuum 
variables over $0-$ or $1-$ dimensional subsets of $\gamma_M$ (vertices or 
edges) or by integrating 
momentum variables over $3-$ or $2-$ dimensional subsets of the cell complex 
corresponding to $\gamma_M$ (faces and cubes)
will separate the points of the continuum phase space. 
Instead of $\gamma_M$ one could also use the cubic cell complex 
$\gamma_M^\ast$ dual to $\gamma_M$ defined by saying that 
the barycentres of the cubes of 
$\gamma_M^\ast$ coincide with the vertices of $\gamma_M$. However, in the 
spirit of economy we will not use the additional structure $\gamma_M^\ast$ 
in what follows.

\subsection{Discretisation of phase space}
\label{s5.2}

In canonical quantisation we start with a continuum phase space coordinatised 
by configuration fields $\phi^J$ and canonically conjugate momentum fields 
$\pi_J$ in terms of which the classical continuum Hamiltonian $H$ is formulated. 
Here the index $J$ corresponds to an internal symmetry and is typically
Lie algebra valued.
We now would like to consider a discretisation of both the phase space and 
the Hamiltonian, one for each lattice $M$, while keeping track of how these
fields $\phi^J_M,\pi^M_J$ are related to the continuum fields $\phi^J,\pi_J$. 
The idea for how to do this stems from the observation that by construction of 
generally covariant field theories the fields 
$\phi^J,\pi_J$ are {\it dual} 
in the sense that there is a natural bilinear 
form $<\pi,\phi>'_{I'\times K'}:=\sum_J\; <\pi_J,\phi^J>_{I\times K}$ 
on the phase space (usually a cotangent bundle
$T^\ast K'$) $I'\times K'$ of 
momentum and configuration 
fields respectively
where $<.,.>$ is spatially diffeomorphism invariant. Note that $<.,.>',\;<.,.>$
just differ by tracing over the internal directions in field space i.e. 
$I'=I^d, K'=K^d$ where $d$ is the number of internal directions in field space.

For instance, the momentum of a scalar field is 
geometrically a scalar density of weight one, so that 
$<\pi,\phi>'=<\pi,\phi>:=\int_\sigma\; d^3x\; \pi(x)\; \phi(x)$. 
The momentum of a G connection 
is geometrically a Lie algebra valued vector field density so that
$<\pi,\phi>'=\sum_J \int_\sigma \; d^3x \; \pi^a_J(x) \phi_a^J(x)$
This also 
holds for higher $p-$forms as they occur in some supergravity theories
as well as for (standard model or Rarita-Schwinger) fermions.
Note that the bilinear 
form is in general not invariant under the internal symmetry group but
this will not be important for what follows. The fact that $\pi,\phi$ are 
conjugate is the statement that their canonical brackets are 
\be \label{5.3}
\{<\pi,k'>',<i',\phi>'\}=<i',k'>'
\ee
for all $(i',k')\in I'\times K'$.    

The fact that the bilinear form $<.,.>$ is naturally at our disposal 
motivates a natural choice for the index set $L,L^\ast$ of the stochastic 
process $\phi,\pi$. Namely we choose $L$ to be a certain distributional 
extension of $I$ and likewise $L^\ast$ as a 
certain distributional extension of $K$. These extensions should be such 
that $<i,k>$ remains well defined for $i\in L, k\in L^\ast$.  
For instance, for a scalar field we may choose $L$ as the set of $\delta$ 
distributions $f_p(x)=\delta_p(x)$ with support at single points 
$p\in\sigma$ and $L^\ast$ as the 
set of characterisic functions $g^R(x)=\chi_R(x)$ 
of connected $D-$dimensional submanifolds $R$ of 
$\sigma$. For a compact $G-$connection we can choose $L$ as the set 
of form factors 
$f^a_c(x):=\int_c dy^a \delta(x,y)$ with support 
on (piecewise analytic) curves $c$.  
For $L^\ast$ we would consider the set of dual form factors of the 
form $g^S_a(x):=1/(D-1)! 
\int_S \epsilon_{ab_1 .. b_{D-1}} dy^{b_1}\wedge
.. dy^{b_{D-1}} \delta(x,y)$ with support on (piecewise analytic) $D-1$ 
submanifolds $S$. We may also have opportunity to consider their 
Lie algebra valued versions  
$f^{aJ}_c(x) =\tau^J f^a_c(x)\in L', \; g^S_{aJ}(x)=\tau_J g^S_a(x)\in 
(L')^\ast$
were $\tau^J,\tau_J$
are dual bases 
in the defining representation of the Lie algebra 
of $G$ such that ${\rm Tr}(\tau^J \tau_K)=\delta^J_K$.
Note that we deliberatively do not make use of the 
fact that these distributions are elements of vector spaces. This is because 
we would like to have a uniform description of both linear and non-linear
theories. In the case of linear theories the description can be significantly
simplified as we have done in \cite{13}.  
  
The connection to section \ref{s4.2} is then as follows: For each 
$f\in L, \phi\in K$ we consider a map $(f,\phi)\mapsto \phi(f) \in \tilde{S}$.
For linear theories one usually takes $\tilde{S}=U(1)$ and for a G gauge theory 
one takes $\tilde{S}=G$. The object $\phi(f)$ exploits the existence 
of the natural bilinear 
form $<.,.>$. For instance for a scalar field one considers 
$\phi(f_p)=\exp(i<f_p,\phi>)$ while for a G connection we consider the holonomy 
$\phi(f_c)={\cal P}\exp(<f_c,\phi^J>\tau_J)$. For each 
$N\in \mathbb{N}$ we consider $F=(f_1,..,f_N)\in L^N$ and define 
$\phi(F)=(\phi(f_1),..,\phi(f_N))\in \tilde{S}^N$. The space of elementary 
functions 
${\cal W}_N$ consists of maps $\tilde{S} \to \mathbb{C}$ subject to the 
conditions 
listed at the beginning of section \ref{s4.2}. We may generate ${\cal W}_N$ 
from monomials 
labelled by matrix element functions of finite dimensional unitary 
representations of $\tilde{S}$, see (\ref{4.14}).\\
\\
For each $M\in\mathbb{N}$ let $L_M$ be the space of discrete functions 
on the lattice consisting of 
$M^D$ points with values in 
$\mathbb{R}^{\mathfrak{t}}$ where $\mathfrak{t}$ is tensorial number of 
configuration
(or momentum) degrees of freedom per spatial point ($\mathfrak{t}=1$ for 
scalar fields,
$\mathfrak{t}=D$ for a G Yang-Mills theory in $D+1$ spacetime dimensions etc).
That is, an element $l_M\in L_M$ assigns to each point $m\in \mathbb{Z}_M^D$
a vector in $\mathbb{R}^\mathfrak{t}$. The space $L_M$ carries an auxiliary
real Hilbert space sructure ($L_M$ is of course a 
finite dimensional vector space), e.g. for a G Yang-Mills theory
\be \label{5.3}
<l_M,\tilde{l}_M>_{L_M}=\sum_{m\in \mathbb{Z}_M^D}\;\sum_{a=1}^{\mathfrak{t}}\; 
l_M(m,a)
\;\tilde{l}_M(m,a)
\ee
for any $l_M, \tilde{l}_M\in L_M$ and we wrote $[l_M]^a(m)=:l_M(m,a)$. 
\begin{Definition} \label{def5.1} ~\\
A discretisation of the continuum phase space $I\times K$ 
subordinate to $M\in \mathbb{N}$ is a pair of linear maps 
\be \label{5.4}
I_M:\; L_M \to L;\; K_M:\; L_M \to L^\ast
\ee
with the following properties:\\
i. For any $l_M, l'_M \in L_M$ 
\be \label{5.5}
<I_M l_M, K_M l'_M>_{I\times K}=<l_M,l'_M>_{L_M}
\ee
That is to say $I'_M K_M=K'_M I_M ={\rm id}_{L_M}$ where 
$I'_M:\; I \to L_M,\; K'_M:\; K\to L_M$ are the dual maps defined 
by 
\be \label{5.6}
<I_M l_M, \phi>_{I\times K}=<l_M,I'_M \phi>_{L_M},\;
<\pi, K_M l_M>_{I\times K}=<K'_M \pi,l_M >_{L_M},\;
\ee
ii. For any $M<M'$ define the {\it injection  maps}
\be \label{5.7}
I_{M M'}:=K_{M'}' I_M;\;\;K_{M M'}:=I_{M'}' K_M:\;
L_M \to L_{M'}
\ee
Then we require
\be \label{5.8}
I_{M'} I_{MM'}=I_M,\;\; K_{M'} K_{MM'}=K_M
\ee
\end{Definition}
To see how this gives rise to discretised configuration and momentum variables 
let $\delta_M^{m,a},\;\delta^M_{m,a}\in L_M$ with 
$m\in \mathbb{Z}_M^D,\; a=1,..\mathfrak{t}$
be the Kronecker functions 
$[\delta_M^{m,a}]_b(\tilde{m})
:=\delta^a_b\;\delta_{m,\tilde{m}}$ and $[\delta^M_{m,a}]^b(\tilde{m})
:=\delta^b_a\delta_{m,\tilde{m}}$. Then the following functions 
on the continuum phase space 
\be \label{5.9}
(\pi_M)^a_J(m):=<\pi_J, K_M \delta_M^{m,a}>_{I\times K}, 
(\phi_M)_a^J(m):=<I_M \delta_M^{m,a},\phi^J>_{I\times K} 
\ee
enjoy canonical brackets 
\be \label{5.10}
\{(\pi_M)^a_J(m),(\phi_M)_b^K(\tilde{m})\}=\delta^a_b\; \delta^K_J\;
\delta_{m,\tilde{m}}
\ee
where the first condition (\ref{5.5}) was used. Thus, (\ref{5.5}) makes sure 
that the discretisations (\ref{5.9}) enjoy canonical brackets, so we 
call (\ref{5.5}) the {\it symplectomorphism property}.
The motivation for the second condition (\ref{5.8}) will become clear only 
later, however we note that it implies that for all $M<M'<M^{\prime\prime}$
\be \label{5.11}
I_{M' M^{\prime\prime}} I_{M M'}
=K'_{M^{\prime\prime}} [I_{M'} I_{MM'}]=I_{M M^\prime\prime}
\ee
which we thus call {\it cylindrical consistency property}. Likewise
$K_{M'M^{\prime\prime}} K_{MM'}=K_{MM^{\prime\prime}}$. It says that 
injecting a function into the continuum is independent from which resolution 
scale $M$ this is done.

Finally we will impose a further restriction on the maps $I_M,K_M$ which 
amounts to a convenient choice of normalisation and thus is called {\it 
normalisation property}. Namely we require that for all $M<M'$ the 
map $I_{M M'}:\; L_M\to L_{M'}$ restricts to $B_M\to B_{M'}$ where $B_M$
is the set of functions on $\mathbb{Z}_M^D$ with values in the {\it bit space}
$\{0,1\}^{\mathfrak{t}}$. This condition is only necessary in the non 
Abelian case and there avoids overcounting.
   
We note that (\ref{5.8}) defines elements $\pi_M=K_M' \pi,\phi_M=I_M' \phi$ 
of $L_M^{d\mathfrak{t}}$ that
we can now use to try to define a discretisation 
$H_M=H_M[\{(\pi_M)^a_j(m),(\phi_M)_a^j(m)\}_{a,j,m}]$ 
of the Hamiltonian $H=H[\pi,\phi]$.
For instance, if the Hamiltoinian depends only quadratically on the fields then
one may try (including discretisations of spatial derivatives and some 
spatial averages)
\be \label{5.12}
H_M:=H[\pi=I_M \pi_M,\phi=K_M \phi_M]
\ee
For interacting Hamiltonians, more sophisticated approximations must be used.
Certainly the expression for $H_M$ is in general plagued by a large amount
of discretisation ambiguity beyond the choice of discretised variables. 
On the other hand, the fact that 
$\pi_M=K'_M \pi,\;\phi_M=I_M' \phi$ are conjugate will be convenient when 
constructing $H_M$ and it is efficient to construct them motivated by the 
naturally available bilinear form $<.,.>'$ on the phase space.

To see that there are non-trivial examples for such maps, consider a scalar
field in $D$ spatial dimensions compactified on a torus with Euclidian 
coordinate length $R$ in all directions. Then (recall $\epsilon_M=R/M$)
\be \label{5.13}
(I_M l_M)(x):=\sum_{m\in \mathbb{Z}_M^D} l_M(m) \delta_{m\epsilon_M}(x),\;\;
(K_M l_M)(x):=\sum_{m\in \mathbb{Z}_M^D} l_M(m) \chi_{m\epsilon_M}(x)
\ee
where 
\be \label{5.14}
\chi_{m\epsilon_M}(x)=\prod_{a=1}^D \chi_{[m^a\epsilon_M,(m^a+1)\epsilon_M)}(x)
\ee
where the latter denotes the characteristic functions of left closed -- right 
open intervals. This clopen interval structure is very important in order that 
(\ref{5.5}) and (\ref{5.8}) are satisfied \cite{13}. Similar constructions work 
for gauge fields, see appendix \ref{sa}. 
Note that we changed here the notation as compared to 
\cite{13}: The maps $I_M, E_M$ used there are called here $K_M, I'_M$ 
respectively. The motivation for this change of notation is to make it manifest
how much of the structure is in fact already canonically provided by the 
structure of the classical theory. \\
\\
Given the lattice in $D$ spatial dimensions labelled by $M\in \mathbb{N}$
we consider in general $N=M^D \mathfrak{t}$ degrees of freedom 
$\phi(I_M l_M)=:\phi_M(l_M):=\{\phi(I_M l_M^{m,a})\}_{m\in \mathbb{Z}_M^D,\;
a=1,..,\mathfrak{t}}\in \tilde{S}^N$ where $l_M^{m,a}=l_M \; \delta_M^{m,a}$ and 
$l_M$ is restricted to the subset $B_M\subset L_M$ of functions 
$\mathbb{Z}_M^D\to F_2^{\mathfrak{t}}$ where $F_2=\{0,1\}$ is the field 
in two elements ({\it bit space}). Thus $l_M(m,a)\in \{0,1\}$ is restricted 
to the information whether the degree of freedom $\phi(I_M l_M^{m,a})$ is 
excited or not. This is justified because 1. the missing information about 
the strength of the excitation is encoded in the representation label, see 
below and 2. because the maps $I_{M M'}$ restrict to maps $B_M\to B_{M'}$ by 
assumption.  

The space of elementary functions ${\cal W}_M$ 
on the lattice labelled by $M$ 
is then generated by 
\be \label{5.14}
w^M_{j,n,\tilde{n}}(\phi_M(l_M))=
w^M_{j,n,\tilde{n}}(\phi(I_M l_M))=
\prod_{m,a}\; 
[\pi_{j_{m,a}}(\phi(I_M l_M^{m,a}))]_{n_{m,a},\tilde{n}_{m,a}}
\ee
Here $j_{m,a}$ labels an irreducible representation 
$\pi_{j_{m,a}}$ of $G$ (one from 
each equivalence class), $d_{j_{m,a}}$ is its dimension, and 
$[\pi_{j_{m,a}}(.)]_{n_{m,a},\tilde{n}_{m,a}}$ denote its matrix elements 
with $n_{m,a},\tilde{n}_{m,a}\in \{1,..,d_{j_{m,a}}\}$.

To see how (\ref{5.14}) interacts with the map $I_{MM'}$ in the case 
of non-Abelian gauge theorie we note 
that the cylindrical consistency property of $I_{MM'}$ implies 
\be \label{5.15}
w^M_{j,n,\tilde{n}}(\phi_M(l_M))
=w^M_{j,n,\tilde{n}}(\phi(I_M l_M))
=\sum_\alpha  
w^{M'}_{j',n'_\alpha,\tilde{n}'_\alpha}(\phi_{M'}  I_{MM'} l_M))
\ee  
where the notation is as follows (see appendix \ref{sa}): 
$j'_{m' a}=[I_{MM'} l_M] (m',a)\; j_{[m'/M' M],a}$ where $[.]$ denotes the 
Gauss bracket, $n'_{m',a}=n_{m,a}$ if $m'=M'/M m$,    
$\tilde{n}'_{m',a}=\tilde{n}_{m,a}$ if $m^{\prime a}+1=M'/M (m^a+1), 
m^{\prime b}=M'/M m^b; b\not=a$, and otherwise the sum over $\alpha$ denotes 
the sum over all $\tilde{n}'_{m',a}=n'_{m'+\delta_a,a}\in 
\{1,..,d_{j'_{m',a}}\}$ with $[\delta_a]^b:=\delta_a^b$ that arise 
by writing 
the holonomy along the edge labelled by $m\in \mathbb{Z}_M^D,a=1,..,D$ 
as products of 
holonomies along edges labelled by $m'\in \mathbb{Z}_{M'}^D,a$.   

In general therefore we see that for any generating function 
$w^M\in {\cal W}_M$ we have for all $M<M'$
\be \label{5.16}
w^M(\phi_M(l_M))=\sum_\alpha z_\alpha\; w^{M'}_\alpha(\phi_{M'}(I_{MM'} l_{M'}))
\ee
where the sum over $\alpha$ involves a finite, unique set of generating 
functions $w^{M'} \in {\cal W}_{M'}$ and $z_\alpha$ are certain, definite  
complex numbers. Similar statements then  of course hold for the stochastic 
process labelled by $\mathbb{R}\times L$ and for the functions 
\be \label{5.17}
W^M(\Phi_M(t^{(1)},l_M^{(1)}),..,\Phi_M(t^{(T)},l_M^{(T)}))=
w^M_T(\Phi_M(t^{(T)},l_M^{(T)}))...  
w^M_1(\Phi_M(t^{(1)},l_M^{(1)}))
\ee

\subsection{Hamiltonian Renormalisation}
\label{s5.3}

Abstracting from the concrete lattice implementation and field content 
above we are 
in the following situation: There is a partially ordered and directed 
label set ${\cal M}$ and for each $M\in {\cal M}$ we have a map 
$I_M: B_M \to L^{N(M)}$ where $L$ is the index set of the stochastic process
$\phi$, $N(M)\in \mathbb{N}$ is the number of elements of $L$ in the image of 
$I_M$
and $B_M=\{0,1\}^{N(M)}$. Then $\phi_M(l_M):=\phi(I_M l_M)\in\tilde{S}^{N(M)}
=:\tilde{S}_M$ 
and we have a generating set of elementary functions 
$w^M: \tilde{S}^{N(M)}\to \mathbb{C}$.    

Suppose that for each $M\in {\cal M}$ we have 
discretised the system somehow as sketched above and picked some OS 
triple $({\cal H}^{(0)}_M, \Omega^{(0)}_M,\; H^{(0)}_M)$ with 
${\cal H}^{(0)}_M=L_2(d\nu^{(0)}_M,\tilde{S}_M)$. That is to say, we 
have a stochastic process $\{\phi_M(l_M)\}_{M\in {\cal M}}$ 
indexed by $B_M$ and probability 
measures $\nu^{(0)}_M$ on $\tilde{S}_M$. The Hamiltonian $H^{(0)}_M$ preserves 
${\cal H}^{(0)}_M$ and annihilates the unit 
vector $\Omega_M^{(0)}\in {\cal H}^{(0)}_M$ which is cyclic. We 
consider a space of elementary functions ${\cal W}_M$ such that 
in particular the $w^M(\phi_M(l_M))\Omega^{(0)}_M;\; w^M\in {\cal W}_M, 
l_M\in B_M$ lie dense 
in ${\cal H}^{(0)}_M$. 

Using the Feynman-Kac-Trotter-Wiener (FKTW) construction, we obtain a family of 
OS measures $\mu^{(0)}_M$ on $S_M=\prod_{t\in \mathbb{R}} \tilde{S}_M$ 
which 
can be probed using a stochastic process $\Phi_M(t,l_M)$ labelled by 
$\mathbb{R}\times B_M$. This measure family $\{\mu^{(0)}_M\}_{M\in {\cal M}}$
will generically not be cylindrically consistent, and therefore does not 
define a continuum measure $\mu$ because of the discretisation ambiguities 
involved in the construction of the $H^{(0)}_M$ which determines 
$\mu^{(0)}_M$. If it was then we would have for $w_1,..,w_T\in {\cal W}_M$
\be \label{5.15}
\mu(w_T(\Phi(t_T,I_M l^T_M))...w_1(\Phi(t_1,I_M l^1_M))
=\mu_M(w_T(\Phi_M(t_T, l^T_M)..w_1(\Phi_M(t_1,l^1_M))
\ee
Using $I_M=I_{M'} I_{MM'}$ for $M<M'$ we would find the identity
\be \label{5.16}
\mu_M(w_T(\Phi_M(t_T, l^T_M)..w_1(\Phi(t_1,l^1_M))=
\mu_{M'}(w_T(\Phi_{M'}(t_T, I_{MM'}l^T_M)..w_1(\Phi_{M'}(t_1,I_{MM'} l^1_M))
\ee
called the condition of cylindrical consistency. 

As reviewed in section \ref{s3}, condition (\ref{5.16}) grants the existence 
of $\mu$ under rather generic conditions. The strategy (see also 
\cite{39}) is therefore to construct an iterative 
sequence of measure families $\mathbb{N}_0\ni n\mapsto 
\{\mu^{(n)}_M\}_{M\in {\cal M}}$ {\it called renormalisation (group) flow}
with initial family as above 
such that the fixed point family satisfies (\ref{5.16}). We refer to section
\ref{sc} for the reader interested in more notions of the renormalisation
group in the language of measure theory. 

The scheme that we will employ in fact does not make use of (\ref{5.16}) for all 
$M<M'$ but only $M'=p^n M$ where $p$ is a prime. The simplest choice is $p=2$
but we have tested the formalism also for $p=3,5$ \cite{13} and mixtures 
thereof in the case of free scalar QFT. 
This in fact does cover all possible $M$ because any natural number 
can be written as $k p^l;\; k,p$ relative prime but the fixed point family 
could depend 
on $k$. Of course one assumes that the fixed point family is independent 
of the choices of $p,k$ as an expression of universality as confirmed again
for simple systems \cite{13}. Thus we define as renormalisation flow
\be \label{5.17}
\mu^{(n+1)}_M(w_T(\Phi_M(t_T, l^N_M)..w_1(\Phi(t_1,l^1_M))
=\mu^{(n)}_{M'}(w_T(\Phi_{M'}(t_T, I_{MM'}l^N_M)..
w_1(\Phi_{M'}(t_1,I_{MM'} l^1_M))
\ee
for $M':=2M$. 

Having then obtained $\mu^\ast$ from cylindrically consistent projections 
$\mu^\ast_M$ we want to construct the OS triple 
$({\cal H}^\ast, \Omega^\ast,H^\ast)$ using OS reconstruction. 
However, while we are sure that $\mu^{(0)}_M$ is an OS measure for each 
$M$ by theorem \ref{th4.1}, we are a priori not granted that $\mu^{(n)}_M$ is an 
OS measure, i.e. that the flow preserves the OS measure class. This 
in fact shown in \cite{13}.
\begin{Theorem} \label{th5.1} ~\\ 
The renormalisation flow (\ref{5.17}) preserves the OS measure class 
and its fixed points define OS measures $\mu^\ast$.
\end{Theorem}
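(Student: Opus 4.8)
The plan is to verify that each of the four minimal OS axioms of Section~\ref{s4.2} is stable under a single renormalisation step, and then that the cylindrical consistency enforced at a fixed point lifts these axioms to the continuum measure. The organising observation is that the flow is implemented, at the level of cylinder functions, by the substitution $\Phi_M(t_k,l_M^k)\mapsto \Phi_{M'}(t_k,I_{MM'}l_M^k)$ with $M'=2M$, followed by evaluation against $\mu^{(n)}_{M'}$. Because $I_M=I_{M'}\,I_{MM'}$, this substitution is a literal identity of functions on the underlying configuration space, so it defines a linear map $\Xi_{MM'}$ on cylinder functions that is multiplicative, commutes with complex conjugation (using that each ${\cal W}_N$ is closed under conjugation), and, crucially, acts only on the spatial smearing labels while leaving all time labels $t_k$ untouched. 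Well-definedness of $\Xi_{MM'}$ on equivalence classes is exactly the cylindrical consistency built into the injection relations.

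From this last property axioms I and II are immediate, since the time reflection $R$ and the time translation $U(s)$ commute with $\Xi_{MM'}$:
\ba
\mu^{(n+1)}_M(R\cdot W) &=& \mu^{(n)}_{M'}(\Xi_{MM'}(R\cdot W))
=\mu^{(n)}_{M'}(R\cdot \Xi_{MM'}(W))
\nonumber\\
&=&\mu^{(n)}_{M'}(\Xi_{MM'}(W))=\mu^{(n+1)}_M(W),
\ea
and identically for $U(s)$, the invariance of $\mu^{(n)}_{M'}$ being used at the third equality. Axiom III follows because $\Xi_{MM'}$ is independent of $s$, so the limit $s\to 0$ may be taken inside the expectation and reduced to the time-translation continuity of $\mu^{(n)}_{M'}$.

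The substantive step is reflection positivity, which I expect to be the main obstacle. Here I would first show that $\Xi_{MM'}$ maps the positive-time subspace $V_M$ into $V_{M'}$: since the time labels are preserved, a function supported in $t>0$ is sent to one supported in $t>0$. Combining multiplicativity with the intertwining of $\Xi_{MM'}$ and both $R$ and conjugation gives, for any $\Psi\in V_M$,
\ba
\mu^{(n+1)}_M(\overline{\Psi}\,R\cdot\Psi)
&=&\mu^{(n)}_{M'}\big(\overline{\Xi_{MM'}\Psi}\,R\cdot(\Xi_{MM'}\Psi)\big)\ge 0,
\ea
the inequality being reflection positivity of $\mu^{(n)}_{M'}$ applied to $\Xi_{MM'}\Psi\in V_{M'}$. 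The difficulty is not the final inequality but the careful confirmation that $\Xi_{MM'}$ really is a $^\ast$-algebra map compatible with $R$; this rests precisely on the relations $I_M=I_{M'}\,I_{MM'}$ together with the closure and holonomy-factorisation properties of the generating functions ${\cal W}_N$ recorded after the measure-theoretic glossary of Section~\ref{s4.1}.

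Finally, for the fixed points I would note that the flow equation with $\mu^{(n+1)}_M=\mu^{(n)}_M=:\mu^\ast_M$ is exactly the cylindrical consistency condition, so $\{\mu^\ast_M\}_{M\in{\cal M}}$ is a consistent family and, by the Kolmogorov-type extension theorem of Section~\ref{s4.1}, defines a continuum probability measure $\mu^\ast$. Each $\mu^\ast_M$ is an OS measure, being the limit of the OS measures $\mu^{(n)}_M$, which start from the OS measure $\mu^{(0)}_M$ of Theorem~\ref{th4.1} and remain OS by the preservation just established; axioms I, II and IV survive the pointwise limit as closed equalities and inequalities, while III requires the extra care of controlling the $s\to 0$ limit uniformly in $n$ --- the delicate point of the fixed-point half. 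To conclude that $\mu^\ast$ itself is OS, I would use directedness of ${\cal M}$: any $\Psi\in V$ involves only finitely many spatial smearings, hence lies in $V_{M^\ast}$ for a common refinement $M^\ast$, so every continuum OS condition reduces, via cylindrical consistency, to the corresponding condition for $\mu^\ast_{M^\ast}$, which already holds.
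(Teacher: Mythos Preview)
Your proposal is correct and follows exactly the strategy the paper sketches in the sentence immediately after the theorem: the time operations defining an OS measure commute with the spatial coarse graining $\Xi_{MM'}$, and this commutation is what carries each axiom through one step of the flow. Your treatment of the fixed-point half is also the intended one---reduce every continuum OS condition to a finite-resolution one via directedness and cylindrical consistency---though your remark about controlling the $s\to 0$ limit uniformly in $n$ is unnecessary: at a genuine fixed point each $\mu^\ast_M$ is already an OS measure (no limit in $n$ is taken), so axiom~III holds directly without any uniformity argument.
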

Responsible for this result is the fact that the time operations that
define an OS measure commute with the spatial coarse graining operation.

Thus in  principle we can perform renormalisation in the measure (or path 
integral) language and then carry out OS reconstruction in order to 
find the continuum Hamiltonian theory that we are interested in.
On the other hand, the fact that FKTW construction and OS reconstruction are 
inverses of each 
other (theorem \ref{th4.1}) allows for the possibility to map the 
renormalistion flow of measures directly into a renormalisation flow of OS 
triples. In detail:\\
\\
Step 1: {\it Identifying the stochastic processes}\\
\\
We need to work out the null space of the reflection 
positive sesquilinear form determined by the measure $\mu^{(n)}_M$ 
from the vector space
$V_M$ of finite linear combinations of vectors of the form 
\be \label{5.18} 
w^M_T(\Phi_M(t_T,l^T_M))..w^M_T(\Phi_M(t_T,l^T_M))
\ee
with $t_T>t_{T-1}>..>t_1>0$ for $w^M_K\in {\cal W}_M$
(for coinciding points of time we can reduce the number of time steps by 
decomposing the products of elementary functions into linear combinations 
of those). 

The Hilbert space ${\cal H}^{n)}_M$ is then the completion of the span 
of equivalence classes $[\psi_M]_{\mu^{(n)}_M}, \; \psi_M\in V_M$, 
in particular the vacuum is $\Omega^{(n)}_M=[1]_{\mu^{(n)}_M}$. However,
the abstract description in terms of equivalence classes is not very 
useful in practice, 
rather we would like to describe them concretely in terms of stochastic 
processes and measures $\nu^{(n)}_M$ as outlined in section \ref{s4.3}. As 
the Hilbert spaces we deal with are separable, this is always possible, see 
appendix \ref{sb}, however, that construction does not directly 
refer to the spacetime stochastic process $\Phi$ we started from. The reason 
for why this happens is because of the appearing equivalence classes: To perform 
concrete calculations one will work with representatives which makes the 
construction non-canonical because the choice of such representatives 
is largely a matter of taste. In our setting, if $\mu^{(n)}_M$ is obtained 
by the FKTW construction from OS data, then of course $\Phi_M(0,.)=:\phi_M(.)$
is a possible choice.
However, in the renormalisation step, we are to deduce the OS data at 
resolution $M$ from the measure $\mu^{(n+1)}_M$ which was renormalised 
from $\mu^{(n)}_{M'},\; M<M'$ via (\ref{5.17}) and thus it is not a 
priori clear how the stochastic process $\phi_M$ can be chosen, in particular,
it is not clear whether it can be chosen as 
$\phi_M(.)=\Phi_{M'}(0,I_{MM'}.)$ 
which appears to be a natural choice. 

However, we are in a better situation than in the generic case because it 
is clear that ${\cal H}^{(n+1)}_M$ can be formulated 
in terms of the fields $\phi_M(t,l_M)=\Phi_{M'}(t,I_{MM'} l_M)$ for a minimal
number of distinguished times $t\in \tau$ where the set $\tau$ is determined 
by the quotienting process, see \cite{13} for an example.  
Alternatively one can view the fields 
$\phi_M(t,l_M),t\in \tau$
as fields at time zero $\tilde{\phi}_M(\tilde{l}_M)$ but in a larger space of 
fields, i.e. a stochastic process $\tilde{\phi}_M$ with a larger index set 
$\tilde{B}_M=\tau\times B_M$ that still lives on the lattice labelled 
by $M$ \cite{13}.
It follows that without further input, which will be provided below, 
$\phi_M(.)=\Phi_{M'}(0,I_{MM'}.)$ is in general 
a {\it compound field}, i.e. a 
{\it collective degree of freedom} composed out of the $\tilde{\phi}_M$ which 
together with its momentum $\pi_M$ is insufficient to define the Hamiltonian
$H^{(n+1)}_M$ which will generically depend on the larger set 
of variables $\tilde{\phi}_M$ and its 
conjugate momentun $\tilde{\pi}_M$.  \\
\\
Step 2: {\it Working out the flow of OS triples}\\
\\
Using the correspondence between the Wiener measures $\mu^{(n)}$ and the 
corresponding operator expressions we have for $t_T>..>t_1$
\ba \label{5.19}
&& \mu^{(n+1)}_M(w^M_T(\Phi_M(t_T,l^T_M)..w^M_1(\Phi_M(t_1,l^M_1))
\nonumber\\
&=&
<\Omega^{(n+1)}_M, w^M_T(\phi_M(l^T_M))\; e^{-(t_T-t_{T-1}) H^{(n+1)}_M}\; 
w^M_{T-1}(\phi_M(l^{T-1}_M))\;e^{-(t_{T-1}-t_{T-2}) H^{(n+1)}_M}\; ...\;
\nonumber\\
&& e^{-(t_2-t_1) H^{(n+1)}_M}\; 
w^M_1(\phi_M(l^1_M))\; \Omega^{(n+1)}_M>_{{\cal H}_M^{(n+1)}}
\nonumber\\
&=& \mu^{(n)}_{M'}(w^M_T(\Phi_{M'}(t_T,I_{MM'}l^T_M)..
w^M_1(\Phi_{M'}(t_1,I_{MM'} l^M_1))
\nonumber\\
&=&
<\Omega^{(n)}_{M'}, w^M_T(\phi_{M'}(I_{MM'}l^T_M))\; e^{-(t_T-t_{T-1}) 
H^{(n)}_{M'}}\; 
w^M_{T-1}(\phi_{M'}(I_{MM'}l^{T-1}_M))\;
e^{-(t_{T-1}-t_{T-2}) H^{(n)}_{M'}}\; ...\;
\nonumber\\
&&
e^{-(t_2-t_1) H^{(n)}_{M'}}\; 
w^M_1(\phi_{M'}(I_{MM'} l^1_M))\; \Omega^{(n)}_{M'}>_{{\cal H}_{M'}^{(n)}}
\ea
for all choices of $M\in {\cal M};\; 
T\in \mathbb{N}_0;\; t_T>..>t_1;\; l^1_M,..,l^T_M$ (in practice e.g.
$M'=2M$ is fixed). 

We consider (\ref{5.19}) as the {\it master equation} from which everything 
must be be deduced. To avoid the compound field phenomenon mentioned above 
we use that (\ref{5.19}) i. is supposed to hold for an arbitrary number of 
time steps and ii. we add as further input 
one more OS axiom namely {\it uniqueness of the 
vacuum} which is in fact a standard axiom to impose in  QFT on Minkowski
space \cite{20}. In terms of measures it can be stated as {\it ergodicity
of time translations}
\be \label{5.19a}
\lim_{T\to \infty}\; \frac{1}{2T}\;
\int_{-T}^T\; ds\; U(s)\Psi=_{\mu\; {\rm a.e.}}\; \mu(\Psi)\; \cdot 1,\;\;\Psi
\in L_2(S,d\mu)
\ee 
We separate this axiom from the minimal ones because it enters 
in a crucial way only at this last stage of the renormalisation process.
The subsequent discussion considerably extends the arguments of \cite{13}.\\
\\ 
First of all, going back to (\ref{5.19}) and picking $T=1$ we find   
\be \label{5.25}
<\Omega^{(n+1)}_M, w^M_T(\phi_M(l_M))\;\Omega^{(n+1)}_M>_{{\cal H}_M^{(n+1)}}
=
<\Omega^{(n+1)}_M, w^M(\phi_{M'}(I_{MM'} l_M))
\;\Omega^{(n)}_{M'}>_{{\cal H}_{M'}^{(n)}}
\ee
Using the fact that the $w_M\in {\cal M}$ form a $^\ast-$algebra, we can 
formulate (\ref{5.25}) as follows: Assuming by induction that
up to renormalisation step $n$ 
the vectors 
$w^M(\phi_M(l_M))\;\Omega^{(n)}_M,\; w^M\in {\cal W}_M$ span a dense 
subspace of ${\cal H}^{(n)}_M$, consider  
the closed linear span $\widehat{{\cal H}^{(n)}_{M'}}$ 
of vectors of the form
\be \label{5.26}
w^M(\phi_{M'}(I_{MM'} l_M))\;\Omega^{(n)}_{M'}
\ee
which is a subspace of ${\cal H}^{(n)}_{M'}$. Then (\ref{5.25}) 
is the statement that the map 
\be \label{5.27}
J^{(n)}_{MM'}:\; {\cal H}^{(n+1)}_M\to  
\widehat{{\cal H}^{(n)}_{M'}};\;
w^M(\phi_M(l_M))\;\Omega^{(n+1)}_M\mapsto
w^M(\phi_{M'}(I_{MM'} l_M))\;\Omega^{(n)}_{M'}
\ee
is an {\it isometry}, that is,
\be \label{5.28}
[J^{(n)}_{MM'}]^\dagger\; J^{(n)}_{MM'}=1_{{\cal H}^{(n+1)}_M}
\ee
which implies that 
\be \label{5.29}
P^{(n)}_{MM'}:=J^{(n)}_{MM'} \; [J^{(n)}_{MM'}]^\dagger:\; 
{\cal H}^{(n)}_{M'}\to 
\widehat{{\cal H}^{(n)}_{M'}}
\ee
is a projection. 

Next for $T=2, t_2-t_1=\beta$ we find from (\ref{5.19})
\ba \label{5.31}
&&<\Omega^{(n+1)}_M, w(\phi_M(l_M))
\; e^{-\beta H^{(n+1)}_M}\; w'(\phi_M(l'_M))
\Omega^{(n+1)}_M>_{{\cal H}^{(n+1)}_M}
\nonumber\\
&=&
<\Omega^{(n)}_{M'}, w(\phi_{M'}(I_{MM'} l_M))
\; e^{-\beta H^{(n)}_{M'}}\; w'(\phi_M(I_{MM'} l'_M))
\Omega^{(n)}_{M'}>_{{\cal H}^{(n)}_{M'}}
\ea
and using again the $^\ast-$ property of the algebra ${\cal W}_M$ and taking 
formally the first derivative of (\ref{5.31}) at $\beta=0$ we conclude 
\be \label{5.32}
H^{(n+1)}_M=[J^{(n)}_{MM'}]^\dagger\;H^{(n)}_{M'} \; J^{(n)}_{MM'}  
\ee
Note that (choose $w=1$ in (\ref{5.27}))
\be \label{5.33}
H^{(n+1)}_M \Omega^{(n+1)}_M=[J^{(n)}_{MM'}]^\dagger H^{(n)}_{M'}
\Omega^{(n)}_{M'}=0
\ee
hence the new vacuum is automatically annihilated by the new Hamiltonian. 

We notice that for finite $\beta$ (\ref{5.31}) is not implied by 
(\ref{5.32}) unless 
$[H^{(n)}_{M'},P^{(n)}_{MM'}]=0$ and it is here where we use the condition 
that the correspondence (\ref{5.19}) is to hold for an arbitrary number 
and choices of time as well as the uniqueness of the vacuum. 
Using the projection $P^{(n)}_{MM'}$ onto the closed linear span 
of the $w(\phi_{M'}(I_{MM'} l_M))\Omega^{(n)}_{M'}$ we see that 
the operators $w(\phi_{M'}(I_{MM'} l_M))$ on ${\cal H}^{(n)}_{M'}$ 
are block diagonal with respect to the  decomposition 
\be \label{5.36}
{\cal H}^{(n)}_{M'}=
P^{(n)}_{MM'} {\cal H}^{(n)}_{M'}  
\oplus [P^{(n)}_{MM'}]^\perp 
{\cal H}^{(n)}_{M'}
\ee
since they together with their adjoints 
leave $P^{(n)}_{MM'}{\cal H}^{(n)}_{M'}$ 
invariant (the $w\in {\cal W}_M$ generate a $^\ast-$algebra). Thus 
$P^{(n)}_{MM'} w(\phi_{M'}(I_{MM'} l_M)) [P^{(n)}_{MM'}]^\perp=
[P^{(n)}_{MM'}]^\perp w(\phi_{M'}(I_{MM'} l_M)) P^{(n)}_{MM'}=0$ but 
in general \\
$[P^{(n)}_{MM'}]^\perp w(\phi_{M'}(I_{MM'} l_M)) [P^{(n)}_{MM'}]^\perp
\not=0$. Thus, it is not sufficient to insert $w$ operators an arbitrary 
number of times and at arbitary 
places into the the correspondence (\ref{5.19}) in order to deduce 
(\ref{5.32}) from (\ref{5.19}).

Let $b^{(n)}_{M' I}$ be an orthonormal basis of 
$P^{(n)}_{MM'} {\cal H}^{(n)}_{M'}$. Then, since $\Omega^{(n)}_{M'}$ 
is cyclic for the algebra $W^{(n)}_{MM'}$ generated by the 
$w(\phi_{M'}(I_{MM'} l_M))$ with respect to 
$P^{(n)}_{MM'}\;{\cal H}^{(n)}_{M'}$ 
we find $w'_I \in W^{(n)}_{MM'}$ such that $b^{(n)}_{M'I}=
w'_I \Omega^{(n)}_{MM'}$ (or can be made  at least arbitrarily close).
Next, assume that $\Omega^{(n)}_{M'}$ is the unique ground state for 
$H^{(n)}_{M'}$ then  
\be \label{5.37}
e^{-\beta H^{(n)}_{M'}}\to \Omega^{(n)}_{M'} 
<\Omega^{(n)}_{M'},.>_{{\cal H}^{(n)}_{M'}} 
\ee
becomes the projection on the ground state for $\beta\to \infty$. It follows 
in the limit $\beta\to\infty$
\be \label{5.38}
P^{(n)}_{MM'}=
\sum_I b^{(n)}_{M' I}\; <b^{(n)}_{M' I},.>_{{\cal H}^{(n)}_{M'}} 
=\sum_I\;  w'_I \Omega^{(n)}_{M'}\; 
<w'_I \Omega^{(n)}_{M'},.>_{{\cal H}^{(n)}_{M'}}  
\to \sum_I\; w'_I \; e^{-\beta H^{(n)}_{M'}}\; (w'_I)^\dagger
\ee
Let $w_I$ be the element in the algebra generated by the $w(\phi_M(l_M))$ such 
that $J^{(n)}_{MM'} w_I \Omega^{(n+1)}_M=w'_I \Omega^{(n)}_{M'}$ (which exists 
because $P^{(n)}_{MM'} {\cal H}^{(n)}_{M'}$ is the closure of the image 
of $J^{(n)}_{MM'}$). Then due to isometry $[J^{(n)}_{MM'}]^\dagger 
J^{(n)}_{MM'}
=1_{{\cal H}^{(n+1)}_M}$ (\ref{5.28}) we have  
\ba \label{5.39}
&& \sum_I\;  w_I \Omega^{(n+1)}_M\; 
<w_I \Omega^{(n+1)}_M,.>_{{\cal H}^{(n+1)}_M}  
=[J^{(n)}_{MM'}]^\dagger\; 
\sum_I\;  w'_I \Omega^{(n)}_{M'}\; 
<w'_I \Omega^{(n)}_{M'},.>_{{\cal H}^{(n)}_{M'}}  \; J^{(n)}_{MM'}
\nonumber\\
&=& [J^{(n)}_{MM'}]^\dagger\; P^{(n)}_{MM'}\; J^{(n)}_{MM'}
=1_{{\cal H}^{(n+1)}_M}
\ea
On the other hand, if $\Omega^{(n+1)}_M$ is the unique ground state for 
$H^{(n+1)}_M$ we have by the same argument as in (\ref{5.37}), (\ref{5.38}) 
in the limit $\beta\to \infty$
\be \label{5.40}
1_{{\cal H}^{(n+1)}_M}\to \sum_I\; w_I e^{-\beta H^{(n+1)}_M}\; w_I^\dagger
\ee
Since the identity operator $1_{{\cal H}^{(n+1)}_M}$ can be inserted an 
arbitrary number of times and at arbitrary places on the left hand side 
of (\ref{5.19}) and since it can be written as (\ref{5.40}) which under 
the correspondence (\ref{5.19}) translates into (\ref{5.38}) 
the correspondence 
(\ref{5.19}) is to hold also when we insert $P^{(n)}_{MM'}$ an arbitrary number
of times and at arbitary places on the right hand side of (\ref{5.19}).
In particular this means that we must replace on the right hand side 
of (\ref{5.19}) the operator $e^{-\beta H^{(n)}_{M'}}$ by 
\be \label{5.41}
\lim_{N\to \infty} 
P^{(n)}_{MM'} (e^{-\frac{\beta}{N} H^{(n)}_{M'}} P^{(n)}_{MM'})^N
\ee
To see this, we write in (\ref{5.19}) for each $k=2,..,T$ and for any $N\in 
\mathbb{N}$ on the lhs
$e^{-(t_k-t_{k-1})H^{(n+1)}_M}=(e^{-(t_k-t_{k-1})H^{(n)}_M/N}\;
1_{{\cal H}_M^{(n+1)}})^N$ and 
replace $1_{{\cal H}^{(n)}_M}$ by the approximants (\ref{5.40}) 
or more precisely the $P(n,k,\beta)$ of 
appendix \ref{se} for $P=1_{{\cal H}^{(n+1)}_M}$. Using multi-linearity of 
(\ref{5.19}) we can 
rewrite the resulting expression in terms of (\ref{5.19}) again, just 
that now we have not $T$ insertions of $w$ operators but $T'=2\;N\; T$ 
insertions at times $t'_1<..<t'_{T'}$ such that 
\be \label{5.41a}
t'_{2kN+2l}-t'_{kN+2l-1}=\frac{t_k-t_{k-1}}{N},\;
t'_{2kN+2l-1}-t'_{kN+2l-2}=\beta,\;
\ee
for $k=1,..,T-1; l=1,..,N$. By the correspondence (\ref{5.19}) this 
translates into the corresponding expressions on the rhs with 
approximants (\ref{5.38}) or more precisely the $P(n,k,\beta)$ of 
appendix \ref{se} for $P=P^{(n)}_{MM'}$.     
Then one takes strong limits in the appropriate order, see 
appendix \ref{se}, in particular $\beta\to\infty$, keeping $t_k-t_{k-1}$ 
fixed. As this is to hold for all $N$, we take $N\to\infty$.   

Formula (\ref{5.41}) is known in the mathematics 
literature \cite{67} as as a degenerate 
case of a Kato-Trotter product \cite{66} of which there are many 
versions. One of them states that for 
contraction semigroups generated by self-adjoint operators $A,B$ such that
$A+B$ is essentially self-adjoint on the dense domain $D(A)\cap D(B)$ we have 
strong convergence 
\be \label{5.42}
\lim_{N\to \infty} [e^{-A/N} \; e^{-B/N}]^N=e^{-(A+B)}
\ee
In our case the second contraction semigroup $s\mapsto e^{-s B}$ is replaced by 
the degenerate one $K(s)=K(0)=P^{(n)}_{MM'}$. In \cite{67} sufficient 
criteria for the existence of a degenerate semigroup $K(\beta),\; K(0)$ an
invariant projection, rather than the identity are studied, such that in, 
say the strong 
operator topology $K(\beta)=\lim_{N\to \infty} (e^{-\beta/N A} P)^N$. 
Assuming that existence $K(\beta)$ 
of the limit (\ref{5.41}) is secured, we deduce 
\be \label{5.43}
H^{(n+1)}_M:=-[J^{(n)}_{MM'}]^\dagger\;
[\frac{d}{d\beta} K(\beta)]_{\beta=0}\; J^{(n)}_{MM'},\;
K(\beta):=\lim_{N\to \infty}
\;P^{(n)}_{MM'}\;[e^{-\frac{\beta}{N} H^{(n)}_{M'}}\; P^{(n)}_{MM'}]^N
\ee
In particular, if the solution of (\ref{5.43}) is given by  
\be \label{5.44}
K(\beta)=P^{(n)}_{MM'} e^{-\beta P^{(n)}_{MM'} H^{(n)}_{M'} P^{(n)}_{MM'}}
\ee
we recover (\ref{5.32}) since 
$P^{(n)}_{MM'}\; J^{(n)}_{MM'}=J^{(N)}_{MM'}$. In appendix \ref{sd} we prove 
(\ref{5.44}) for the case that $H^{(n)}_{MM'}$ is bounded, that is,
(\ref{5.44}) is strictly true when replacing $H^{(n)}_{M'}$ by 
its bounded spectral projections $E^{(n)}_{M'}(B), \; B$ Borel. 

In what follows we will assume this to hold also when 
$e^{-\beta H^{(n)}_{M'}}$ is a general contraction semi-group. In \cite{67}
we find proofs for existence of a resulting degenerate semi-group under 
special circumstances but no concrete formulae in terms of the original 
projection and semi-group are given. Thus for the time being we will use
(\ref{5.32}) as a plausible solution of the 
{\it exact relation} (\ref{5.43}) but keep in mind that (\ref{5.43})
may contain more information.

To conclude this step, under the assumption that uniqueness of the vacuum
is is preserved under the renormalisation flow and that the 
degenerate Kato-Trotter product formula applies to general contraction 
semi-groups, we can strictly derive 
(\ref{5.28}) and (\ref{5.32}) as equivalent to (\ref{5.19}). Unfortunately,
it is not possible to show that the uniqueness 
property is automatically preserved under the flow for suppose that 
$H^{(n)}_{M'}$ has unique vacuum $\Omega^{(n)}_{M'}$ and that 
$H^{(n+1)}_M v_M=[J^{(n)}_{MM'}]^\dagger H^{(n)}_{M'} J^{(n)}_{MM'} v_M=0$
then we can just conclude that $H^{(n)}_{M'} J^{(n)}_{MM'} v_M\in 
[P^{(n)}_{MM'}]^\perp {\cal H}^{(n)}_{M'}$. Hence without further 
input, the uniqueness property
must be checked self-consistently.\\
\\
Step 3: {\it Constructing the continuum theory from the fixed point data}\\
\\
Once we found a fixed point family $J_{MM'}; ({\cal H}_M, \Omega_M,\; 
H_M)$ with $M<M', \; M,M'\in {\cal M}$ we have an inductive limit structure
$(J_{MM'},{\cal H}_M)$ of Hilbert spaces since 
$J_{M' M^{\prime\prime}} J_{MM'}=J_{MM^{\prime\prime}}$ is inherited 
from 
$I_{M' M^{\prime\prime}} I_{MM'}=I_{MM^{\prime\prime}}$ 
for $M<M'<M^{\prime\prime}$ and therefore can define the continuum Hilbert 
spcace ${\cal H}$ as its inductive limit which always exists
\cite{51}. Thus, there 
exist isometries $J_M:\; {\cal H}_M\to {\cal H}$ such that 
$J_{M'} J_{MM}'=J_M,\;M<M'$. Moreover, there exists a consistently defined 
quadratic form $H$ (not necessarily an operator) such that 
$H_M=J_M^\dagger H J_M$. Note that we can compute matrix elements of $H$ 
between 
the subspaces $J_M {\cal H}_M, J_{M'} {\cal H}_{M'}$ of $\cal H$ for any 
$M,M'$ without actually knowing $H$, just its known finite resolution 
projections are needed, by using any $M,M'<M^{\prime\prime}$
\be \label{5.35}
<J_M \psi_M, H J_{M'} \psi_{M'}>_{{\cal H}}
=<J_{M^{\prime\prime}} J_{MM^{\prime\prime}} \psi_M, H 
J_{M^{\prime\prime}} J_{M'M^{\prime\prime}} \psi_{M'}>_{{\cal H}}
=<J_{MM^{\prime\prime}} \psi_M, H_{M^{\prime\prime}} 
J_{M'M^{\prime\prime}} \psi_{M'}>_{{\cal H}_{M^{\prime\prime}}}
\ee
We stress that $H$ is {\it not} the inductive limit of the $H_M$ since that 
would require
$H_{M'} J_{MM'}= J_{MM'} H_M$. This inductive limit condition is much
stronger than the quadratic form condition 
$H_M=J_{MM'}^\dagger H_{M'} J_{MM'}$ which can be seen by 
multiplying the inductive limit condition 
from the left with $J_{MM'}^\dagger$ and using isometry. 
It is not possible to derive the inductive limit condition from the quadratic 
form condition because $J_{MM'}^\dagger$ has no left inverse. 
 
We emphasise that this Hamiltonian renormalisation scheme can be seen as an 
{\it independent, real space, kinematical} renormalization flow 
different from the OS measure (or path integral) 
scheme even if the assumptions that were made during its derivation 
from the measure theoretic one are violated. Note that both schemes 
are {\it exact}, i.e. make no 
truncation error. This is possible because we do not need to compute 
the spectra of the Hamiltonians (which is practically impossible to do 
analytically without error even at finite resolution) but only matrix
elements which is computationally much easier and can often performed 
anaytically, even if the Hilbert spaces involved are {\it infinite dimensional}
as is the case in bosonic QFT even at finite resolution. 

As a final remark, recall that the reduction of (\ref{5.19}) to 
(\ref{5.28}) and (\ref{5.28}) rests crucially on the assumption that 
the vacuum vectors $\Omega^{(n)}_M$ remain the unique ground states 
of the Hamiltonians $H^{(n)}_M$ in the course of the renormalisation,
a condition which is difficult to keep track off in practice and 
which in fact contains dynamical information. Is it possible that 
the OS measure flow and the Hamiltonian flow
(\ref{5.28}) and (\ref{5.32}) nevertheless 
deliver the same continuum theory, even if we drop the vacuum
uniqueness assumption?  
In that respect, note that one arrives 
at (\ref{5.28}), (\ref{5.32}) from (\ref{5.19})  
by deleting by hand the off-block diagonal terms in $H^{(n)}_{M'}$ with 
respect to the decomposition (\ref{5.36}). 
When deleting those terms by hand then (\ref{5.19}) 
indeed becomes equivalent to (\ref{5.28}) and (\ref{5.32}). 
This is reminescent of the Raleigh-Ritz procedure of diagonalising a 
self-adjoint 
operator \cite{66}: There the statement is that for any self-adjoint operator 
$H$ bounded from below (which is precisely our situation) and any finite 
dimensional projection $P$ the $\dim(P)$ eigenvalues of $PHP$ ordered by size
are upper bounds of the the $\dim(P)$ eigenvalues, ordered by size,  
in the discrete part of the spectrum (i.e. isolated eigenvalues of 
finite multiplicity) of $H$. Here we deal with an infinite projection
instead of a finite one, but the general setting is the same. The idea is that 
as we increase $M$ we approach the continuum Hamiltonian for which eventually 
there are no off-diagonal elements.

\section{Connection with density matrix and entanglement renormalisation}
\label{s7}

In this brief section we display natural points of contact between 
density matrix renormalisation \cite{25,30}, entanglement renormalisation
\cite{33,34} and 
projective renormalisation \cite{29a,29b,29c,29d}. In particular the notions of 
ascending and descending superoperators, isometries and disentanglers 
from the multi scale entanglement renormalisation Ansatz (MERA) 
can be applied.\\
\\
Recall that our Hamiltonian renormalisation 
scheme produced the following structures: At each resolution $M$ we are given
an OS triple $({\cal H}_M,\Omega_M, H_M)$ and in addition we have the 
$C^\ast-$algebra $\mathfrak{A}_M$ of bounded operators acting on ${\cal H}_M$. 
The $w(\phi_M(l_M))$ form a commutative subalgebra $\mathfrak{B}_M$ 
of $\mathfrak{A}_M$ for 
which $\Omega_M$ is cyclic and separating. The 
full algebra $\mathfrak{A}_M$ contains in addition 
operators that depend on the momenta $\pi_M$ conjugate to $\phi_M$. 
Furthermore, the renormalisation scheme has naturally produced an
inductive system of isometric 
injections $J_{MM'}:\; {\cal H}_M\to {\cal H}_{M'}$ for 
$M<M'$ such that $J_{MM'} w(\phi_M(l_M))\Omega_M=
w(\phi_{M'}(I_{MM'} l_M))\Omega_{M'}$ and such that 
$J_{M' M^{\prime\prime}} J_{MM'}=J_{M M^{\prime\prime}}$ for 
$M<M'<M^{\prime\prime}$. Isometry $J_{MM'}^\dagger J_{MM'}=1_{{\cal H}_M}$ 
implies that $P_{MM'}=J_{MM'} J_{MM'}^\dagger$ is an orthogonal projection 
in ${\cal H}_{M'}$.\\
\\
In section \ref{s5} we established that the Hamiltonian renormalisation derived 
from OS reconstruction, under the assumption of vacuum uniqueness at each 
resolution, has given rise to a flow of {\it contraction semigroups} 
\be \label{7.0a}
e^{-\beta H^{(n+1)}_M}=[J^{(n)}_{MM'}]^\dagger\;
\{\lim_{N\to \infty} \; [e^{-\beta H^{(n)}_{M'}} P^{(n)}_{MM'}]^N\}
J^{(n)}_{MM'}
\ee
using the master equation 
(\ref{5.19}) consequently. Suppose we actually use the weaker condition 
(\ref{5.31}) which can be stated as 
\be \label{7.0b}
e^{-\beta H^{(n+1)}_M}= [J^{(n)}_{MM'}]^\dagger\; e^{-\beta H^{(n)}_{M'}} \;
J^{(n)}_{MM'}
\ee
and which appeared as an intermediate step. This flow 
condition on {\it Gibbs operators}
cannot hold for all $\beta$ hence we fix it for the rest of this section. 
Note that $\Omega^{(n+1)}_M$
keeps being an eigenvector of the Gibbs operator with eigenvalue unity during 
the flow since $\Omega^{(n)}_{M'}=J_{MM'}^{(n)} \; \Omega^{(n+1)}_M$.\\ 
\\
Consider now the linear embedding of algebras 
\be \label{7.1}
\alpha_{MM'}:\;\mathfrak{A}_M \to \mathfrak{A}_{M'};\;\;
a_M\mapsto J_{MM'} \; a_M\; J_{MM'}^\dagger
\ee
sometimes called an {\it ascending superoperator}.
This is a $^\ast-$homomorphism 
\be \label{7.2}
[\alpha_{MM'}(a_M)]^\dagger=\alpha_{MM'}(a_M^\dagger),\;\;
\alpha_{MM'}(a_M\;b_M)=\alpha_{MM'}(a_M)\;\alpha_{MM'}(b_M)
\ee
by virtue of isometry of the injections. However, $\alpha_{MM'}$ is 
not unital 
\be \label{7.3}
\alpha_{MM'}(1_{{\cal H}_M})=P_{MM'}\not=1_{{\cal H}_{M'}}
\ee
Nevertheless it is consistently defined 
\be \label{7.3a}
\alpha_{M'M^{\prime\prime}}\circ \alpha_{MM'}=
\alpha_{M M^{\prime\prime}}
\ee
by virtue of the inductive properties of the injections. 

We may want to twist $\alpha_{MM'}$ by unitary operators $U_M,U_{M'}$ in 
${\cal H}_M,\;{\cal H}_{M'}$ respectively, sometimes called 
{\it disentanglers} 
to obtain
$\tilde{\alpha}_{MM'}(.)=U_{M'}\; \alpha_{MM'}\; U_M^\dagger$
which preserves all properties and amounts to the substitution of
cylindrically consistent 
isometries $J_{MM'}\to \tilde{J}_{MM'}=U_{M'}\; J_{MM'}\; U_M$
with no conditions on the $U_M$. For the injections into the continuum
this amounts to $\tilde{J}_M=U J_M U_M^\dagger$ for some unitary $U$ on 
$\cal H$. 

We may use these 
to define a density matrix (positive trace class operators of unit trace)
renormalisation scheme as follows: Given an initial 
system of density matrices $\rho^{(0)}_M$ for each resolution scale $M$
we set for $M<M'$ (in practice $M'=2M$) 
\be \label{7.4}
\rho^{(n+1)}_M:=
\frac{[J^{(n)}_{MM'}]^\dagger\; \rho^{(n)}_{M'}\; J^{(n)}_{MM'}}{
{\rm Tr}_{{\cal H}^{(n)}_M}[[J_{MM'}^{(n)}]^\dagger\; 
\rho^{(n)}_{M'}\; J^{(n)}_{MM'}]}
=\frac{[J_{MM'}^{(n)}]^\dagger\; \rho^{(n)}_{M'}\; J^{(n)}_{MM'}}{
{\rm Tr}_{{\cal H}_{M'}}[\rho^{(n)}_{M'}\; P^{(n)}_{MM'}]}
:=D_{MM'}[\rho^{(n)}_{M'}]
\ee
which is sometimes called a {\it descending (non-linear) superoperator}.
The lhs of (\ref{7.4}) may be considered as the {\it reduced density matrix}
of the density matrix on the rhs. 

A fixed point of 
(\ref{7.4}) defines a consistently defined system of density matrices
since for $M<M'<M^{\prime\prime}$
\ba \label{7.5}
\rho_M &=& 
=\frac{J_{MM'}^\dagger\; \rho_{M'}\; J_{MM'}}{
{\rm Tr}_{{\cal H}_{M'}}[\rho_{M'}\; P_{MM'}]}
\nonumber\\
&=& \frac{J_{MM'}^\dagger\;
\{ 
\frac{J_{M'M^{\prime\prime}}^\dagger\; \rho_{M^{\prime\prime}}\; 
J_{M'M^{\prime\prime}}}{
{\rm Tr}_{{\cal H}_{M^{\prime\prime}}}[\rho_{M^{\prime\prime}}\; 
P_{M'M^{\prime\prime}}]}
\}
\;J_{MM'}}{
{\rm Tr}_{{\cal H}_{M'}}[
\{
\frac{J_{M'M^{\prime\prime}}^\dagger\; \rho_{M^{\prime\prime}}\; 
J_{M'M^{\prime\prime}}}{
{\rm Tr}_{{\cal H}_{M^{\prime\prime}}}[\rho_{M^{\prime\prime}}\; 
P_{M'M^{\prime\prime}}]}
\}
\; P_{MM'}]}
\nonumber\\
&=&
\frac{[J_{M'M^{\prime\prime}}\; J_{MM'}]^\dagger\; \rho_{M^{\prime\prime}}\; 
[J_{M'M^{\prime\prime}}\; J_{MM'}]}{
{\rm Tr}_{{\cal H}_{M'}}[J_{M'M^{\prime\prime}}^\dagger\;\rho_{M^{\prime\prime}}\; 
J_{M'M^{\prime\prime}} \; P_{MM'}]}
\nonumber\\
&=&
\frac{J_{MM^{\prime\prime}}^\dagger\; \rho_{M^{\prime\prime}}\; 
J_{MM^{\prime\prime}}}{
{\rm Tr}_{{\cal H}_{M^{\prime\prime}}}[\rho_{M^{\prime\prime}}\; 
J_{M'M^{\prime\prime}} \; P_{MM'}\; J_{M'M^{\prime\prime}}^\dagger\;]}
\nonumber\\
&=&
\frac{J_{MM^{\prime\prime}}^\dagger\; \rho_{M^{\prime\prime}}\; 
J_{MM^{\prime\prime}}}{
{\rm Tr}_{{\cal H}_{M^{\prime\prime}}}[\rho_{M^{\prime\prime}}\;\; 
P_{MM^{\prime\prime}}]}
\ea
If one wants to have a linear descending superoperator one needs to extend  
the homomorphism to a unital one or work with trace class operators 
of non-unit norm because one can always normalise afterwards. This 
is possible because the {\it partition functions}
\be \label{7.0c}
Z_M(\beta)={\rm Tr}_{{\cal H}_M}[e^{-\beta H_M}]
\ee
enjoy the {\it invariance property} \cite{25}
\be \label{7.0d}
Z_M(\beta):={\rm Tr}_{{\cal H}_{M'}}[e^{-\beta H_{M'}}\; P_{MM'}]
\ee
for all $M<M'$ 

The relation (\ref{7.5}) says that there exists a density matrix $\rho$
on the inductive limit ${\cal H}$ such that
\be \label{7.6}
\rho_M=\frac{J_M^\dagger\; \rho\; J_M}{{\rm Tr}_{{\cal H}}[\rho\; P_M]}
\ee
where the isometries $J_M\; {\cal H}_M \to {\cal H}$ satisfy 
$J_{M'} J_{MM'}=J_M$ and $P_M=J_M J_M^\dagger$ is a projection. One may 
explicitly check that (\ref{7.6}) solves (\ref{7.5}). 

Density matrices explore the folium of the algebraic vacuum state 
$\omega_0(.)=<\Omega,.\;\Omega>_{{\cal H}}$, i.e. we obtain algebraic 
states on the continuum algebra $\mathfrak{A}$ and the one at finite resolution
respectively
\be \label{7.7}
\omega_\rho(.)={\rm Tr}_{{\cal H}}[\rho\;.],\;\;
\omega_{\rho_M}(.)={\rm Tr}_{{\cal H}_M}[\rho_M\;.],\;\;
\ee
for which we have the identity
\be \label{7.8}
\omega_\rho(\alpha_M(a_M))={\rm Tr}_{{\cal H}}[\rho P_M]\;
\omega_{\rho_M}(a_M)
\ee
with $\alpha_M(a_M)=J_M a_M J_M^\dagger$. 

Examples for unital homomorphisms are for example 
available if the Hilbert spaces 
${\cal H}_M$ have the following, non-trivial additional structure \cite{29c}: 
For all 
$M<M'$ we have ${\cal H}_{M'}={\cal H}_M\otimes {\cal H}_{M'|M}$ (or at least 
there is a unitary map between them) with additional consistency conditions 
between these factorisations, see \cite{29c} for details. Then for 
instance 
\be \label{7.9}
\alpha_{MM'}(a_M)=a_m\otimes 1_{{\cal H}_{M'|M}}
\ee
and density matrix renormalisation amounts to taking partial traces with 
respect to the factor ${\cal H}_{M'|M}$ \cite{25}. The map (\ref{7.9}) can 
be described 
in terms of isometric embeddings $J_{MM'}^I \psi=\psi\otimes b_I$ with 
$\psi\in {\cal H}_M,\; b_I$ an ONB of ${\cal H}_{M'|M}$ so that 
$\alpha_{MM'}(a_M)=\sum_I J_{MM'}^I a_M [J_{MM'}^I]^\dagger$ which explains 
why just one isometry is not sufficient to make $\alpha_{MM'}$ unital.
Such a tensor product situation is available if we consider as label set
the mutually commuting modes of a Fock representation. The latter can 
be obtained by semaring creation and annihilation operators with respect an 
orthonormal basis of the 1-particle Hilbert space and if the corresponding 
smearing is supposed to be smooth and spatially local as one would like 
for a real space coarse graining scheme then one can use a (e.g. Meyer) 
wavelet basis \cite{70}. However in view of Haag's theorem \cite{60a} 
it is unclear whether Fock space techniques can be applied in the interacting 
case, which is why the scheme adopted in this paper keeps track of the
interacting rather than free vacuum during the flow. Even for free theories 
note that in case one is dealing with an infinite number of degrees of freedom  
(continuum limit) a Fock representation must be carefully adapted to the 
Hamiltonian as otherwise it is not even densely defined.

For completeness we mention that a unitalisation of 
$\alpha_{MM'}(.)=J_{MM'}(.)J_{MM'}^\dagger$ can be provided if the 
system $(\mathfrak{A}_M,\alpha_{MM'})_{M<M'}$ admits a consistently defined 
sytem of {\it weights}, that is, 
unital
$^\ast$homomorphisms $w_M:\;\mathfrak{A}_M\to \mathbb{C}$
such that $w_{M'}\circ \alpha_{MM'}=w_M$. Then 
it is not difficult to check that 
\be \label{7.10}
\beta_{MM'}(a_M)=\alpha_{MM'}(a_M)+w_M(a_M)\; P_{MM'}^\perp
\ee
is unital, has all the properties of $\alpha_{MM'}$ and extends 
$\alpha_{MM'}$ in the sense that
$\beta_{MM'}(a_M)\Omega_{M'}=\alpha_{MM'}(a_M)\Omega_{M'}=J_{MM'}
a_M \Omega_M$.

Thus our Hamiltonian renormalisation scheme also suggests as a variant 
an independent density matrix renormalisation scheme corresponding to the 
Gibbs operator at fixed temperature that simultaneously keeps track
of the vacuum. 
Various combinations and generalisations of these ideas are conceivable.
For instance, one could use as density matrices the Gibbs 
operator corresponding to a Hamiltonian $\rho^{(n)}_M=e^{-\beta H^{(n)}_M}$ 
and any {\it fixed} inductive limit Hilbert space with inductive 
structure $J_{MM'}$ or a fixed consistent system of 
unital homomorphisms and then use the 
flow defined by the scheme 
considered in this section to derive a flow of Hamiltonians as an independent 
technique, see also \cite{25,29c,29d}. This will likely give rise to a
(finite) temperature 
state rather than a vacuum state as a fixed point of the renormalisation flow. 
Or one could keep the density matrices independent of the Hamiltonians 
and study the time evolution of the reduced density matrices 
$\rho^{(n+1)}_M(t):=\alpha'_{MM'}(\gamma^{(n)}_{M',t}(\rho^{(n)}_{M'}))$ 
where $(\alpha'_{MM'}\rho_{M'})[a_M]:=\rho_{M'}[\alpha_{MM'}(a_M)]$ (here 
the density matrices are understood as tracial states) 
with 
$\gamma^{(n)}_{M',t}=e^{it H^{(n)}_{M'}}(.)\;e^{-it H^{(n)}_{M'}}$ 
whose infinitesimal form gives rise to a Lindblad equation \cite{65}
in the unital case (with tivial non-linear corrections in the non-unital 
case that can be computed from the linear flow that one obtains 
by dropping the normalisation by the partition function).
Its non-dissipative component can be used to define a flow of Hamiltonians.
Yet more generally we could in fact consider a system of quadruples 
$(\omega_M,\mathfrak{A}_M,H_M,\alpha_{MM'})$ where the Gibbs states have been 
replaced by general algebraic states $\omega_M$
and we could study their analogous flow.

\section{Connections with the functional renormalisation group}
\label{s8}

The functional renormalisation group (FRG) 
(see the first two references in \cite{15}) is the technique underlying 
the asymptotic safety aproach to quantum gravity. We will sketch it here 
in measure theoretic terms to display the rather obvious point
of contact with the Hamiltonian renormalisation programme which arises from 
OS reconstruction. We will restrict to scalar field theories for 
simplicity. The discussion will rely on several assumptions which we spell 
out in some detail and will be only qualitative in nature.\\
\\
Suppose that $\mu$ is a probability measure with stochastic process $\Phi$ 
labelled by smearing functions $F$ and that 
$\mathbb{R}^+\ni k\mapsto \mu_k$ is a one parameter family of such probability 
measures defined in terms of its generating functional as 
\be \label{8.1} 
\mu_k(w[F]):=\frac{\mu(\rho_k \; w[F])}{\mu(\rho_k)};\; 
w[F](\Phi):=e^{i\int\; d^Dx\; F(x)\Phi(x)}
\ee
Here $\rho_k$ is a $\mu$ measurable, positive $L_1(d\mu)$ function with 
positive $\mu$ measure. This 
means that $d\mu_k=\frac{\rho_k}{\mu(\rho_k)} d\mu$, i.e. the measures 
$\mu_k$ are mutually absolutely continuous and $\rho_k$ is essentially 
the Radon-Nikodym derivative of $\mu_k$ with respect to $\mu$ \cite{58}.

The family of measures enjoy the obvious {\it cylindrical consistency} 
identity
\be \label{8.2}
\mu_k(w[F])=\frac{\mu_{k'}(\frac{\rho_k}{\rho_{k'}}\; w[F])}{
\mu_{k'}(\frac{\rho_k}{\rho_{k'}})}
\ee
Suppose now that in fact the continuum measure $\mu$ is not known but that 
we guess a family of measures $\mu_k$ and use (\ref{8.2}) to define 
a renormalisation group flow where the $\rho_k$ are fixed and supposed 
to be measurable functions for all the $\mu_k$ in the process of the flow.
This is, in most general terms, the idea of the FRG, that is, the rhs 
of (\ref{8.2}) is the renormalisation of $\mu_{k'}$ and serves as the 
new definition of $\mu_k$.  If we take 
the derivative of the rhs of (\ref{8.2}) with respect to $k$ at $k'=k$
we obtain what is known in the literature as the {\it Polchinsky equation}. If 
we take the Legendre transformation with respect to $F$ of 
\be \label{8.3}
i\varphi[F] -\ln(\mu(\rho_k w[F]))+\ln(\rho(k)[\varphi])
\ee 
obe obtains the {\it effective average action} $\Gamma_k[\varphi]$.
One finds a closed equation for $\Gamma_k$ for the case that 
$\ln(\rho_k)(\varphi)=-\frac{1}{2}<\varphi,R_k\varphi>$ 
is quadratic in $\varphi$ known as the 
{\it Wetterich equation}
\be \label{8.4}
\frac{\partial}{\partial k} \Gamma_k[\varphi]
=\frac{1}{2}\; {\rm Tr}([\partial_k R_k]\;[R_k+\Gamma^{(2)}_k[\varphi]]^{-1})
\ee
where $R_k=R_k(x,y)$ and the second functional derivative 
$\Gamma^{(2)}_k(x,y)$ are symmetric integral kernels. As (\ref{8.4}) is a 
consequence of having a consistent family of measures derived from $\mu$,
it can be used as a flow equation as well which is strictly equivalent to the 
Polchinsky equation.

To translate (\ref{8.2}) into Hamiltonian renormalisation we must be more 
specific on the choice of $\rho_k$ which we take to be of the form 
cited above in order that it yields the Wetterich equation together with
the assumptions usually made: We consider a translation invariant kernel
$R_k$ whose Fourier transform $\hat{R}_k(p)$ enjoys the following 
properties: 1. $\hat{R}_k(p)\ge 0$, 2. $\hat{R}_k(p)$ is monotonously growing 
in $k$ at fixed $p$ and 3. $\lim_{p\to \infty} \hat{R}_k(p)=0$. 
One usually adds 4. $\hat{R}_{k=0}(p)=0,\;
\hat{R}_k(p=0)>0,\;\lim_{k\to\Lambda} \hat{R}_k(p)=\infty$ (where $\Lambda$ 
is a cut-off which for a a continuum theory can be taken to $\infty$).
We will only need 1.-3. in what follows. They imply that $R_k$ is an 
{\it infrared cut-off} i.e. momentum modes $p$ lower than $k$ 
(with respect to the Euclidian metric) are suppressed, high momentum modes 
are unaffected. Thus as we increase $k$ we suppress more and more modes, 
hence the number of degrees of freedom becomes smaller. 
Accordingly, the coarse graining here happens by {\it increasing} $k$ 
rather than lowering, i.e. the flow is towards the UV rather than towards 
the IR in contrast to real space block spin transformations from 
fine to coarse lattices.

Accordingly, we update a measure at $k$ from a measure at $k'<k$. 
Let us further assume that the measure $\mu_{k'}(.)$ and 
$\mu_{k'}(\rho_k/\rho_{k'}\;.)/\mu_{k'}(\rho_k/\rho_{k'})$
are in fact Wiener mesures with respect to the same stochastic process $\phi$
and Hilbert space ${\cal H}_{k'}$ but with underlying Hamiltonians and vacua 
given by $H_{k'},\Omega_{k'}$ and $H'_k:=H_{k'}+h_k-h_{k'},\;\Omega_{kk'}$ 
respectively. This puts some restriction on $R_k$ as far as the dependence 
on the time 
component of the momentum is concerned if we wish $h_k$ to be an at 
most quadartic expression in the momenta $\pi$ conjugate to $\phi$. 
Thus we consider e.g. $\hat{R}_k(p)=[(p^0)^2\kappa_k(||\vec{p}||)
+\sigma_k(||\vec{p}||)]$ where $\kappa_k,\sigma_k$ enjoy the properties 
1.-3. above but only with respect to the spatial momenta and 
$\lambda>0$. Similar restrictions
have been made in \cite{15a} which is concerned with making contact 
to Lorentzian signature GR. This way temporal momentum is not suppressed 
which is in accordance with our Hamiltonian renormalisation scheme which 
does not renormalise in the time direction.

Then the rhs of (\ref{8.2}) can be written for 
$F(t,\vec{x})=\sum_{k=1}^T\;\delta(t,t_k)f_k(\vec{x})$
\be \label{8.5}
<\Omega_{kk'},e^{i\phi[f_1]}\;e^{-[t_T-t_{T-1}]H'_k}\;..\;
e^{-[t_2-t_1]H'_k}\;e^{i\phi[f_1]}\;\Omega_{kk'}>
\ee
To get rid of the dependence on $k$ in $\Omega_{kk'}$ we use as usually that 
\be \label{8.5a}
\Omega_{kk'}=\lim_{\beta\to \infty} 
\frac{e^{-\beta H'_k} \Omega_{k'}}{||e^{-\beta H'_k} \Omega_{k'}||}
\ee
assuming that the strong limit exists and thus the rhs of (\ref{8.2})
can be written
\be \label{8.6}
\lim_{\beta\to \infty}
\frac{
<\Omega_{k'},e^{-[\beta-t_T] H'_k}\;e^{i\phi[f_1]}\;
e^{-[t_T-t_{T-1}]H'_k}\;..\;
e^{-[t_2-t_1]H'_k}\;e^{i\phi[f_1]}\;e^{-[t_1+\beta] H'_k}\;\Omega_{k'}>
}
{
<\Omega_{k'},e^{-2\beta\; H'_k}\;\Omega_{k'}>
}
\ee
Using the Trotter product formula we can replace every exponential of the 
form $e^{-sH'_k}$ by $[e^{-s H_{k'}/N}\; e^{-s[h_k-h_{k'}]/N}]^N$ as $N$ 
turns to $\infty$. By construction $h_k-h_{k'}$ is non-vanishing for 
momentum modes $k'\le ||\vec{p}||\le k$ and thus its Gibbs exponential 
suppresses those. As emphasised in the first two references of \cite{15} one 
obtains an actual coarse graining renormalisation 
group only when $R_k(p)$ takes the 
more special form $Z(\vec{p})[(p_0)^2+\vec{p}^2]I_k(\vec{p})^2/[1-I_k(\vec{p})^2]$
where i. 
$I_k(\vec{p}) I_{k'}(\vec{p})=I_{\tilde{k}(k,k')}(\vec{p})$ for 
some $\tilde{k}(k,k')$ 
which ensures the group property and moreover ii. $I_k(\vec{p})$ approaches 
a step function $\theta(k^2-\vec{p}^2)$.
In this limit $e^{-s (h_k-h_{k'})/N}\to P_{kk'}$ becomes a projection 
operator which projects on a subspace ${\cal H}_k=P_{kk'} {\cal H}_{k'}$ 
independent of the excess modes $k' <||\vec{p}||<k$. To see this note that 
the form of $R_k-R_{k'}$ allows us to write $h_k-h_{k'}$ in terms of 
suitable annihilation and creation operators on a suitable Fock space
which we assume to be unitarily related to ${\cal H}_{k'}$ (this will be 
the case in presence of both IR and UV cut-offs). On that Fock space 
in the step function limit the above {\it is} a projector 
onto the subspace with no excitations from the excess modes and a unitary 
transformation does not change the projection  property. Accordingly the Trotter
product becomes a degenerate one and $e^{-sH'_k}$ becomes 
$P_{kk'} e^{-s P_{kk'} H_{k'} P_{kk'}}$ see appendix \ref{sd}. Taking $\beta\to\infty$ 
in (\ref{8.6}) the rhs of (\ref{8.2}) can be written
\be \label{8.7}
<\Omega_k,e^{i\phi[f_1]}\;e^{-[t_T-t_{T-1}]H_k}\;..\;
e^{-[t_2-t_1]H_k}\;e^{i\phi[f_1]}\;\Omega_k>
\ee
with the vacuum $\Omega_k$ of $H_k$
\be \label{8.7a}
H_k=P_{kk'} H_{k'} P_{kk'},\;
\Omega_k=\lim_{\beta\to \infty} 
\frac{e^{-\beta H_k} P_{kk'}\Omega_{k'}}{||e^{-\beta H_k} P_{kk'}\Omega_{k'}||}
\ee
If we now compare to the end of section 
\ref{s5} we find qualitatively an {\it exact match}, just that the projections
now involve momentum modes with respect to a Fock representation determined 
by $R_k$.

\section{Conclusion}
\label{s6}

In this contribution we have reviewed, extended and clarified the proposal
\cite{13}. The extension consisted in i. an improved derivation of the 
renormalisation scheme (\ref{5.28}), (\ref{5.32}) from OS reconstruction 
using an extended minimal set of OS axioms that also includes the uniqueness 
of the vacuum (which is in fact always assumed in QFT on Minkowski space)
as well as ii. a much more systematic approach to the choice of coarse graining 
maps for a general QFT which are motivated by structures naturally provided 
already by the classical theory. The clarification consisted in separating off
the null space quotienting process imposed by OS reconstruction as an 
independent part of the renormalisation flow whose formulation naturally 
uses the language of stochastic processes.   

We also had the opportunity to make several points of contact with other 
renormalisation programmes that are currently being further developed. 
For instance, the reduced density matrix approach 
on which entanglement renormalisation 
schmemes rest occurs naturally in our scheme as well when looking at the 
flow of the vacuum and Hilbert space. Next, since we consider a real space 
renormalisation scheme, when translated in terms of the flow 
of Wiener measures that we obtain from the flow of OS data we are rather 
close to the asymptotic safety programme because our spatial lattices can
of course be translated into momentum lattices by Fourier transformation
that are used in the asymptotically safe quantum gravity programme.      
Finally, our proposal is obviously very close in language and methods 
to all other Hamiltonian renormalisation schemes and while we currently 
focus on a kinematical coarse graining scheme our approach also contains 
dynamical components such as the flow of the vacuuum.

In \cite{13} and \cite{68} we have successfully applied our scheme 
to free QFT (scalar fields and Abelian gauge theories) exploiting their 
linear structure. Obviously 
one should construct further solvable examples of interacting theories, 
e.g. interacting 2D scalar QFT \cite{60} or free Abelian gauge theories but 
artificially discretised in terms of non-linear holonomies in order to 
simulate the situation in Loop Quantum Gravity, see \cite{68} for 
further remarks. 

Of course the ultimate goal is to use Hamiltonian renormalisation to find a 
continuum theory for canonical Quantum Gravity. Here we can use the LQG 
candidate as a starting point because it is rather far developed, but of course
the flow scheme developed can be applied to any other canonical programme.
However, using LQG and  the concrete scheme that employs a fixed subset of 
graphs $\gamma_M$ labelled by 
$M\in  \mathbb{M}$ of cubical topology is,  
at each resolution $M$, precisely the AQG version of LQG \cite{55}.  
Hence we can already speculate on what can be expected from the 
renormalisation flow:\\
\\
The Hamiltonian $H^{(0)}_M$ defined on the corresponding 
${\cal H}^{(0)}_M=L_2(SU(2)^{3 M^3},\;d^{3 M^3}\mu_H)$ ($\mu_H$ being 
the $SU(2)$ Haar measure) 
could be, but not needs to be, ordered in such a way as to annihilate 
the vacuum $\Omega^{(0)}_M=1$ of a discretised volume operator $V^{(0)}_M$ 
as it is standard in current regularisations of the Hamiltonian constraint.
In fact it may be desirable to choose the vacuum of $H^{(0)}_M$ not to 
coincide with that of $V^{(0)}_M$ in order to imprint its  
algebraic structure. 
The operator $H^{(0)}_M$ preserves ${\cal H}^{(0)}_M$ but not each subspace 
defined by sublattices of $\gamma_M$ and is thus not super local in contrast to  
the definition \cite{5}, for instance it will use volume operators local to 
a vertex and holonomies along plaquettes incident at that vertex (next neighbour
interaction). When running the renormalisation scheme, next to next neighbour 
interactions will be switched on (this is exactly what happens in the examples 
\cite{13}, \cite{68}) etc. and upon reaching the fixed point the Hamiltonian 
$H_M$ will involve all possible interactions {\it with precise coefficients}
and thus be spatially non local but hopefully quasi local 
(i.e. the interactions die off exponentially 
with the distance between vertices defined by the 3D taxi driver metric on 
the graph (each edge counting one unit)). Note that this quasi locality 
at finite resolution can be straightforwardly computed in the examples 
\cite{13}, \cite{68} by using the spatially
local continuum Hamiltonian and projecting it with $J_M, J_M^\dagger$
(blocking from the continuum) and is thus {\it physically correct}. In other 
words spatial locality in the continuuum is not in conflict with spatial
non locality at finite resolution. In fact we even {\it expect} a high degree of 
spatial
non-locality for very small $M$ for which the naive dequantisation of 
$H^{(0)}_M$ at any phase space point $p$ 
will be far off the classical value $H(p)$ which matches with the remarks 
made at the end of section \ref{s3}.\\
\\
Several questions arise from this picture should the flow display any 
fixed points:\\ 
First, for compact $\sigma$ and if indeed we use a countable set 
of lattices $\gamma_M$ as above, 
the resulting inductive limit Hilbert space could be separable (since 
there is a countable basis defined by vectors at finite resolution), thus  
would not be the standard LQG  representation space 
${\cal H}_{{\rm LQG}}=L_2(\overline{{\cal A}},d\mu_{{\rm AL}})$ of square 
integrable functions with respect to the Ashtekar-Lewandowski measure 
$\mu_{AL}$ on a space $\overline{{\cal A}}$ of distributional 
connections\footnote{In the non-compact case one may need to take the infinite 
tensor product
extension \cite{57} which is also non-separable but in a different sense and 
there one regains separability by passing to irreducible representations 
of the observable algebra.}.
In view of the uniqueness theorem \cite{4} one of its assumptions will
then be violated. The most likely possibility is that the corresponding
vacuum expectation value functional is not spatially diffeomorphism 
invariant since the diffeomorphism symmetry was explicitly 
broken in the renormalisation process. If the continuum 
Hamiltonian is still spatially diffeomorphism invariant we would be 
in the situation of spontaneous symmetry breakdown and could 
view this as a phase transition from the symmetric ${\cal H}_{{\rm LQG}}$
phase to this broken phase. Note that in our  
gauge fixed situation the diffeomorphism group is considered as a 
continuous symmetry group and not as a gauge group.\\   
Next, precisely due to this separability the resulting 
theory may not suffer from the discontinuity of holonomy operators which 
otherwise gives rise to what has been called the ``staircase problem''
in the literature \cite{69}: The cubical graphs $\gamma_M$ contain paths 
only along the coordinate axes. Since all $M\in \cal{M}$ are allowed,
these paths separate the points of the classical configuration space 
but not of the distributional space $\overline{{\cal A}}$. In particular,
any path that is not a ``staircase'' path cannot be accomodated at any finite 
resolution. Yet, the continuum Hamiltonian in the example \cite{68} 
does not care about the fact that it was defined as a fixed point of a flow 
of its finite resolution projections of cubical lattices only, it 
also knows how to act on states which are excited on non-``staircase''
paths. The reason for why this happens is as follows: Consider any path $c$
and some staircase approximant $\tilde{c}$ with the same end points as $c$
which has zero winding number with respect to $c$ so that 
$c\circ\tilde{c}^{-1}=\partial S$ bounds a surface. Then for an Abelian 
connection $A$ we have 
$\int_c A-\int_{\tilde{c}} A=\int_S dA$ and in the classical  
theory the surface integral converges to zero. In the quantum theory a 
similar calculation can be made because the Hilbert space measure is
supported on a different kind of distributional connections than 
$\overline{{\cal A}}$.\\ 
Finally, although the scheme strictly 
speaking was derived for theories with gauge fixed spacetime diffeomorphism 
constraints and a true physical Hamiltonian bounded from below, 
we may of course ``abuse'' it 
and also consider constraint operators $C(f)$ as Hamiltonians, define their 
finite resolution expressions $C(f)^{(0)}_M$ and let them flow (here $f$ is a 
test function on the spatial manifold $\sigma$)\footnote{In fact, the physical 
Hamiltonian of section \ref{s3} is not manifestly 
bounded from below, hence we to abuse the formalism in the sense that 
we assumed the semi-boundedness.}. This will involve as a new
ingredient also a discretisation of the smearing function $f$ which 
could be done using the maps $I_M, K_M$ for scalar fields, see \cite{68}. 
Suppose then 
that for all $f$ fixed point families $\{C(f)_M\}_{M\in \mathbb{N}_0}$ 
can be obtained. Should we expect that
the $C(f)_M$ represent a finite resolution version 
of the classical continuum constraint (hypersurface deformation) 
algebra $\{C(f),C(g)\}=C(h(f,g))$ where 
$h(f,g)$ is another (in general phase space dependent) smearing function?
The answer is in the negative! Namely, what we want is that the continuum 
operators obey $[C(f),C(g)]=i C(h(f,g))$ (with appropriate orderings of 
$C, h(f,g)$ in place). But if $C(f)_M=J_M^\dagger C(f) J_M$ then 
\be \label{6.1}
[C(f)_M, C(g)_M]=
J_M^\dagger[C(f) P_M C(g)-C(g) P_M C(f)]\; J_M
=C(h(f,g))_M+ J_M^\dagger[C(f) [P_M, C(g)]\;J_M
\ee
Thus, even if the continuum algebra closes, one {\it does not see this at any 
finite resolution} unless $[C(f),P_M]=0$ for all $f,M$. This will 
generically not hold because not even $C(f)_{M'}$ preserves 
$J_{MM'} {\cal H}_M, \; M<M'$ unless 
$J_{MM'} C(f)_M=C(f)_{M'} J_{MM'}$ i.e. the 
$C(f)$ form an inductive family which is not expected. Of course the correction
term in (\ref{6.1}) is expected to become ``small'' in the limit $M\to \infty$ 
in which 
$P_M\to 1_{{\cal H}}$ and thus an appropriate criterion for closure of the 
continuum algebra using only finite resolution projections can be formulated 
(see \cite{13} for the simpler case of rotational invariance).   
Note that the quantisation performed for spatially
diffeomorphism invariant Hamiltonian operators 
on the Hilbert space ${\cal H}_{{\rm LQG}}$ displayed in section \ref{s3} 
was {\it forced} to have the unphysical property $[H,P_M]=0$, see the 
statement just before 
(\ref{3.22}). But the underlying theorem exploits in a crucial way the 
non-separability of ${\cal H}_{{\rm LQG}}$ and thus fortunately 
does not hold on separable Hilbert spaces.\\ 
\\
Before closing, note that even if this approach of taking the UV limit  
can be completed and unless the manifold 
$\sigma$ is compact, we still must take the thermodynamic or infrared limit 
and remove the IR cut-off $R$ (compactification scale). As is well known 
from statistical quantum field theory \cite{51} interesting phenomena 
related to phase transitions can happen here. Moreover, constructible 
eamples of low dimensional 
interacting QFT show that the thermodynamic limit requires techniques 
that go beyond what was displayeed here \cite{60}.
However, we consider this 
momentarily as a ``higher order'' problem and reserve it for future research.\\
\\
{\bf Acknowledgements}\\
\\
The author thanks Thorsten Lang for in depth discussions about reduced density 
matrices, decoherence and the Lindblad equation in the context of 
renormalisation, Klaus Liegener for clarifying conversations about 
renormalisation of constraints and Alexander Stottmeister for very fruitful 
exchanges about renormalisation in terms of algebraic states.

\begin{appendix}

\section{Label structures for gauge field theories}
\label{sa}

We consider first the case of a $U(1)$ gauge theory so that $L_M$ is 
the space of functions on $\mathbb{Z}_M^D$ 
with values in $\mathbb{R}^D$.

Denote by $a=1,..,D$ a direction on the lattice and by $m\in \mathbb{Z}_M^D$ 
a vertex in it. Let (recall $\epsilon_M=R/M$)
\ba \label{a1}
[f^M_{a,m}]^b(x) &:=& \delta_a^b\; [\prod_{c\not=a} \delta(x^c,m^c \epsilon_M)]
\int_{[m^a\epsilon_M, (m^a+1)\epsilon_M)}\; dy\; \delta(x^a,y)
\nonumber\\
&=& \delta_a^b\; [\prod_{c\not=a} \delta(x^c,m^c \epsilon_M)]\; 
\chi_{[m^a \epsilon_M,(m^a+1)\epsilon_M)}(x^a)
\nonumber\\
{[}F_M^{a,m}]_b(x) &:=& \delta_b^a\; \delta(x^a,m^a \epsilon_M)\;
[\prod_{c\not=a}
\int_{[m^c\epsilon_M, (m^c+1)\epsilon_M)}\; dy^c\; \delta(x^c,y^c)]
\nonumber\\
&=&
\delta_b^a\; \delta(x^a,m^a \epsilon_M)\;
[\prod_{c\not=a}
\chi_{[m^c\epsilon_M, (m^c+1)\epsilon_M)}(x^c)]
\ea
Note the clopen structure of these distributions.
We have for $M<M'\; \Leftrightarrow \; n:=\frac{M'}{M}\in \mathbb{N}$ 
\be \label{a2}
<f^M_{b,m},F_{M'}^{a,m'}>:=
\int \; d^3x\; [F_{M'}^{a,m'}]_c(x) \; [f^M_{b,m}]^c(x)
=\delta^a_b\; 
\chi_{[m^a\epsilon_M, (m^a+1)\epsilon_M)}(m^{\prime a}\epsilon_{M'})
\prod_{c\not=a}
\chi_{[m^{\prime c}\epsilon_{M'}, 
(m^{\prime c}+1)\epsilon_{M'})}(m^c\epsilon_M)
\ee
which is non vanishing and then equals unity iff $a=b$ and 
\be \label{a3}
n m^a\le m^{\prime a} < n(m^a+1),\; 
m^{\prime b}\le n m^b  < m^{\prime b}+1; b\not=a
\ee 
This implies $m=m'$ for $M=M'$ ($n=1$).

Given $l_M\in L_M$ we obtain maps (we use the notation $l_M(m,a):=[l_M]^a(m)$)
\ba \label{a3}
&& I_M: L_M \to L;\; (I_M l_M)^a(x):=\sum_{m\in \mathbb{Z}_M^D,b\in \{1,,.,D\}}\; 
l_M(m,b)\; f_{m,b}^a(x)
\nonumber\\
&& K_M: L_M \to L^\ast;\; (K_M l_M)_a(x)
:=\sum_{m\in \mathbb{Z}_M^D,b\in\{1,..,D\}}\; 
l_M(m,b)\; F^{m,b}_a(x)
\ea
For $M<M'$ (i.e. $M'/M\in \mathbb{N}$)  
we have for any $l'_{M'}\in L_{M'},\; l_M\in L_M$
\ba \label{a4}
&& <l'_{M'},I_{M M'} l_M>_{L_{M'}}:=<K_{M'} l'_{M'}, I_M l_M>_{K\times I}
=\sum_{m'\in \mathbb{Z}_{M'}^D,a}\; l'_{M'}(m',a)\; (I_{MM'} l_M)(m',a)
\nonumber\\
&=& \sum_{m',m,a}\; l_{M'}(m',a)\; l_M(m,a)\;
\chi_{[n m^a, n (m^a+1))}(m^{\prime a})\; 
\prod_{b\not=b} \; \chi_{[m^{\prime b}, m^{\prime a})}(n m^b) 
\nonumber\\
&=& \sum_{m',a}\; l_{M'}(m', a)\; 
[\prod_{b\not=a}
\chi_{\mathbb{Z}_M}(m^{\prime b}/n)] \; 
l_M(m^a=[m^{\prime a}/n],m^b=m^{\prime b}/n;\;b\not=a)
\ea
from which we read off
\be \label{a5}
(I_{M M'} l_M)(m',a)= 
[\prod_{b\not=a}
\chi_{\mathbb{Z}_M}(m^{\prime b}/n)] \; 
l_M(m^a=[m^{\prime a}/n],m^b=m^{\prime b}/n;\;b\not=a)
\ee
where $[.]$ denotes the Gauss bracket and $n=M'/M$. It follows for 
$M<M'<M^{\prime\prime}$ with $n=M'/M,\; n'=M^{\prime\prime}/M',\; 
n^{\prime\prime}=M^{\prime\prime}/M=n n'$
\ba \label{a6}
&& (I_{M' M^{\prime}} I_{MM'} l_M)(a,m^{\prime\prime})
=[\prod_{b\not=a} \chi_{\mathbb{Z}_{M'}}(m^{\prime\prime b}/n')]
(I_{M M'} l_M)([m^{\prime\prime a}/n'], m^{\prime\prime b}/n';b\not=a)
\nonumber\\
&=& 
=[\prod_{b\not=a} \chi_{\mathbb{Z}_{M'}}(m^{\prime\prime b}/n')]\;
[\prod_{b\not=a} \chi_{\mathbb{Z}_M}((m^{\prime\prime b}/n')/n)]\;
l_M)([[m^{\prime\prime a}/n']/n], (m^{\prime\prime b}/n')/n;b\not=a)
\ea
which is non vanishing iff for all $b\not=a$ we have 
$m^{\prime\prime b}/n'\in \mathbb{Z}_{M'}$ and 
$m^{\prime\prime b}/(n' n)\in \mathbb{Z}_M$. Since the latter condition 
implies the first, the first factor in the last line of (\ref{a6}) can be 
dropped. Next let $m^{\prime\prime a}=n' m^{\prime a}+k'$ with 
$m^{\prime a}\in \mathbb{Z}_{M'},\; k'=0,..,n'-1$ and   
$m^{\prime a}=n m^a+k$ with 
$m^a\in \mathbb{Z}_M,\; k=0,..,n-1$. Then $m^{\prime\prime a}=n' n m^a+
(n' k+k')\le n' n m^a+n' n-1$ whence  
\be \label{a7}
[[m^{\prime\prime}/n']/n]=[m^{\prime a}/n]=m^a=[m^{\prime\prime a}/
n^{\prime\prime}]
\ee
so that indeed 
\be \label{a8}
I_{M' M^{\prime\prime}} I_{M M'}=I_{M M^{\prime\prime}}
\ee
~\\
We note that $I_{MM'}$ restricts to a map $I_{MM'}\; B_M \to B_{M'}$ where 
$B_M$ is the set of {\it bit valued maps} $\mathbb{Z}_M^D
\to \{0,1\}^D$. 
The idea is that the bit valued function $l_M$ captures the information 
which edges of the lattice are {\it excited}, i.e. $l_M(m,a)=1$ means that 
the edge with starting point $m\epsilon_M$ that extends one lattice unit  
into direction $a$ is excited, otherwise not. The reason why this information
is sufficient is that the information about the strength of the excitation 
sits in the group representation label of the corresponding holonomy. 
For Abelian groups we could take group representations labelled by 
$\mathbb{Z}$ or $\mathbb{R}$ and thus absorb the representation label into 
$l_M$ which then becomes $\mathbb{Z}^D$ or $\mathbb{R}^D$ valued and which 
has the structure of a module over the ring $\mathbb{Z}$ or a vector space 
over the field $\mathbb{R}$. That absorption of the representation label
into $l_M$ is no longer possible for non-Abelian gauge groups. Thus, in order 
to develop a unified language for all physical theories we use the 
bit valued space $B_M$ in the main text. 
Note that in the bit valued case, $B_M$ is formally a vector 
space over the field $F_2^D=\{0,1\}^D$. However, these module or vector 
space structures are irrelevant for the purposes of renormalisation. 

\section{Cyclic Abelian $C^\ast-$algebras}
\label{sb}

Recall that a (unital) $C^\ast$ algebra $\mathfrak{A}$ 
is a $^\ast-$algebra with a norm topology with respect to which it is 
complete (i.e. $\mathfrak{A}$ is a (unital) Banach algebra) such that 
$||a^\ast a||=||a||^2$ for all $a\in \mathfrak{A}$. One of the best known 
examples is the algebra ${\cal B}({\cal H})$ of bounded operators on a Hilbert 
space $\cal H$. Given a Hilbert space $\cal H$ 
and some sub-C$^\ast$algebra $\mathfrak{B}\subset {\cal B}({\cal H})$   
a vector $\Omega\in {\cal H}$ is called cyclic for $\mathfrak{B}$ if 
$\mathfrak{B}\Omega$ is dense in ${\cal H}$. It is called separating if 
$b\Omega=0$ implies $b=0$ for all $b\in \mathfrak{B}$. 

In the OS reconstruction algorithm we construct ${\cal H},\Omega$ in terms 
of equivalence classes of the vector space $V$ of 
vectors which are linear combinations of vectors
of the form $\Psi[\Phi]=w_T(\Phi(t_T,l_T))..w(\Phi(t_1,l_1)$ where $\Phi$ 
is a stochastic 
process labelled by $\mathbb{R}\times L$ for the measure $\mu$ with 
$t_T> t_{T-1}>..>t_1$ and 
the reflection positive inner product is 
$<\psi,\psi'>=\mu(\psi^\ast R\psi')$ where $R$ denotes the time reflection.
Here the $w_k$ belong to some space of bounded functions ${\cal W}$ 
on the range of the $\Phi(t,l), l\in L$ which form 
an Abelian $^\ast$-algebra with respect to point wise operations.
The equivalence class $[\psi]=\psi+{\cal N}$ is with respect to the null 
space $\cal N$ of $<.,.>$, 
${\cal H}$ is the completion of $V/{\cal N}$ and $\Omega=[1]$. Consider 
the operators $[Q(w,l)],w\in {\cal W},\; l\in L$ defined by 
$[Q(w,l)][\Psi]:=[Q(w,l) \Psi],\; (Q(w,l)\Psi)[\Phi]=w(\Phi(0,l))\Psi[\Phi]$.
It is important that $Q(w,l)$ only depends on the time zero field in order for 
this definition to be well defined (independent of the representative) since 
for $\Psi\in {\cal N}$
\ba \label{b.1}
&& ||Q(w,l)\Psi||^2=\mu((Q(w,l) \Psi)^\ast\; R\; Q(w,l)\; \Psi)   
=\mu((\Psi)^\ast\; R\; Q(W,l)^\ast\; Q(w,l)\; \Psi)   
\nonumber\\
&=&|<\Psi,Q(W,l)^\ast\; Q(w,l)\; \Psi)>|
\le ||\Psi||\; ||Q(W,l)^\ast\; Q(w,l)\; \Psi||=0
\ea
where $Q(w,l)^\ast R=R Q(w,l)^\ast$ and the Cauchy-Schwarz inequality was 
used. Consequently it is not clear that $\Omega$ is cyclic for the $Q(w,l)$
and in general this will not be the case. \\
\\
Nevertheless, in the main text, we quoted the following result:
\begin{Theorem} \label{thb.1} ~\\
Let $\cal H$ be a separable Hilbert space and $\Omega\in {\cal H}$ a unit 
vector. Then there exists an Abelian, unital sub-$C^\ast-$algebra 
$\mathfrak{B}\subset {\cal B}({\cal H})$ of bounded operators on $\cal H$
such that $\Omega$ is cyclic for $\mathfrak{B}$.
\end{Theorem}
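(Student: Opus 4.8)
The plan is to reduce the statement to a single basis-theoretic construction and then take $\mathfrak{B}$ to be a maximal abelian algebra of diagonal operators. The key reduction is the following: if one can produce an orthonormal basis $\{e_n\}_{n\ge 0}$ of $\mathcal H$ with $c_n:=\langle e_n,\Omega\rangle\neq 0$ for every $n$, then
\[
\mathfrak{B}:=\Big\{\textstyle\sum_n d_n\,|e_n\rangle\langle e_n|\ :\ (d_n)_{n\ge0}\in\ell^\infty\Big\},
\]
the bounded operators diagonal in this basis, is isometrically $^\ast$-isomorphic to $\ell^\infty$ (via $\|\mathrm{diag}(d_n)\|=\sup_n|d_n|$), hence an abelian, unital $C^\ast$-subalgebra of $\mathcal B(\mathcal H)$, and $\Omega$ is cyclic for it. Cyclicity follows at once: for $v=\sum_n v_n e_n$ and $N\in\mathbb N$ the finitely supported sequence $d_n=v_n/c_n$ $(n\le N)$, $d_n=0$ $(n>N)$ defines $D\in\mathfrak B$ with $D\Omega=\sum_{n\le N}v_n e_n$, and these partial sums converge to $v$, so $\overline{\mathfrak B\Omega}=\mathcal H$.

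The remaining task is to construct such a basis, and here I would stress that it is essential \emph{not} to choose an orthonormal basis containing $\Omega$: for such a basis $\Omega$ is orthogonal to all but one vector and is manifestly non-cyclic for the associated diagonal algebra. Instead I would start from an arbitrary orthonormal basis $\{f_n\}_{n\ge0}$ of the separable space $\mathcal H$ and rotate it away from $\Omega$. Fix a unit vector
\[
\psi=\sum_{n\ge0}\gamma_n f_n,\qquad \gamma_n\neq0\ \ \forall n,\quad \sum_n|\gamma_n|^2=1,
\]
for instance $\gamma_n=(1-q)^{1/2}q^{n/2}$ with $0<q<1$. Since any two unit vectors of a Hilbert space are carried into one another by a unitary, choose $U\in\mathcal B(\mathcal H)$ unitary with $U\psi=\Omega$ and set $e_n:=Uf_n$. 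Then $\{e_n\}$ is again an orthonormal basis and $\langle e_n,\Omega\rangle=\langle Uf_n,U\psi\rangle=\langle f_n,\psi\rangle=\gamma_n\neq0$, which is exactly the property required above.

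The construction is elementary, so there is no computational obstacle; the one conceptual point worth flagging — and the step I would expect a careful reader to scrutinise — is precisely the genericity remark of the previous paragraph. Cyclicity of $\Omega$ is a statement about choosing a basis \emph{generic} relative to $\Omega$ rather than one adapted to it, which is also the lesson relevant to OS reconstruction: the operators $Q(w,l)$ occurring there depend only on the time-zero field and generically do \emph{not} render $\Omega$ cyclic, whereas the theorem guarantees that \emph{some} abelian algebra, assembled by hand from a rotated basis and not from the given fields, always restores cyclicity once $\mathcal H$ is separable. If a singly generated $\mathfrak B$ is preferred, the same basis works with $\mathfrak B=C^\ast(A,\mathbf 1)$ for $A=\sum_n\lambda_n|e_n\rangle\langle e_n|$ and distinct bounded reals $\lambda_n$: if $v=\sum_n d_n e_n\perp g(A)\Omega$ for all $g\in C(\sigma(A))$, then $\int g\,d\nu=0$ for the finite atomic measure $\nu=\sum_n\overline{d_n}c_n\,\delta_{\lambda_n}$ (finite since $\sum_n|\overline{d_n}c_n|\le\|v\|\,\|\Omega\|$ by Cauchy--Schwarz), so $\nu=0$ by Riesz--Markov, whence $\overline{d_n}c_n=0$, $d_n=0$, and $v=0$.
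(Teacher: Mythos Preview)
Your proof is correct and rests on the same core idea as the paper's: produce a family of mutually orthogonal unit vectors each having nonzero overlap with $\Omega$, and take $\mathfrak{B}$ to be generated by the corresponding rank-one projections (equivalently, the diagonal algebra in that basis). Where you differ is in how the basis is manufactured. The paper starts from an orthonormal basis $\{b_I\}$ with $b_0=\Omega$ and runs an explicit two-term recursion $e_{I+1}=a\,b_{I+1}+b\,b_I+c\,e_I$ to build an orthonormal system $\{e_I\}_{I\ge 1}$ with $\langle\Omega,e_I\rangle>0$, solving the recursion in closed form ($\langle\Omega,e_I\rangle=1/\sqrt{I(I+1)}$) and verifying density by saturating Bessel's inequality. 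Your route is shorter and more conceptual: fix any orthonormal basis, pick a unit vector $\psi$ with all coefficients nonzero, transport by a unitary $U$ with $U\psi=\Omega$, and read off $\langle e_n,\Omega\rangle=\gamma_n$. This avoids any recursion and any density check, at the price of being non-constructive (the unitary $U$ is invoked abstractly). The paper's version, by contrast, yields explicit formulae for the $e_I$ in terms of the original basis, which can be useful if one wants the algebra $\mathfrak{B}$ tied concretely to a given physical basis. Your bonus observation that $\mathfrak{B}$ can be taken singly generated by a self-adjoint $A$ with simple spectrum is a nice addendum not present in the paper.
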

We could not find a proof of this statement in the literature, hence we 
supply an elementary one below.\\
\begin{proof}
Since ${\cal H}$ is separable, it has a countable orthonormal basis 
$b_I,\; I\in \mathbb{N}_0$ and we may assume $b_0=\Omega$ using the 
Gram-Schmidt algorithm. The idea is to construct inductively a basis 
$e_I,\; I\in \mathbb{N}_0$ such that $e_0=b_0=\Omega$, 
$<\Omega, e_I> >0$ for all $I\ge 0$ and $<e_I,e_J>=\delta_{IJ}$ for
all $I,J>0$. Then the bounded operators given by the self-adjoint
projections $P_0=:1_{{\cal H}},\; P_I:=e_I\; <e_I,.>$
generate the searched for unital Abelian $C^\ast-$algebra since
$P_0 P_I=P_I P_0=P_I,\; P_I P_J=\delta_{IJ} P_J,\;I,J>0$ and $P_I^\ast=P_I,\;
I\ge 0$ and $P_I \Omega=<e_I,\Omega>\; e_I,\; I\ge 0$ thus the finite linear 
combinations of the $P_I$ applied to $\Omega$ have dense range.

We pick $e_1=(b_0+b_1)/\sqrt{2}$ so that $<e_1,\Omega>=<e_1,b_1>=
1/\sqrt{2}>0$ and 
assume that we have constructed $e_1,..,e_I$ such that 
i. $e_J,\; 1\le J\le I$ 
is spanned by $b_0,..,b_J$ with real coefficients,  ii.  
$<e_J,e_K>=\delta_{JK},\; 1\le J,K\le I$, iii. $<b_J,e_J>>0,\; 
1\le J\le I$ and iv. $<e_J,\Omega> >0,\ 1\le J\le I$. 
Consider the Ansatz
\be \label{b2}
e_{I+1}=a b_{I+1}+b b_I+c e_I
\ee
for real coefficients $a,b,c$ to be determined. 
Note that $e_{I+1}$ is spanned by
$b_0,..,b_{I+1}$ so that i. is satisfied. Clearly $<e_J,e_{I+1}>=0$ for 
$1\le J\le I-1$ because $<e_J,e_I>=0$ by construction and 
$<e_J,b_I>=<e_J,b_{I+1}>=0$ since $e_J$ is spanned by $b_0,..,b_J,\; J<I$ 
which are orthonormal. We have 
\be \label{b3}
<e_I,e_{I+1}>=b <e_I,b_I>+c=0 \; \Rightarrow c=- b<e_I,b_I>
\ee
and normalisation gives 
\be \label{b4}
||e_{I+1}||^2=a^2+b^2+c^2+2bc <b_I, e_I>=a^2+b^2-c^2=a^2+b^2(1-<e_I,b_I>^2)=1
\ee
and a convenient choice is 
\be \label{b5}
a=-b=[2-<e_I,b_I>^2]^{-1/2}
\ee
so that orthonormalisation ii. is satisfied. Note that (\ref{b5}) is finite
as $0<|<e_I,b_I>|\le 1$. Since $a>0$ we have $<b_{I+1},e_{I+1}>>0$ 
satisfying iii. and since $b<0$ we have $<\Omega,e_{I+1}>=
c<\Omega,e_I>=-b <\Omega,e_I>\; <b_I,e_I>>0$ satisfying iv.

With this choice we may in fact solve the recursion since 
\be \label{b6}
c_{I+1}:=<b_{I+1},e_{I+1}>^2=a^2=(2-[<b_I,e_I>]^2)^{-1}=(2-c_I)^{-1}
\ee
which with $c_1=\frac{1}{2}$ is solved by $c_I=\frac{I}{I+1}$. Thus the 
explicit solution is $e_0=\Omega, e_1=(\Omega+b_1)/\sqrt{2}$ and for $I\ge 1$ 
\be \label{b7}
e_{I+1}=(b_{I+1}-b_I+\sqrt{\frac{I}{I+1}} e_I)\; \sqrt{\frac{I+1}{I+2}}
\ee
It follows for $I\ge 1$  
\be \label{b8}
<\Omega,e_{I+1}>=\sqrt{\frac{I}{I+2}} <\Omega,e_I>
=\sqrt{2\frac{I!}{(I+2)!}} <\Omega,e_1>=
\sqrt{\frac{I!}{(I+2)!}}
=\frac{1}{\sqrt{(I+1)(I+2)}}
\ee
which extends also to $I=0$. 

To see that the span of the $e_I$ is dense suppose that 
$<\psi,e_I>=0$ for all $I\ge 0$. This means $<\psi,b_0>=0$ and thus because 
$e_I$ is in the span of the $b_0,..,b_I$ succesively $<\psi,b_I>=0$ for all
$b_I,\; I\ge 0$ whence $\psi=0$ as the $b_I$ provide an orthonormal basis.
It is instructive to see that even the $e_I,\; I>0$ already lie dense which 
is demonstrated by showing that Bessel's inequality is saturated for 
$\Omega$. Indeed
\ba \label{b9}
&&||\Omega-\sum_{I>0} <\Omega,e_I> e_I||^2
=||\Omega||^2-\sum_{I>0} |<\Omega, e_I>|^2     
=1-\lim_{N\to \infty} \sum_{I=1}^N \frac{1}{I(I+1)}  
\nonumber\\
&=& 1-\lim_{N\to \infty} \sum_{I=1}^N [\frac{1}{I}-\frac{1}{I+1}]
=1-\lim_{N\to \infty} [1-\frac{1}{N+1}]=0
\ea
\end{proof}
In the non-separable case a non-constructive argument can be provided.
\begin{Theorem} \label{thb.2} ~\\
Suppose that ${\cal H}$ is not separable but that its dimension has 
the cardinality of $\mathbb{R}$. Let $\Omega\in {\cal H},\;||\Omega||=1$.
Then there exists an Abelian, unital sub-$C^\ast-$algebra 
$\mathfrak{B}\subset {\cal B}({\cal H})$ of bounded operators on $\cal H$
such that $\Omega$ is cyclic for $\mathfrak{B}$.
\end{Theorem}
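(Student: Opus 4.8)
The plan is to reduce the statement to the construction of a single concrete model Hilbert space and then transport it to $\mathcal{H}$ by a unitary. Recall that possessing a cyclic Abelian unital $C^\ast$-algebra is, by the spectral theorem for commutative $C^\ast$-algebras, the same as being able to realise $\mathcal{H}$ as $L_2(\Delta,\nu)$ for some compact Hausdorff $\Delta$ and probability measure $\nu$, with $\mathfrak{B}$ the algebra $C(\Delta)$ acting by multiplication and $\Omega$ corresponding to the constant function $1$. Since any two Hilbert spaces of the same Hilbert dimension are unitarily equivalent, and since the unit vectors form a single orbit under the unitary group, it suffices to exhibit \emph{one} such model $L_2(\Delta,\nu)$ whose Hilbert dimension equals the cardinality $\mathfrak{c}=|\mathbb{R}|$ of $\dim\mathcal{H}$; the multiplication algebra pulled back along a unitary $U$ with $U\Omega=1$ will then be the desired $\mathfrak{B}$.

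For the model I would take $\Delta=\{0,1\}^{\mathbb{R}}$ with its product topology, compact and Hausdorff by Tychonoff, and $\nu=\prod_{t\in\mathbb{R}}\mu_0$ the product of the uniform measures $\mu_0=\tfrac12(\delta_0+\delta_1)$ on each two-point factor. The decisive point is the choice of a \emph{diffuse} product measure rather than an atomic one: on $\ell_2(\mathbb{R})$ the diagonal multiplication operators admit no cyclic vector, because $f\cdot\Omega$ is supported in the countable support of $\Omega\in\ell_2$ and can never be dense in a non-separable space, so it is essential that $\nu$ have no atoms. With the product measure the Walsh-type functions $w_S=\prod_{t\in S} r_t$, where $r_t$ is the normalised Rademacher function of the $t$-th coordinate and $S\subset\mathbb{R}$ ranges over finite subsets, form an orthonormal system by independence of the factors and are complete by the standard description of an orthonormal basis of an infinite product of probability spaces. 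The index set $[\mathbb{R}]^{<\omega}$ of finite subsets of $\mathbb{R}$ has cardinality $\mathfrak{c}$, so $\dim L_2(\Delta,\nu)=\mathfrak{c}$ as required.

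I then check that $1$ is cyclic for $C(\Delta)$. Each $w_S$ is a finite product of coordinate functions and is therefore continuous, so any $g\in L_2(\Delta,\nu)$ orthogonal to all of $C(\Delta)$ is in particular orthogonal to every $w_S$ and hence vanishes; thus $C(\Delta)\cdot 1=C(\Delta)$ is dense. Multiplication by elements of $C(\Delta)$ yields bounded operators of operator norm equal to the sup norm, forming an Abelian unital $C^\ast$-subalgebra of $\mathcal{B}(L_2(\Delta,\nu))$. Finally I transport: choosing a unitary $U_0:\mathcal{H}\to L_2(\Delta,\nu)$ (which exists since the dimensions agree) and a unitary $W$ on $L_2(\Delta,\nu)$ with $W(U_0\Omega)=1$ (the unit vectors lie in one unitary orbit), I set $U=WU_0$ and $\mathfrak{B}=U^\ast\,C(\Delta)\,U$. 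Then $\mathfrak{B}$ is Abelian, unital and norm-closed, and $\mathfrak{B}\Omega=U^\ast(C(\Delta)\cdot 1)$ is dense in $\mathcal{H}$.

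The main obstacle is the \emph{selection of the model measure space}: one must produce a measure whose $L_2$ is non-separable of dimension \emph{exactly} $\mathfrak{c}$ while still admitting a cyclic vector for a commutative algebra, and the naive atomic models fail precisely for the reason indicated above. The remaining measure-theoretic points — that $\nu$ is a legitimate product probability measure and that the $w_S$ form a complete orthonormal basis — are standard; moreover the Baire-versus-Borel subtleties of the non-metrizable $\Delta$ can be sidestepped entirely by working with the $w_S$ directly, since their completeness already delivers the density of $C(\Delta)$ without any appeal to regularity of $\nu$.
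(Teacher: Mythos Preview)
Your proof is correct, but it takes a genuinely different route from the paper's. The paper's argument is much shorter: it chooses an orthonormal basis $(b_I)_{I\in\mathbb{R}}$ with $b_0=\Omega$ (invoking the axiom of choice), and then takes $\mathfrak{B}$ to be the unital $C^\ast$-algebra generated by the one-parameter group of translation unitaries $U_s b_I := b_{I+s}$. These commute, $U_0=1$, and $U_s\Omega=b_s$ hits every basis vector, so $\Omega$ is cyclic. That is the entire argument.

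The interesting point of comparison is that you explicitly rule out $\ell_2(\mathbb{R})$ on the grounds that its \emph{diagonal} multiplication algebra has no cyclic vector; the paper works precisely on $\ell_2(\mathbb{R})$ but uses the \emph{translation} algebra instead, which is also Abelian and avoids the support obstruction you identified. Your construction via $L_2(\{0,1\}^{\mathbb{R}},\nu)$ with the product Bernoulli measure and $\mathfrak{B}=C(\Delta)$ acting by multiplication is more elaborate, but it has two conceptual advantages: it makes the Gel'fand spectrum of $\mathfrak{B}$ explicit from the outset (whereas for the paper's group $C^\ast$-algebra the spectrum is the less tangible Bohr compactification $\widehat{\mathbb{R}_d}$), and it dovetails directly with the Riesz--Markov picture of ${\cal H}=L_2(\Delta(\mathfrak{B}),d\nu)$ used in Section~\ref{s4.3}. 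The paper's approach, on the other hand, buys brevity and makes transparent that the only structural input is a free transitive action of an Abelian group on the basis index set.
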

\begin{proof}
Let $b_I$ be an orthonormal basis of ${\cal H}$ with $b_0=\Omega$ (this 
needs the axiom of choice) and 
consider the Abelian one-parameter group of unitary operators 
densely defined by 
$U_s b_I=b_{I+s},\; s\in \mathbb{R}$. Since $U_s^\ast=U_{-s}$ these generate 
an unital, Abelian $C^\ast-$algebra $\mathfrak{B}$ of bounded operators and 
obviously $\Omega$ is cyclic for $\mathfrak{B}$ as $U_I\Omega =b_I$.
\end{proof}
In the case of LQG we may label the basis $b_I$ by spin net work functions.
These have discrete labels (adjacency matrix (which edge connects which 
vertices), irreducuible representations on the edges and 
intertwiners at the vertices) and continuous labels (germs of piecewise 
analytic edges). These data have the cardinality of finite unions of 
$\mathbb{Z}$ and $\mathbb{R}$ respectively and thus have the cardinality 
of $\mathbb{R}$ again \cite{64} (even the space of real valued sequences 
$\mathbb{R}^{\mathbb{N}}$ has the 
same cardinality as $\mathbb{R}$), hence the above construction can be applied 
to LQG in principle.

In the separable case we may alternatively 
consider the index set $\mathbb{Z}$ and the 
unitary one parameter group $U_s b_I=b_{I+s},\; s\in \mathbb{Z}$ to apply 
the same construction. As elucidated by the proofs, the algebra $\mathfrak{B}$ 
while always existent under the assumptions made has in general not a very 
direct relation to the operators that have an immediate physical interpretation.

\section{Brief Account of Renormalisation Group Terminology}
\label{sc}

The purpose of this brief appendix is to relate notions of the path
integral renormalisation group usually formulated in terms of actions 
\cite{26} now in terms of measures.\\
\\
Recall from the main text that condition (\ref{5.16}) grants the existence 
of $\mu$ under rather generic conditions and that the strategy 
is therefore to construct an iterative 
sequence of measure families $\mathbb{N}_0\ni n\mapsto 
\{\mu^{(n)}_M\}_{M\in {\cal M}}$ {\it called renormalisation (group) flow}
with initial family as above such that  
either i. it converges or ii. a subsequence of it converges or iii. it has 
a fixed point. The limit or the fixed point family then satisfy 
(\ref{5.16}). As there is no natural choice for such a  
sequence, there are many possibilities which are guided mostly by the
principle of least technical complexity. In that sense, case ii. is really
case i. because we might have simply defined the sequence by passing to the 
subsequence right from the beginning. On the other hand, a fixed point maybe
unstable (i.e. it has non-convergent directions) so that iii. is 
qualitatively different from i. In practice, one is generically confronted 
with case iii. and what one finds is the following: After performing a sufficient
number $n$ of iterations, one observes that the flow stabilises in the sense  
that the family member $\mu^{(n)}_M$ can be parametrised 
in terms of vertex functions $v_{M,k}^{(n)}(m_1,..,m_k)$ 
on $k=0,1,2,...$ copies of the vertex set of $\gamma_M$ which are 
updated at each renormalisation step $n\mapsto n+1$ (in general $k$ can take 
all $M^3$ values). Such functions can be written as linear combinations
of products of Kronecker-$\delta$ functions on two copies of the vertex set, 
hence, the flow  can be stated in terms of the 
coefficients $c$ of those. At the fixed point in general not all those 
coefficients take fixed numerical valuews, rather there are fewer functional 
relations between them than their number. Pick some of them 
$\lambda_M^1,\lambda_M^2,..$ as independent ones and solve those relations 
for the 
rest which then become fixed functions of the $\lambda_M^1,\lambda_M^2,..$.
Due to the fixed point condition, the $\lambda_M^j$ can be written as 
functions of the $\lambda_{M'}^j$ for all $M<M'$. Obviously the theory 
is predictive iff the number of the $\lambda_M^j$ as $M\to \infty$ is finite.
The continuum measure $\mu^\ast$ is then determined trough the 
cylindrically consistent family $\{\mu^\ast_M\}_{M\in {\cal M}}$ 
and depends on those 
$M-$independent parameters which we denote by $\lambda_1,\lambda_2,..$.

The vertex functions 
alluded to above can be consdered as contributions to some 
discretised action $S^{(n)}_M$ and their coefficients in front of the 
algebraically independent terms can be called coupling
constants $g$. It is customary to choose them dimensionfree by multiplying 
by suitable powers of the cutoff (here $\epsilon_M=R/M$). This way 
there are more coupling constants $g$ that flow than coefficients $c$ 
that flow in our scheme.
In particular, if the decay of the two point function is determined by a 
mass parameter $\lambda$, the dimensionful correlation length
is $\lambda^{-1}$ and the dimensionless correlation length is 
$\xi_M=(\lambda\epsilon_M)^{-1}$ which diverges in the continuum limit 
$M\to\infty$. Under a renormalisation step 
(e.g. $\xi^{(n+1)}_M=\xi^{(n)}_{M'}=M'/M\; \xi^{(n)}_M,\; M<M'$ 
in the real space schemes considered in the main text, mostly $M'=2M$ is used)
the dimensionfree correlation length increases. The fixed point in this 
usual jargon
maybe characterised by the number
of a. stable, b. unstable c. and other directions 
of the flow in the parameter space of coupling constants. This distinction
is done by linearisation 
of the flow equation $g^{(n+1)}=F(g^{(n)})$ around the fixed point solutions
$g^\ast=F(g^\ast)$, a subset of which is relations  
among the coefficients $c$ of our scheme and the others correspond 
to the additional flow parameters just mentioned. 
The negative, positive and zero eigenvalues 
(critical exponents) of the $\beta^\ast=\beta(g^\ast)$ matrix associated 
to the $\beta$ function $\beta(g)=[\nabla F](g)$ defines those 
stable=irrelevant, unstable=relevant and other=marginal directions   
as the corresponding eigenvectors. If the integral curves of the 
irrelevant directions through the fixed point are integrable (surface forming)
then they are tangent to that integral manifold called the critical surface 
trough the fixed point which is of finite co-dimension for a predictive 
theory because the stable couplings run into the fixed 
point no matter what their initial values were (at least in the attraction
neighbourhood of that fixed point). This is the phenomenon of universality 
(for the fixed point under consideration -- there maybe several in which case 
one speaks of universality classes).
Note however, that the fixed point in this language is in a space 
of couplings $g$ that is higher dimensional than in our scheme. The relation 
between them is as follows: Consider the renormalisation group flow 
starting from a point $g$ close to the critical surface then the trajectory
will miss the point $g^\ast$ but comes close and then diverges into the 
relevant directions. We now tune the starting point $g^{(0)}(n)$ in such a way 
that $g^{(n)}(n)$ is independent of $n$ which means that in particular 
the starting value of the relevant couplings is closer and closer 
to the fixed point value and the critical surface. 
In particular since $\xi^{(n)}_M=2^n \xi^{(0)}_M$ 
we pick $\xi^{(0)}_M(n)=\xi_M 2^{-n}$. This means that in the limit 
$n\to \infty$ we end up at a point $g$ which depends on the fine tuned values 
(such as $\lambda=\epsilon_M/\xi_M$) and these correspond precisely
to the values $\lambda_j$ in our scheme. Such fine tuned trajectories that 
depend on the fine tuning parameters $\lambda$ are called renormalised 
trajectories which formally run inside the critical surface into the fixed 
point and then end up off it. If one wishes to describe the theories defined 
by such renormalised trajectories at finite resolution $M$ in terms of 
actions $S^\ast_M$ rather than measures $\mu^\ast_M$,
then these actions are termed {\it perfect} \cite{26}.
 
\section{Trotter-Kato formula for bounded degenerate semi-groups}
\label{sd}

Let ${\cal H}$ be a Hilbert space, $P$ a projection operator in $\cal H$ 
and $\mathbb{R}^+\to {\cal B}({\cal H});\; s\mapsto T(s)$ be 
a strongly continuous contraction semi-group (i.e. $||T(s)||\le 1$). 
One would like to show that the degenerate Kato-Trotter product 
\be \label{d.1}
K(s):=\lim_{N\to \infty} \; [T(\frac{s}{N})\;P]^N 
\ee
strongly converges to a strongly continuous 
degenerate contraction semi-group, that is,
$K(0)=Q\le P$ is a subprojection, $Q{\cal H}$ is an invariant subspace of
$K(s)$ and $K(s)(1-Q)=0$. A general proof of this result 
generalising 
the ususal case that $P$ in (\ref{d.1}) is replaced by 
another contraction semigroup $S(s)$ appears not be available yet in the 
literature.

In what follows we consider therefore the simpler case that $T(s)=e^{-s A}$ 
has a bounded generator $A$.
\begin{Theorem} \label{thd.1} ~\\
Let $T(s),P$ be as above with $T(s)=e^{-s A},\; ||A||<\infty$ bounded. Then 
\be \label{d.2}
K(s)=P\; e^{-s\; PAP}
\ee
that is $Q=P$ and $K(s)$ is a strongly continuous semigroup with  
generator $PAP$ on $P{\cal H}$.
\end{Theorem}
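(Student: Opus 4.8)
The plan is to reduce the degenerate product to an ordinary Kato--Trotter product living on the subspace $P{\cal H}$, on which the generator becomes the compression $PAP$. The first step is a purely algebraic rearrangement. Writing $X:=e^{-sA/N}$ and using $P^2=P$, one checks by induction that
\be \label{d.3}
(XP)^N = X\,(PXP)^{N-1}\,P ,
\ee
the base case $N=1$ being trivial and the step following from $PP=P$. This isolates a single stray factor $X$ on the left and a single $P$ on the right, while the bulk of the product is the $(N-1)$-st power of the compressed operator $W_N:=P\,e^{-sA/N}\,P$, which maps ${\cal H}$ into $P{\cal H}$ and annihilates $(1-P){\cal H}$.

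Second, I would compare $W_N$, restricted to $P{\cal H}$, with the genuine semigroup step $V_N:=e^{-\frac{s}{N}PAP}$ on $P{\cal H}$. Since $A$ is bounded the power series for $e^{-sA/N}$ converges in norm, and using $P^2=P$ once more gives $W_N|_{P{\cal H}} = \mathrm{id}_{P{\cal H}} - \frac{s}{N}\,PAP + O(N^{-2})$, which coincides with $V_N = \mathrm{id}_{P{\cal H}} - \frac{s}{N}PAP + O(N^{-2})$ up to an error of order $N^{-2}$ in operator norm. Crucially both operators are contractions on $P{\cal H}$: one has $\|W_N\|\le\|P\|\,\|e^{-sA/N}\|\,\|P\|\le 1$ because $e^{-sA}$ is a contraction semigroup, and $PAP$ inherits accretivity ($\mathrm{Re}\langle\psi,PAP\psi\rangle\ge 0$ on $P{\cal H}$) from $A$, so $\|V_N\|\le 1$. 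I emphasise that this comparison must be carried out on $P{\cal H}$ and not on all of ${\cal H}$: on $(1-P){\cal H}$ one has $W_N=0$ while $V_N=\mathrm{id}$, so the difference would fail to be small there.

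Third, the standard telescoping identity
\be \label{d.4}
W_N^{\,n}-V_N^{\,n}=\sum_{k=0}^{n-1} W_N^{\,k}\,(W_N-V_N)\,V_N^{\,n-1-k}
\ee
together with the contraction bounds yields $\|W_N^{\,N-1}-V_N^{\,N-1}\|\le (N-1)\,\|W_N-V_N\|=O(N^{-1})$ on $P{\cal H}$. Since $V_N^{\,N-1}=e^{-\frac{s(N-1)}{N}PAP}\to e^{-s\,PAP}$ in norm, this gives $W_N^{\,N-1}\to e^{-s\,PAP}$ on $P{\cal H}$, hence $(PXP)^{N-1}\to P\,e^{-s\,PAP}P$ on ${\cal H}$. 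Reassembling (\ref{d.3}) and using $\|e^{-sA/N}-1\|\le \frac{s\|A\|}{N}e^{s\|A\|/N}\to 0$ together with the trailing projection then gives $(XP)^N\to P\,e^{-s\,PAP}$, which is (\ref{d.2}); the assertions $Q=K(0)=P$, invariance of $P{\cal H}$, $K(s)(1-P)=0$ and that the generator on $P{\cal H}$ is $PAP$ follow by inspection.

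The main obstacle I anticipate is bookkeeping rather than analysis: one must consistently distinguish $PAP$ as an operator on $P{\cal H}$, where it generates a genuine contraction semigroup, from its extension by zero to ${\cal H}$, where the exponential acts as the identity on $(1-P){\cal H}$, and one must apply the contraction estimate feeding (\ref{d.4}) strictly on $P{\cal H}$. The boundedness of $A$ is precisely what renders the $O(N^{-2})$ remainder and the term-by-term norm bounds elementary; the genuinely hard general case, in which $P$ is replaced by a second contraction semigroup and $A$ is unbounded with $A+B$ essentially self-adjoint, is the full Kato--Trotter regime that this theorem deliberately sidesteps.
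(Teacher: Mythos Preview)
Your proof is correct and follows essentially the same strategy as the paper: both hinge on the telescoping identity together with the $O(N^{-2})$ estimate for $P e^{-sA/N}P - e^{-sPAP/N}$, which after summing $N$ terms gives an $O(N^{-1})$ total error. The organisational difference is minor: you first peel off the stray leading factor via $(XP)^N=X(PXP)^{N-1}P$ and then telescope $W_N^{N-1}-V_N^{N-1}$ on $P{\cal H}$, whereas the paper telescopes $(T(s/N)P)^N-(Pe^{-sPAP/N})^N$ directly on ${\cal H}$ and isolates the single $k=0$ boundary term (which is only $O(N^{-1})$ because no $P$ precedes it). One small point worth noting: you invoke the contraction bounds $\|W_N\|,\|V_N\|\le 1$, which uses $s\ge 0$; the paper instead uses the crude bound $\|e^{-sA/N}\|\le e^{|s|\|A\|/N}$, which is why it can remark at the end that the sign of $s$ played no role.
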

\begin{proof}
Consider the identity (no domain questions arise)
\ba \label{d.3}
B_N(s) &:=& [T(\frac{s}{N})\;P]^N-K(s)=[T(\frac{s}{N})\;P]^N-K(\frac{s}{N})^N=
\sum_{k=0}^{N-1}\; [T(\frac{s}{N})\;P]^k\;
[T(\frac{s}{N})\;P-K(\frac{s}{N})]\; K(\frac{s}{N})^{N-1-k}
\nonumber\\
&=& 
[T(\frac{s}{N})\;P-K(\frac{s}{N})] \; K(\frac{s}{N})^{N-1}
+
\sum_{k=1}^{N-1}\; [T(\frac{s}{N})\;P]^k\;
[P T(\frac{s}{N})\;P-K(\frac{s}{N})]\; K(\frac{s}{N})^{N-1-k}
\ea
Using $||P||\le 1$ and the 
sub-multiplicativity of the operator norm we estimate
\be \label{d.4}
||B_N(s)||\le \{||T(\frac{s}{N})\;P-K(\frac{s}{N})||\; 
+(N-1)\;||P\;T(\frac{s}{N})\;P-K(\frac{s}{N})||\}\; e^{2 \;|s|\; ||A||}
\ee
Since $A, PAP$ are bounded, the Taylor expansions converge in norm and we can 
estimate 
\ba \label{d.5}
||T(\frac{s}{N})\;P-K(\frac{s}{N})||
&\le& \sum_{n=1}^\infty\frac{|s|^n}{N^n\; n!} ||[A^n-(PAP)^n]P||
\le \frac{2}{N}\;e^{|s|\;||A||}
\nonumber\\
||P\;T(\frac{s}{N})\;P-K(\frac{s}{N})||
&\le& \sum_{n=2}^\infty\frac{|s|^n}{N^n\; n!} ||P[A^n-(PAP)^n]P||
\le \frac{2}{N^2}\;e^{|s|\;||A||}
\ea
whence 
\be \label{d.6}
\lim_{N\to \infty} ||B_N(s)||=0
\ee
hence the convergence is even uniform in this case.
\end{proof}
Note that the sign of $s$ played no role in the proof, but it will in the 
unbounded case.

\section{Projections and Limits}
\label{se}

In the main text we need the following result:
\begin{Theorem} \label{the.1}
Let $\cal H$ be a separable Hilbert space with projection $P$. Let $\Omega$ 
be cyclic 
for a $C^\ast-$subalgebra $\mathfrak{B}\subset {\cal B}({\cal H})$. Suppose 
that a self-adjoint operator $H$ has spectrum bounded from below by zero and 
that zero is an isolated eigenvalue with eigenvector $\Omega$. Then 
there exist elements $w_{I,k}\in \mathfrak{B};\;I,k\in \mathbb{N}$ such that 
\be \label{e.1}
s-\lim_{n\to\infty}\;[\lim_{k\to \infty}\;[\lim_{\beta\to\infty}\;
P(n,k,\beta)]]]=P,\;\;
P(n,k,\beta)=\sum_{I=1}^n\; w_{I,k}\; e^{-\beta H}\; w_{I,k}^\dagger
\ee
\end{Theorem}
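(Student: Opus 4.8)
The plan is to peel off the three limits from the inside out: the $\beta$-limit using the spectral gap at the ground state, the $k$-limit using cyclicity of $\Omega$, and the $n$-limit using completeness of an orthonormal basis of $P\mathcal{H}$. I read the hypothesis as asserting that $0$ is a \emph{simple} eigenvalue (uniqueness of the vacuum, as in (\ref{5.37})), since this is what the argument requires.

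First I would fix an orthonormal basis $\{e_I\}_{I\in\mathbb{N}}$ of the closed, separable subspace $P\mathcal{H}$. Since $0$ is isolated there is a gap $g>0$ with $\mathrm{spec}(H)\subset\{0\}\cup[g,\infty)$, so that $e^{-\beta H}=\Omega<\Omega,.>+e^{-\beta H}E_{[g,\infty)}$, with $E_{[g,\infty)}$ the spectral projection of $[g,\infty)$; the second term has operator norm at most $e^{-\beta g}$. As each $w_{I,k}\in\mathfrak{B}$ is bounded, the innermost limit then exists even in operator norm,
\be
\lim_{\beta\to\infty}P(n,k,\beta)=\sum_{I=1}^n (w_{I,k}\Omega)\;<w_{I,k}\Omega,.>
\ee

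Next, cyclicity of $\Omega$ makes $\mathfrak{B}\Omega$ dense in $\mathcal{H}$, so for each $I$ I would select $w_{I,k}\in\mathfrak{B}$ with $||w_{I,k}\Omega-e_I||<1/k$. Writing $v_{I,k}:=w_{I,k}\Omega$, the elementary bound $||v<v,.>-e<e,.>||\le ||v||\,||v-e||+||v-e||\,||e||$ shows each of the finitely many summands converges in norm, whence for fixed $n$
\be
\lim_{k\to\infty}\Big[\lim_{\beta\to\infty}P(n,k,\beta)\Big]=\sum_{I=1}^n e_I\;<e_I,.>=:P_n,
\ee
the orthogonal projection onto $\mathrm{span}\{e_1,\dots,e_n\}$. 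Finally, since $\{e_I\}$ is a basis of $P\mathcal{H}$, for every $\phi$ one has $P_n\phi=\sum_{I=1}^n <e_I,\phi>\,e_I\to P\phi$, i.e.\ $s\text{-}\lim_{n\to\infty}P_n=P$; norm convergence here would fail precisely when $P$ has infinite rank, which is why the outermost limit must be strong, consistent with the statement. Concatenating the three displays proves the claim.

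The argument is mostly bookkeeping once the hypotheses are correctly parsed; the single point that genuinely does the work, and which I would flag as the crux, is the simplicity of the ground state. This is what forces $e^{-\beta H}\to\Omega<\Omega,.>$ rather than convergence to the (possibly higher-rank) projection onto the full zero-eigenspace, and it is only with this rank-one building block that an arbitrary orthonormal basis of $P\mathcal{H}$ can be reproduced from $\Omega$ alone through the cyclic algebra. If $0$ were degenerate one would instead obtain operators $w_{I,k}E_0 w_{I,k}^\dagger$ in the $\beta$-limit and the clean telescoping into a projection would break down; this is exactly the juncture at which the vacuum-uniqueness assumption discussed in Section~\ref{s5.3} is used.
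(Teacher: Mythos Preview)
Your proof is correct and follows essentially the same three-step strategy as the paper's own argument in Appendix~\ref{se}: an orthonormal basis of $P\mathcal H$, cyclic approximants $w_{I,k}\Omega\to b_I$, and the spectral gap to collapse $e^{-\beta H}$ onto $\Omega\langle\Omega,\cdot\rangle$. The only cosmetic differences are that the paper packages the final step as a single telescopic identity (\ref{e.4}) rather than treating the three limits sequentially, and that you observe the two inner limits actually hold in operator norm (since the $w_{I,k}$ are bounded and the sums are finite), which is a harmless sharpening of the paper's strong-limit formulation; your explicit flagging of the simplicity of the zero eigenvalue is also apt, as the paper's hypothesis is indeed meant to be read that way.
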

\begin{proof}:
Let $b_I,\; I\in \mathbb{N}$ be an ONB of $P{\cal H}$. Then by cyclicity we 
find for each $I$ a sequence $w_{I,k}\in \mathfrak{B}$ such that 
$s-\lim_{k\to \infty} w_{I,k}\Omega=b_I$. Next, let $E_H$ be the 
projection valued measure of $H$, then by the spectral theorem and 
$H\Omega=0$
\be \label{e.2}
||e^{-\beta H}\psi-\Omega\; <\Omega, \psi>||^2
=<\psi,e^{-2\beta H}\; \psi>-|<\Omega,\psi>|^2
=\int_{\sigma(H)} \; d<\psi,E_H(\lambda)\psi>\; e^{-2 \lambda \beta}
-|<\Omega,\psi>|^2
\ee
where $\sigma(H)\subset \mathbb{R}^+_0$ denotes the spectrum of $H$. By
assumption $E_H(\lambda)=\theta(\lambda)\; \Omega\; <\Omega,.>+E'_H(\lambda)$
where $E'_H(\lambda)=0$ for $\lambda\in [0,\epsilon)$ and some $\epsilon>0$.
It follows 
\be \label{e.3}
||e^{-\beta H}\psi-\Omega\; <\Omega, \psi>||^2
\le \int_{\sigma(H)\cap [\epsilon,\infty)} \; 
d<\psi,E_H(\lambda)\psi>\; e^{-2 \lambda \beta}
\le e^{-2\beta\epsilon} ||\psi||^2
\ee
thus $s-\lim_{\beta\to \infty} e^{-\beta H} \psi=\Omega \; <\Omega, \psi>$.
Finally, $P{\cal H}$ is the closure of the finite linear span of the $b_I$, 
thus $s-\lim_{n\to \infty} P_n\psi=P\psi$ where 
$P_n\psi=\sum_{I=1}^n \; b_I \; <b_I,\psi>$ 

It remains to write $P(n,k,\beta)-P$ as a telescopic sum
\be \label{e.4}
(P-P(n,k,\beta))\psi
=(P-P_n)\psi+\sum_{I=1}^n\; [b_I\; <b_I,\psi>-w_{I,k}\Omega\;
<w_{I,k}\;\Omega,\psi>]+
\sum_{I=1}^n\;w_{I,k}[\Omega <\Omega,w_{I,k}^\dagger \psi>
-e^{-\beta H} w_{I,k}^\dagger \psi]
\ee
and to take the strong limits in the indicated order, i.e. first 
$\beta\to\infty$ at finite $n,k$ removes the third term, then 
$k\to\infty$ at finite $n$ removes the second term and finally $n\to\infty$
removes the first term.
\end{proof} 

\end{appendix}

}


\begin{thebibliography}{99}

\parskip -5pt


\bibitem{1} P.A.M Dirac, Phys. Rev. {\bf 73} (1948) 1092;
Rev. Mod. Phys. {\bf 21} (1949) 392\\
J. A. Wheeler, ``Geometrodynamics'', Academic Press, New York, 1962\\
B. S. DeWitt, Phys. Rev. {\bf 160} (1967) 1113; Phys. Rev.
{\bf 162} (1967) 1195; Phys. Rev. {\bf 162} (1967) 1239\\
A. Komar, ``General Relativistic Observables via Hamilton Jacobi
Functionals'', Phys. Rev. {\bf D4} (1971) 923-927;
``Commutator Algebra of General Relativistic
Observables'', Phys. Rev. {\bf D9} (1974) 885-888;
``Generalized Constraint Structure for Gravitation Theory'',
Phys. Rev. {\bf D27} (1983) 2277-2281;
``Consistent Factor Ordering of General Relativistic Constraints'',
Phys. Rev. {\bf D20} (1979) 830-833\\
P. G. Bergmann, A. Komar, ``The Coordinate Group Symmetries of General
Relativity'', Int. J. Theor. Phys. {\bf 5} (1972) 15;
``The Phase Space Formulation of General
Relativity and Approaches Towards its Canonical Quantization'',
Gen. Rel. Grav., {\bf 1}  (1981) 227-254\\
K. Kucha{\v r}, in : ``Quantum Gravity II : A second Oxford
Symposium'', C.J. Isham, R. Penrose, D. W. Sciama (eds.), Clarendon Press,
Oxford, 1981\\
C.J. Isham, K. Kucha{\v r}, Ann. Phys. {\bf 164} (1985) 288;
Ann. Phys. {\bf 164} (1985) 316

\bibitem{1a}
A. Ashtekar, ``New Variables for Classical and Quantum Gravity''
Phys. Rev. Lett. {\bf 57} (1986) 2244-2247\\
J. F. G. Barbero, ``A real polynomial formulation of general relativity in 
terms of connections'', Phys. Rev. {\bf D49} (1994) 6935-6938


\bibitem{2} C. Rovelli, ``Quantum Gravity'', Cambridge University
Press, Cambridge, 2004.\\
T. Thiemann, ``Modern Canonical Quantum General Relativity'', Cambridge 
University Press, Cambridge, 2007\\
J. Pullin, R. Gambini, ``A first course in Loop Quantum Gravity'',
Oxford University Press, New York, 2011\\
C. Rovelli, F. Vidotto, ``Covariant Loop Quantum Gravity'', Cambridge 
University Press, Cambridge, 2015\\
K. Giesel, H. Sahlmann, 
``From Classical To Quantum Gravity: Introduction to Loop Quantum Gravity'',
PoS QGQGS2011 (2011) 002, [arXiv:1203.2733] 

\bibitem{3}  M. Creutz, ``Quarks, Gluons and Lattices'', Cambridge
University Press, Cambridge, 1983

\bibitem{4} 
A. Ashtekar, C.J. Isham, ``Representations of the Holonomy
Algebras of Gravity and Non-Abelean Gauge Theories'',
Class. Quantum Grav. {\bf 9} (1992) 1433, [hep-th/9202053]\\
A. Ashtekar, J. Lewandowski, ``Representation
theory of analytic Holonomy $C^\star$ algebras'', in ``Knots and
Quantum Gravity'', J. Baez (ed.), Oxford University Press, Oxford 1994\\
A. Ashtekar, J. Lewandowski, ``Projective Techniques and
Functional Integration for Gauge Theories'', J.
Math. Phys. {\bf 36}, 2170 (1995), [gr-qc/9411046]\\
C. Fleischhack, ``Representations of the Weyl algebra in quantum geometry'',
Commun. Math. Phys. {\bf 285} (2009) 67-140, [math-ph/0407006]\\
J. Lewandowski, A. Okolow, H. Sahlmann, T. Thiemann,
``Uniqueness of diffeomorphism invariant states on holonomy-flux algebras''
Commun. Math. Phys. {\bf 267} (2006) 703-733, [gr-qc/0504147] 

\bibitem{5} 
T. Thiemann, ``Anomaly-free Formulation of non-perturbative,
four-dimensional Lorentzian Quantum Gravity'', Physics Letters {\bf B380}
(1996) 257-264, [gr-qc/9606088]\\
T. Thiemann, ``Quantum Spin Dynamics (QSD)'',
Class. Quantum Grav. {\bf 15} (1998) 839-73, [gr-qc/9606089];
``Quantum Spin Dynamics (QSD) : II.
The Kernel of the Wheeler-DeWitt Constraint Operator'',
Class. Quantum Grav. {\bf 15} (1998) 875-905, [gr-qc/9606090];
``Quantum Spin Dynamics (QSD) : IV.
2+1 Euclidean Quantum Gravity as a model to test 3+1
Lorentzian Quantum Gravity'', Class. Quantum Grav. {\bf 15} (1998) 1249-1280,
[gr-qc/9705018];
``Quantum Spin Dynamics (QSD) : V.
Quantum Gravity as the Natural Regulator of the Hamiltonian Constraint
of Matter Quantum Field Theories'',
Class. Quantum Grav. {\bf 15} (1998) 1281-1314, [gr-qc/9705019];

\bibitem{6} 
M.H. Goroff, A. Sagnotti, Phys. Lett. {\bf B160} (1985) 81;
Nucl. Phys. {\bf B266} (1986) 709

\bibitem{7}
S. A. Hojman, K. Kuchar, C. Teitelboim, ``Geometrodynamics Regained'',
Annals Phys. {\bf 96} (1976) 88-135

\bibitem{8} 
A. Ashtekar, J. Lewandowski, D. Marolf, J. Mour\~ao, T.
Thiemann, ``Quantization for diffeomorphism invariant theories
of connections with local degrees of freedom'', Journ. Math. Phys.
{\bf 36} (1995) 6456-6493, [gr-qc/9504018]

\bibitem{9} T. Thiemann, ``Quantum Spin Dynamics (QSD) : III.
Quantum Constraint Algebra and Physical Scalar Product in Quantum General
Relativity'', Class. Quantum Grav. {\bf 15} (1998) 1207-1247,
[gr-qc/9705017]

\bibitem{10}
T. Thiemann, ``Quantum spin dynamics. VIII. The Master constraint'',
Class. Quant. Grav. {\bf 23} (2006) 2249-2266, [gr-qc/0510011] 

\bibitem{11} 
K. Giesel, T. Thiemann, ``Scalar Material Reference Systems and Loop Quantum 
Gravity'', Class. Quant. Grav. {\bf 32} (2015) 135015,
[arXiv:1206.3807]

\bibitem{12} A. Laddha, M. Varadarajan,
``The Diffeomorphism Constraint Operator in Loop Quantum Gravity'',
Class. Quant. Grav. {\bf 28} (2011) 195010, [arXiv:1105.0636]\\
C. Tomlin, M. Varadarajan, ``Towards an Anomaly-Free Quantum Dynamics 
for a Weak Coupling Limit of Euclidean Gravity''
Phys. Rev. D87 (2013) 044039, [arXiv:1210.6869];
``Towards an Anomaly-Free Quantum Dynamics for 
a Weak Coupling Limit of Euclidean Gravity: Diffeomorphism Covariance''
Phys.Rev. D87 (2013) 044040 [arXiv:1210.6877]

\bibitem{12a}
A. Ashtekar, P. Singh,
``Loop Quantum Cosmology: A Status Report'',
Class. Quant. Grav. {\bf 28} (2011) 213001, [arXiv:1108.0893]\\
I. Agullo, P. Singh, ``Loop Quantum Cosmology'', [arXiv:1612.01236]\\
B. Elizaga Navascues, M. Martín-Benito, G. A. Mena Marugan,
``Hybrid models in loop quantum cosmology'', Int. J. Mod. Phys. 
{\bf D25} (2016), [arXiv:1608.05947]\\
L. Castello Gomar, G. A. Mena Marugan, D. Martin De Blas, J. Olmedo,
``Hybrid loop quantum cosmology and predictions for the cosmic microwave 
background'', Phys. Rev. {\bf D96} (2017) 103528, [arXiv:1702.06036]

\bibitem{12b}
M. Han, H. Liu, ``Effective Dynamics from Coherent State Path Integral of Full 
Loop Quantum Gravity'', 
Phys. Rev. {\bf D101} (2020) 046003, [arXiv:1910.03763];
``Improved ($\bar{\mu}$
-Scheme) Effective Dynamics of Full Loop Quantum Gravity'', 
[arXiv:1912.08668]\\
A. Dapor, K. Liegener,
``Cosmological Effective Hamiltonian from full Loop Quantum Gravity Dynamics''
Phys. Lett. {\bf B785} (2018) 506-510, [arXiv:1706.09833];  	
``Cosmological coherent state expectation values in loop quantum gravity 
I. Isotropic kinematics'', Class. Quant. Grav. {\bf 35} (2018) 135011,
[arXiv:1710.04015];
J. Engle, I. Vilensky 
``Uniqueness of minimal loop quantum cosmology dynamics'',
Phys. Rev. {\bf D100} (2019) 121901, [arXiv:1902.01386];
``Deriving loop quantum cosmology dynamics from diffeomorphism invariance''
Phys. Rev. {\bf D98} (2018) 023505, [arXiv:1802.01543]\\
C. Fleischhack, ``Continuity of States on Non-Unital Differential Algebras 
in Loop Quantum Cosmology''
Commun. Math. Phys. 370 (2019) 531-538
[arXiv:1803.08944]


\bibitem{13}
T. Lang, K. Liegener, T. Thiemann, ``Hamiltonian Renormalisation I.
Derivation from Osterwalder-Schrader Reconstruction'',
Class. Quant. Grav. {\bf 35} (2018) 245011,
[arXiv:1711.05685];
``Hamiltonian Renormalisation II.
Renormalisation Flow of 1+1 dimensional free, scalar fields: Derivation'',
Class. Quant. Grav. {\bf 35} (2018) 245012,
[arXiv:1711.06727];
``Hamiltonian Renormalisation III.
Renormalisation Flow of 1+1 dimensional free, scalar fields: Properties'',
Class. Quant. Grav. {\bf 35} (2018) 245013,
[arXiv:1711.05688];
``Hamiltonian Renormalisation IV. Renormalisation Flow of D+1 dimensional 
free scalar fields and Rotation Invariance'',
Class. Quant. Grav. {\bf 35} (2018) 245014, [arXiv:1711.05695]

\bibitem{14} 
S. Weinberg, 
``Ultraviolet divergences in quantum theories of gravitation'', 
in ``General  Relativity: An Einstein centenary survey'', ed.  S. W. Hawking 
and W. Israel, 790-831, Cambridge University Press, 1979

\bibitem{15} 
C. Wetterich, ``Average action and renormalization group equations'',
Nucl. Phys. {\bf B352} (1991) 529-584\\
C. Wetterich, ``Exact evolution equation for the effective potential'',
Phys. Lett. {\bf B301} (1993) 90–94, [arXiv:1710.05815]\\
M. Reuter, C. Wetterich, ``Effective average action for gauge theories and 
exact evolution equations'', Nuclear Physics {\bf B 417} (1994) 181–214\\
M. Reuter, ``Nonperturbative evolution equation for quantum gravity'',
Phys. Rev. {\bf D 57} (1998): 971–985, [arXiv:hep-th/9605030]\\
O. Lauscher, M. Reuter, ``Ultraviolet fixed point and generalized flow 
equation of quantum gravity'', Physical Review {\bf D 65} (2001) 025013,
[arXiv:hep-th/0108040]\\
M. Reuter, F. Saueressig, ``Renormalization group flow of quantum gravity 
in the Einstein-Hilbert truncation'', Phys. Rev. {\bf D 65} (2002) 065016, 
[arXiv:hep-th/0110054]\\
M. Niedermaier, M. Reuter, ``The Asymptotic Safety Scenario in Quantum 
Gravity'', Living Rev. Relativ. {\bf 9} (2006)\\
D. Litim, ``Renormalisation group and the Planck scale'',
Philosophical Transactions of the Royal Society {\bf A 69} (2006) 2759–2778,
[arXiv:1102.4624]\\
R. Percacci, Roberto, ``Asymptotic Safety'', in ``
Approaches to Quantum Gravity: Towards a New Understanding of Space, 
Time and Matter'', 
D. Oriti (ed.), Cambridge University Press, 2009, [arXiv:0709.3851]\\
A. Eichhorn, ``An asymptotically safe guide to quantum gravity and matter'',
Front. Astron. Space Sci. {\bf 5} (2019) 47, [arXiv:1810.07615]

\bibitem{15a} E. Manrique, S. Rechenberger, F. Saueressig 
``Asymptotically Safe Lorentzian Gravity''
Phys. Rev. Lett. {\bf 106} (2011) 251302 [arXiv:1102.5012]\\
A. Bonanno, G. Gionti, A. Platania, 
``Bouncing and emergent cosmologies from Arnowitt–Deser–Misner RG flows''
Class. Quant. Grav. {\bf 35} (2018) 065004, [arXiv:1710.06317]

\bibitem{16} 
B. Bahr, B. Dittrich, ``Improved and Perfect Actions in Discrete Gravity'',
Phys. Rev. {\bf D80} (2009) 124030,
[arXiv:0907.4323 [gr-qc]]\\
B. Bahr, B. Dittrich, S. Steinhaus, 
``Perfect discretization of reparametrization invariant path integrals'',
Phys. Rev. {\bf D83} (2011) 105026, [arXiv:1101.4775]\\
B. Bahr , B. Dittrich, S. He,
``Coarse graining free theories with gauge symmetries: the linearized case'',
New J. Phys. {\bf 13} (2011) 045009, [e-Print: arXiv:1011.3667]\\
B. Bahr, 
``On background-independent renormalization of spin foam models'',
Class.Quant.Grav. 34 (2017) 075001, [arXiv:1407.7746]\\
B. Bahr, B. Dittrich, F. Hellmann, W. Kaminski,
``Holonomy Spin Foam Models: Definition and Coarse Graining''
Phys. Rev. {\bf D87} (2013) 044048
[arXiv:1208.3388]


\bibitem{16a}
E. R. Livine, D. Oriti, `` 
Coupling of spacetime atoms and spin foam renormalisation from group field 
theory''
JHEP 0702 (2007) 092, [gr-qc/0512002]\\
L. Freidel, R. Gurau, D. Oriti, 
``Group field theory renormalization - the 3d case: Power counting of 
divergences'', Phys.Rev. D80 (2009) 044007, [arXiv:0905.3772]\\
S. Carrozza, D. Oriti, V. Rivasseau, 
``Renormalization of Tensorial Group Field Theories: Abelian U(1) Models in 
Four Dimensions'',
Commun. Math. Phys. {\bf 327} (2014) 603-641,
[arXiv:1207.6734];
``Renormalization of a SU(2) Tensorial Group Field Theory in Three Dimensions''
Commun. Math. Phys. 330 (2014) 581-637, [arXiv:1303.6772]\\
J. B. Geloun, T. A. Koslowski, D. Oriti, A. D. Pereira 
``Functional Renormalization Group analysis of rank 3 tensorial group field 
theory: The full quartic invariant truncation''
Phys. Rev. D97 (2018) 126018, [arXiv:1805.01619]\\
R. Gurau, ``Colored Group Field Theory'',
Commun. Math. Phys. 304 (2011) 69-93, [arXiv:0907.2582] 

\bibitem{16b}
R. Gurau,
``The complete 1/N expansion of colored tensor models in arbitrary dimension''
Annales Henri Poincare 13 (2012) 399-423, [arXiv:1102.5759]\\
R. Gurau, J. P. Ryan, ``Colored Tensor Models - a review''
SIGMA 8 (2012) 020, [arXiv:1109.4812]\\
V. Bonzom, R. Gurau, A. Riello, V. Rivasseau 
``Critical behavior of colored tensor models in the large N limit''
Nucl. Phys. {\bf B853} (2011) 174-195, [arXiv:1105.3122]

\bibitem{17} M. Reisenberger, C. Rovelli, ``Sum over Surfaces Form
of Loop Quantum Gravity", Phys. Rev. {\bf D56} (1997) 3490-3508\\
L. Freidel, K. Krasnov, ``Spin Foam Models and the Classical
Action Principle", Adv.Theor.Math.Phys.2:1183-1247,1999,
[hep-th/9807092]\\
J. W. Barrett, L. Crane, ``Relativistic Spin Networks and Quantum Gravity",
J.Math.Phys.39:3296-3302,1998,
[gr-qc/9709028]\\
J. W. Barrett, L. Crane, ``A Lorentzian Signature Model for Quantum General
Relativity",
Class.Quant.Grav.17:3101-3118,2000, [gr-qc/9904025]\\
L. Freidel, K. Krasnov, 
``A New Spin Foam Model for 4d Gravity'',
Class. Quant. Grav. {\bf 25} (2008) 125018,
[arXiv:0708.1595]\\
J. Engle, E. Livine, R. Pereira, C. Rovelli,
``LQG vertex with finite Immirzi parameter'',
Nucl. Phys. {\bf B799} (2008) 136-149, [arXiv:0711.0146]\\
J. Engle, ``A spin-foam vertex amplitude with the correct semiclassical 
limit'', Phys. Lett. {\bf B724} (2013) 333-337, [arXiv:1201.2187]\\
W. Kaminski, M. Kisielowski, J. Lewandowski, 
``Spin-Foams for All Loop Quantum Gravity''
Class. Quant. Grav. {\bf 27} (2010) 095006, Erratum: 
Class. Quant. Grav. {\bf 29} (2012) 049502
[arXiv:0909.0939]

\bibitem{18} 
I. M. Gel'fand and N. Ya. Vilenkin, ``Generalised
Functions'', vol. 4: Applications of Harmonic Analysis, Academic Press,
            New York, London, 1964

\bibitem{18a} 
M. Bojowald, ``The BKL scenario, infrared renormalization, and quantum 
cosmology'', JCAP {\bf 1901} (2019) 026, [arXiv:1810.00238]\\           
N. Bodendorfer, F. Haneder,     
``Coarse graining as a representation change''
Phys. Lett. {\bf B792} (2019) 69-73, [arXiv:1811.02792]\\
N. Bodendorfer, D. Wuhrer,
``Renormalisation with SU(1, 1) coherent states on the LQC Hilbert space'',
[arXiv:1904.13269]

\bibitem{19} 
C. Itzykson and J.-M.Drouffe, Statistical Field
Theory'', Cambridge University Press, Cambridge, 1989


\bibitem{20} 
J. Froehlich, ``An introduction to some topics in Constructive QFT'',
Springer-Verlag, New York, 1978\\
J. Glimm and A. Jaffe, ``Quantum Physics'',
Springer Verlag, New York, 1987\\
V. Rivasseau,
``Constructive Field Theory and Applications: Perspectives and Open Problems'',
J. Math. Phys. {\bf 41} (2000) 3764-3775

\bibitem{21} 
``Dust as a standard of space and time in canonical quantum gravity''
J. D. Brown, K. V. Kuchar, Phys. Rev. {\bf D51} (1995) 5600-5629,
[gr-qc/9409001]\\
K. V. Kuchar, C. G. Torre, 
``Gaussian reference fluid and interpretation of quantum geometrodynamics''
Phys. Rev. {\bf D43} (1991) 419-441

\bibitem{22} A. Ashtekar, D. Marolf, J. Mour\~ao, T. Thiemann, 
``Constructing Hamiltonian
Quantum Theories from Path Integrals in a Diffeomorphism Invariant Context'',
Class. Quant. Grav. {\bf 17} (2000) 4919-4940, [quant-ph/9904094]

\bibitem{23} K. Osterwalder, R. Schrader, Comm. Math. Phys. {\bf 31}
(1973) 83, Comm. Math. Phys. {\bf 42} (1975) 281

\bibitem{24} L. P. Kadanoff, ``Scaling Laws for Ising Models near Tc'',
Physics {\bf 2} (1966) 263-272

\bibitem{25} 
K. G. Wilson, ``The renormalization group: Critical phenomena and the Kondo 
problem'', Rev. Mod. Phys. {\bf 47} (1975) 773\\
M. E. Fisher, ``The renormalization group in the theory of critical 
behavior'', Rev. Mod. Phys. {\bf 46} (1974) 597; 
Erratum Rev. Mod. Phys. {\bf 47} (1975) 543

\bibitem{26} P. Hasenfratz, ``Prospects for perfect actions'',
Nucl. Phys. Proc. Suppl. {\bf 63} (1998)

\bibitem{27} 
F. J. Wegner, Phys. Rev. {\bf B5} (1972) \\
F. J. Wegner, ``Flow equations for Hamiltonians'', Nucl. Phys. 
{\bf B90} (2000) 141-146

\bibitem{28} S. D. Glazek, K. G. Wilson,
``Renormalization of Hamiltonians''
Phys. Rev. {\bf D48} (1993) 5863

\bibitem{29}
M. Weinstein,
``Hamiltonians, path integrals, and a new renormalization group''
Phys. Rev. {\bf D 47} (1993) 5499\\ 
Joan Elias-Miro, Slava Rychkov, Lorenzo G. Vitale, 
``NLO Renormalization in the Hamiltonian Truncation'',
Phys. Rev. {\bf D96} (2017) 065024

\bibitem{29a} 
J. Kijowski, W. Szczyrba, `` 
A Canonical Structure for Classical Field Theories'',
Commun. Math. Phys. {\bf 46} (1976) 183-206\\
J. Kijowski, ``Symplectic geometry and second quantization'',
Reports on Mathematical Physics {\bf 11} (1977) 97

\bibitem{29b} 
A. Okolow, ``Construction of spaces of kinematic quantum states for 
field theories via projective techniques'',
Class. Quant. Grav. {\bf 30} (2013) 195003,
[arXiv:1304.6330]\\
J. Kijowski and A. Okolow, ``A modification of the projective
construction of quantum states for field theories'',
Journal of Mathematical Physics {\bf 58} (2017) 062303 

\bibitem{29c} S. Lanery, T. Thiemann,
``Projective Limits of State Spaces I. Classical Formalism''
J. Geom. Phys. {\bf 111} (2017) 6-39, [arXiv:1411.3589];
``Projective Limits of State Spaces II. Quantum Formalism''
J. Geom. Phys. {\bf 116} (2017) 10-51, [arXiv:1411.3590]

\bibitem{29d} 
A. Stottmeister, T. Thiemann, ``Coherent states, quantum gravity and the 
Born-Oppenheimer approximation, III: Applications to loop quantum gravity''
J. Math. Phys. {\bf 57} (2016) 083509, [arXiv:1504.02171]\\
A. Brothier, A. Stottmeister, ``Operator-algebraic construction of gauge 
theories and Jones' actions of Thompson's groups'',
[arXiv:1901.04940];
 ``Canonical quantization of 1+1-dimensional Yang-Mills theory: 
An operator-algebraic approach'', [arXiv:1907.05549]\\
A. Stottmeister, V. Morinelli, G. Morsella, Y. Tanimoto,
``Operator-algebraic renormalization and wavelets'', [arXiv:2002.01442]

\bibitem{30} S. R. White, `` Density-matrix algorithms for quantum 
renormalisation groups'', Phys. Rev. {\bf B48} (1993) 345

\bibitem{31} M. Levin, C. Nave, ``Tensor renormalization group approach to 
2D classical lattice models'', Phys. Rev. Lett. {\bf 99} (2207) 120601,
[arXiv:cond-mat/0611687]

\bibitem{32} R. Orus, ``A Practical Introduction to Tensor Networks: Matrix 
Product States and Projected Entangled Pair States'', 
Annals of Physics {\bf 349} (2014) 117-158, [arXiv:1306.2164]

\bibitem{33} G. Vidal, ``Entanglement Renormalization'', Phys. Rev. Lett.
{\bf 99} (2007) 220405

\bibitem{34} G. Evenbly, G. Vidal, ``Algorithms for entanglement 
renormalization'', Phys. Rev. {\bf B79} (2009) 144108 
G. Evenbly, G. Vidal,
``Entanglement renormalization in free bosonic systems: real-space versus 
momentum-space renormalization group transforms''
New J.Phys. 12 (2010) 025007, [arXiv:0801.2449]


\bibitem{35} P. Kaye, R. Laflamme, M. Mosca, ``An Introduction to Quantum 
Computing'', Oxford University Press, Oxford, 2007

\bibitem{36} B. Dittrich, ``From the discrete to the continuous - towards a 
cylindrically consistent dynamics'', New J. Phys. {\bf 14} (2012) 123004,
[arXiv:1205.6127]\\
B. Dittrich, S. Steinhaus, ``Time evolution as refining, coarse graining and 
entangling'', New J. Phys. {\bf 16} (2014) 123041, 
[arXiv:1311.7565]\\
B. Dittrich, ``The continuum limit of loop quantum gravity - a framework for 
solving the theory'', [arXiv:1409.1450]

\bibitem{37} Y. Yamasaki. ``Measures on Infinite Dimensional Spaces'',
World Scientific, Singapore, 1985

\bibitem{38} T. Balaban, J. Imbrie and A. Jaffe, ``Exact
renormalisation
group for gauge theories'', in  ``Proceedings of the 1983 Carg\`ese Summer
School''\\
T. Balaban and A. Jaffe. Constructive gauge theory. In ``Proceedings of
the 1986 Erichi Summer School''.

\bibitem{39} B. Bahr, ``Operator Spin Foams: holonomy formulation and 
coarse graining'', J. Phys. Conf. Ser. {\bf 360} (2012) 012042,
[arXiv:1112.3567]\\
B. Bahr, F. Hellmann, W. Kaminski, M. Kisielowski, J. Lewandowski,
``Operator Spin Foam Models'', Class. Quant. Grav. {\bf 28} (2011) 105003,
[arXiv:1010.4787]

\bibitem{40} 
Jose A. Zapata, ``Loop quantization from a lattice gauge theory perspective''
Class. Quant. Grav. {\bf 21} (2004) L115-L122,
[gr-qc/0401109]\\
A. Corichi, J. A. Zapata, 
``On diffeomorphism invariance for lattice theories'',
Nucl. Phys. {\bf B493} (1997) 475-490, [gr-qc/9611034]\\
J. Kijowski, A Okolow,
``A modification of the projective construction of quantum states 
for field theories'', J. Math Phys. {\bf 58} (2016)

\bibitem{41} 
B. Dittrich, M. Martin-Benito, S. Steinhaus, `` 
Quantum group spin nets: refinement limit and relation to spin foams''
Phys. Rev. {\bf D90} (2014) 024058, [arXiv:1312.0905]\\
B. Dittrich, E. Schnetter, J. S. Cameron, S. Steinhaus, 
``Coarse graining flow of spin foam intertwiners''
Phys. Rev. {\bf D94} (2016) 124050, [arXiv:1609.02429]
 	
\bibitem{41a}
A. Eichhorn, T. Koslowski, A. D. Pereira, `` 
Status of background-independent coarse-graining in tensor models for 
quantum gravity'', Universe {\bf 5} (2019) 53, [arXiv:1811.12909]

\bibitem{42} A. Komar, ``General Relativistic Observables via Hamilton Jacobi
Functionals'', Phys. Rev. {\bf D4} (1971) 923-927

\bibitem{49} A. D. Popolo, M. Le Delliou, ``Small scale problems of the 
$\Lambda$CDM model: a short review''
Published in Galaxies 5 (2017) 17, [arXiv:1606.07790]

\bibitem{43} C. Rovelli, ``Time in Quantum Gravity: Physics Beyond the 
Schrodinger Regime'', Phys. Rev. {\bf D43} (1991) 442-456\\
A. S. Vytheeswaran,
``Gauge unfixing in second class constrained systems'', 
Annals Phys. {\bf 236} (1994) 297-324\\
P. Mitra, R. Rajaraman, ``Gauge-invariant reformulation of an anomalous
gauge theory'', Physics Letters {\bf B225} (1989) 267--271\\
R. Anishetty, A. S. Vytheeswaran, ``Gauge invariance in second-class
constrained systems'', Journal of Physics A: Mathematical and General
{\bf 26} (1993) 5613--5619\\
B. Dittrich, ``Partial and complete observables for canonical general 
relativity'', Class. Quant. Grav. {\bf 23} (2006) 6155-6184, 
[gr-qc/0507106]\\
T. Thiemann, ``Reduced phase space quantization and Dirac observables''
Class. Quant. Grav. {\bf 23} (2006) 1163-1180
[gr-qc/0411031]

\bibitem{44} M. Henneaux, C. Teitelboim. ``Quantisation of Gauge
Systems'', Princeton University Press, Princeton, 1992

\bibitem{45}
P. A. M. Dirac, ``Lectures on Quantum Mechanics'',
Belfer Graduate School of Science, Yeshiva University Press, New York, 1964

\bibitem{46}  R. M. Wald, ``General Relativity'', The University of
Chicago Press, Chicago, 1989

\bibitem{47}  	
K. Giesel, S. Hofmann, T. Thiemann, O. Winkler,
``Manifestly Gauge-Invariant General Relativistic Perturbation Theory. I. 
Foundations''Class. Quant. Grav. {\bf 27} (2010) 055005;
``Manifestly Gauge-invariant general relativistic perturbation theory. II. 
FRW background and first order'', Class. Quant. Grav. {\bf 27} (2010) 055006
[arXiv:0711.0117]

\bibitem{48} 
M. Domagala, K. Giesel, W. Kaminski, J. Lewandowski,
``Gravity quantized: Loop Quantum Gravity with a Scalar Field'',
Phys. Rev. {\bf D82} (2010) 104038, [arXiv:1009.2445]

\bibitem{50}  	
J. B. Kogut, L. Susskind, 
``Hamiltonian Formulation of Wilson's Lattice Gauge Theories'',
Phys.Rev. {\bf D11} (1975) 395-408


\bibitem{51} 
O. Bratteli, D. W. Robinson, ``Operator Algebras and Quantum
Statistical
Mechanics'', vol. 1,2, Springer Verlag, Berlin, 1997

\bibitem{52} B. C. Hall, ``Lie Groups, Lie Algebras, and Representations'',
Springer Verlag, Berlin, 2015

\bibitem{53} 
C. Rovelli and L. Smolin,
``Discreteness of volume and area in quantum gravity'',
Nucl. Phys. {\bf B442} (1995), 593-622; Erratum: Nucl. Phys.
{\bf B456} (1995) 753, [gr-qc/9411005]\\
A. Ashtekar and J. Lewandowski, 
``Quantum theory of geometry I: Area Operators''
Class. Quant. Grav. {\bf 14} (1997) A55-A82, [gr-qc/9602046];
``Quantum theory of geometry II:
Volume operators'', Adv. Theo. Math. Phys. {\bf 1} (1997) 388-429,
[gr-qc/9711031]


\bibitem{54} 
T. Thiemann, ``Complexifier coherent states for canonical
quantum general relativity'',
Class. Quant. Grav. {\bf 23} (2006) 2063-2118,
[gr-qc/0206037]\\
T. Thiemann, ``Gauge field theory coherent states (GCS): I. General
properties'', Class. Quant. Grav. {\bf 18} (2001) 2025-2064,
[hep-th/0005233]\\
T. Thiemann, O. Winkler, ``Gauge field theory coherent
states
(GCS): II. Peakedness properties'', Class. Quant.
Grav. {\bf 18} (2001) 2561-2636, [hep-th/0005237];
``Gauge field theory coherent states
(GCS): III. Ehrenfest theorems'',
Class. Quant. Grav. {\bf 18} (2001) 4629-4681, [hep-th/0005234]

\bibitem{55} 
K. Giesel, T. Thiemann, ``Algebraic quantum gravity
(AQG) I. Conceptual setup'', Class. Quant. Grav. {\bf 24} (2007) 
2465-2498 [gr-qc/0607099];
Algebraic quantum gravity
(AQG) II. Semiclassical analysis'',
Class. Quant. Grav. {\bf 24} (2007) 2499-2564, [gr-qc/0607100];
``Algebraic quantum gravity
(AQG) III. Semiclassical perturbation theory'',
Class. Quant. Grav. {\bf 24} (2007) 2565-2588, [gr-qc/0607101];
``Algebraic quantum gravity (AQG). IV. Reduced phase space quantisation of 
loop quantum gravity'', Class. Quant. Grav. {\bf 27} (2010) 175009,
[arXiv:0711.0119]

\bibitem{56} 
M. Gaul, Carlo Rovelli, ``A Generalized Hamiltonian constraint operator in 
loop quantum gravity and 
its simplest Euclidean matrix elements'', Class. Quant. Grav. {\bf 18} (2001) 
1593-1624, [gr-qc/0011106]

\bibitem{57} T. Thiemann, O. Winkler, ``Gauge field theory coherent
states (GCS): IV. Infinite tensor product and thermodynamic limit'',
Class. Quant. Grav. {\bf 18} (2001) 4997-5033, [hep-th/0005235]

\bibitem{58} W. Rudin, ``Real and Complex Analysis'', McGraw-Hill,
New York, 1987

\bibitem{59} 
R. Geroch, ``The domain of dependence'', Journ. Math.
Phys. {\bf 11} (1970), 437 - 509\\
A. N. Bernal, M. Sanchez, 
``On Smooth Cauchy hypersurfaces and Geroch's splitting theorem''
Commun. Math. Phys. {\bf 243} (2003) 461-470, [gr-qc/0306108]

\bibitem{60} B. Simon, ``The P($\phi$)2 Euclidean (Quantum) Field Theory'',
Princeton Unviersity Press, 1974
J. Glimm, ``Boson Fields with the $:\phi^4:$ Interaction in Three 
Dimensions", Comm. Math. Phys. {\bf 10} (1968)  1-47.
J. Glimm, A. Jaffe, Positivity of the $\phi^4_3$ Hamiltonian'', 
Fortschr. Phys. {\bf 21} (1973) 327–376

\bibitem{60a} R. Haag, ``Local Quantum Physics'', Springer Verlag, Berlin,
1984

\bibitem{61} R. V. Kadison, J. R. Ringrose, ``Fundamentals of the
Theory of Operator Algebras'', vol. 1,2, Academic Press, London, 1983

\bibitem{62} 
B. Simon, ``Positivity of the Hamiltonian Semigroup and the Construction of 
Euclidean Region Fields'', Helvetica Physics Acta {\bf 46}
(1973)\\
B. Simon, R. Hegh-Krohn, ``Hypercontractive semigroups and two dimensional 
self-coupled Bose fields'', Journal of Functional Analysis
{\bf 9}
(1972) 121-180.

\bibitem{63} G. Roeppstorff, ``Path Integral Approach to Quantum Physics'',
Springer-Verlag, Berlin, 1996  

\bibitem{64} Paul Halmos, ``Naive set theory'',
Springer-Verlag, New York, 1974

\bibitem{65} 
W. H. Zurek, ``Decoherence, einselection, and the quantum origins of the 
classical'', Reviews of Modern Physics {\bf 75} (2003) 715, 
[arXiv:quant-ph/0105127]

\bibitem{66} M. Reed, B. Simon, ``Methods of modern mathematical physics'',
vol. I-IV, Academic Press, 1980

\bibitem{67} 
V. A. Zagrebnov, ``The Trotter–Lie product formula for Gibbs semigroups''
Journal of Mathematical Physics {\bf 29} (1988) 888\\
H. Neidhardt, V. A. Zagrebnov, ``The Trotter-Kato product formula for Gibbs 
semigroups'', Comm. Math. Phys. {\bf 131} (1990) 333--346\\
M. Matolcsi, R. Shvidkoy,  ``Trotter's product formula for projections'',  
Arch. Math. (Basel) {\bf 81} (2003) 309--317\\
W. Arendt, C. J. K. Batty, ``Absorption semigroups and Dirichlet boundary conditions'', Math. Ann. {\bf 295} (1993) 427-448

\bibitem{68} K. Liegener, T. Thiemann, ``Hamiltonian Renormalisation V.
Free Vector Bosons'', in preparation

\bibitem{69} H. Sahlmann, T. Thiemann, O. Winkler, ``
Coherent states for canonical quantum general relativity and the infinite
tensor product extension'', Nucl. Phys. {\bf B606}
(2001), 401-440, [gr-qc/0102038]

\bibitem{70}
P. Federbush, ``A New formulation and regularization of gauged theories 
using a nonlinear wavelet expansion''
Prog. Theor. Phys. {\bf 94} (1995) 1135-1146, [hep-ph/9505368]


\end{thebibliography}
\end{document}